\documentclass{article}

\usepackage[preprint]{icml2026}

\usepackage{microtype}
\usepackage{booktabs, makecell, multirow}
\usepackage{array, tabularx, graphicx}
\usepackage{float}
\usepackage{caption}
\usepackage{subcaption}
\usepackage{xstring}
\usepackage{expl3}
\usepackage[dvipsnames]{xcolor}
\usepackage{colortbl}
\newcommand{\graycline}[1]{
  \arrayrulecolor{gray} 
  \cline{#1} 
  \arrayrulecolor{black} 
}
\newcommand{\graymidrule}{
  \arrayrulecolor{gray} 
  \specialrule{0.2pt}{0.3pt}{0.3pt} 
  \arrayrulecolor{black} 
}
\ExplSyntaxOn
\NewDocumentCommand{\signedparenbox}{m m}
 {
  \fp_compare:nTF { #2 > 0 }
    { \textcolor{teal}{(\makebox[#1][r]{#2})} }
    {
      \fp_compare:nTF { #2 < 0 }
        { \textcolor{red}{(\makebox[#1][r]{#2})} }
        { (\makebox[#1][r]{#2}) } 
    }
 }
\NewDocumentCommand{\signedparen}{m}
 {
  \fp_compare:nTF { #1 > 0 }
    { \textcolor{teal}{(#1)} }
    {
      \fp_compare:nTF { #1 < 0 }
        { \textcolor{red}{(#1)} }
        { (#1) }
    }
 }

\NewDocumentCommand{\tworightcols}{m m}
 {
  \begin{tabular}[t]{@{}r@{\hspace{0.6em}}r@{}}
    #1 & #2
  \end{tabular}%
 }

\ExplSyntaxOff
\newcommand{\aaccsign}[2]{%
  \tworightcols{#1}{\signedparen{#2}}%
}
\newcommand{\ttimesign}[2]{%
  \tworightcols{#1}{\signedparen{#2}}%
}

\usepackage{wrapfig}
\usepackage{rotating}
\usepackage{pdflscape}
\usepackage{changepage}
\usepackage{enumitem}

\usepackage{amsmath,amssymb,amsthm,mathtools} 

\usepackage{amsfonts,bm}

\def\eqref#1{equation~\ref{#1}}

\def\1{\bm{1}}

\DeclareMathAlphabet{\mathsfit}{\encodingdefault}{\sfdefault}{m}{sl}
\SetMathAlphabet{\mathsfit}{bold}{\encodingdefault}{\sfdefault}{bx}{n}

\def\gE{{\mathcal{E}}}

\def\gG{{\mathcal{G}}}

\def\gP{{\mathcal{P}}}

\def\gV{{\mathcal{V}}}

\theoremstyle{plain}
\newtheorem{theorem}{Theorem}[section]

\newtheorem{lemma}[theorem]{Lemma}
\newtheorem{corollary}[theorem]{Corollary}
\theoremstyle{definition}
\newtheorem{definition}[theorem]{Definition}
\newtheorem{assumption}[theorem]{Assumption}
\theoremstyle{remark}

\usepackage[ruled,vlined,linesnumbered, algo2e]{algorithm2e}
\SetKw{KwTo}{to}
\SetKw{KwContinue}{continue}
\SetKwProg{Fn}{Function}{}{}

\usepackage{hyperref}
\usepackage[capitalize,noabbrev]{cleveref}

\DeclareRobustCommand{\TimelyFreeze}{\text{\text{TimelyFreeze}}}
\newcommand{\TimelyAPF}{\TimelyFreeze{}{\scriptsize+APF}}
\newcommand{\TimelyAuto}{\TimelyFreeze{}{\scriptsize+Auto}}

\icmltitlerunning{TimelyFreeze: Adaptive Parameter Freezing Mechanism for Pipeline Parallelism}

\begin{document}

\twocolumn[
  \icmltitle{\TimelyFreeze{}: \\Adaptive Parameter Freezing Mechanism for Pipeline Parallelism}



  \icmlsetsymbol{equal}{*}

  \begin{icmlauthorlist}
    \icmlauthor{Seonghye Cho}{DM Lab}
    \icmlauthor{Jaemin Han}{DM Lab}
    \icmlauthor{Hyunjin Kim}{DM Lab}
    \icmlauthor{Euisoo Jung}{DM Lab}
    \icmlauthor{Jae-Gil Lee}{DM Lab}
  \end{icmlauthorlist}

  \icmlaffiliation{DM Lab}{KAIST, Daejeon, South Korea}
  
  \icmlcorrespondingauthor{Jae-Gil Lee}{jaegil@kaist.ac.kr}

  \icmlkeywords{Distributed Learning, Pipeline Parallelism, Parameter Freezing}

  \vskip 0.3in
]



\printAffiliationsAndNotice{}  

\begin{abstract}


    Pipeline parallelism enables training models that exceed single-device memory, but practical throughput remains limited by pipeline bubbles.
    Although parameter freezing can improve training throughput by adaptively skipping backward computation, existing methods often \emph{over}-freeze parameters, resulting in unnecessary accuracy degradation. 
    To address this issue, we propose \textbf{\textit{TimelyFreeze}}, which models the pipeline schedule as a directed acyclic graph and solves a linear program to compute \emph{optimal} freeze ratios that minimize batch execution time under accuracy constraints. 
    Experiments show that TimelyFreeze achieves up to 40\% training throughput improvement on LLaMA-8B with comparable accuracy. 
    Overall, it enables faster large-scale model training without compromising convergence and generalizes across diverse pipeline-parallel settings.

\end{abstract}

\vspace*{-0.6cm}
\section{Introduction}
\label{sec:intro}
Driven by scaling laws~\citep{kaplan2020_scalinglaw}, deep learning models and datasets have grown beyond the capacity of a single GPU, making distributed learning indispensable for large-scale training.
Among parallelization strategies, including data parallelism (DP)~\citep{you2018imagenet_dp, sergeev2018horovod_dp}, tensor parallelism (TP)~\citep{shoeybi2019megatron}, and pipeline parallelism (PP)~\citep{huang2019gpipe}, \emph{pipeline parallelism} has emerged as a crucial technique for scaling model training. Unlike DP, which replicates the model across devices, PP splits the model into sequential stages, enabling memory-efficient scaling.
 Compared to TP, PP incurs lower communication overhead and is better suited to environments with limited inter-node bandwidth. 


However, PP suffers from \emph{training inefficiency} due to the inherently sequential nature of forward and backward computations. Each partitioned model stage must wait for preceding stages to complete their computations, creating idle periods of computational resources (e.g., GPUs), commonly known as \emph{pipeline bubbles}. To mitigate this inefficiency, various scheduling algorithms, such as GPipe~\citep{huang2019gpipe} or 1F1B~\citep{fan2021dapple1f1b, narayanan2019pipedream1f1b} have been proposed to improve hardware utilization. Nevertheless, pipeline bubbles persist, often leading to lower training throughput than data parallelism. 


To improve training throughput, \emph{parameter freezing}~\citep{adaptivefreezeapf2024, freezeout2017} has emerged as an effective approach, particularly in parameter-efficient fine-tuning and federated learning.
%
By adaptively skipping backward computations and gradient updates for a subset of parameters, it reduces computational cost and improves training throughput.  
%
Nonetheless, parameter freezing may degrade accuracy due to skipped gradient updates in exchange for higher throughput. Moreover, prior methods~\citep{adaptivefreezeapf2024, autofreeze2021} do not account for pipeline-parallel training dynamics and thus tend to apply excessive freezing, which unnecessarily degrades the model accuracy. 
%
For example, the gray region labeled ``Ineffective Freezing'' in \autoref{fig:overview}(b) illustrates a case where GPU~1--3 freeze parameters during the backward pass even though subsequent execution cannot yet begin due to schedule dependencies. As a result, such excessive freezing under PP neither improves throughput nor preserves training stability. 
Therefore, achieving an optimal balance between accuracy and throughput while accounting for pipeline-parallel training dynamics is crucial.

\begin{figure}[t]
    \centering
    \includegraphics[width=\linewidth]{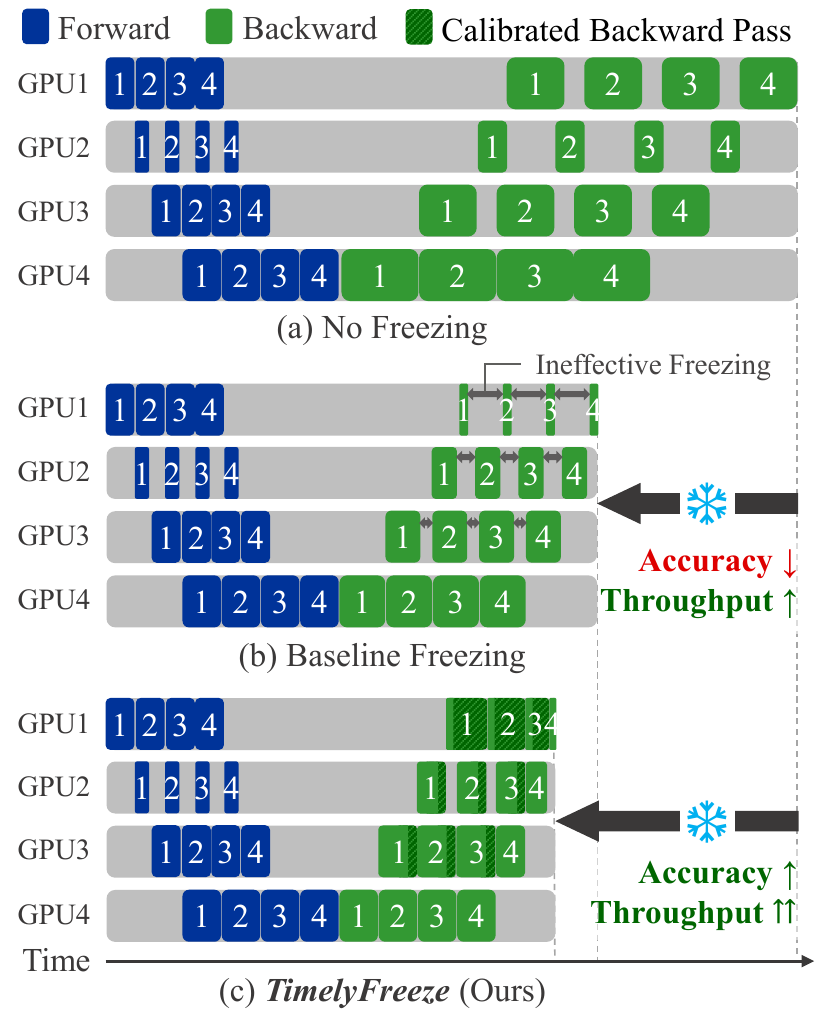}
    \caption{Effect of parameter freezing on pipeline parallelism. Each block denotes a microbatch execution in the forward (blue) or backward (green) pass. TimelyFreeze in (c) maintains the throughput gain of the baseline freezing method in (b) while preserving accuracy by effectively freezing according to the pipeline schedule.}
    \label{fig:overview}
\end{figure}



In this paper, we propose \textbf{\textit{TimelyFreeze}}, a pipeline-aware parameter freezing method that balances accuracy and throughput in pipeline-parallel training. 
As indicated by the ``Calibrated Backward Pass'' region in \autoref{fig:overview}(c), it controls freezing by considering idle time in later execution block, or even increases freezing in the computational bottlenecks to balance computation across GPUs. 
By explicitly modeling pipeline schedules as directed acyclic graphs and optimizing freeze ratios accordingly, TimelyFreeze preserves accuracy more effectively while achieving better throughput.

We validate our method on the LLaMA series (1B/8B/13B) across pipeline schedules.
TimelyFreeze lies on or near the Pareto frontier along two key axes---accuracy and throughput---achieving up to 46\% throughput improvement with comparable or higher accuracy than baselines.
Moreover, TimelyFreeze generalizes to vision architectures (ViT-L/32 and ConvNeXtV2-L), reducing training time by up to 25\% while keeping accuracy drops within 1.5 percentage points relative to the no-freezing baseline. 
%

Collectively, these results demonstrate that TimelyFreeze robustly accelerates training across diverse pipeline-parallel settings, spanning different model architectures, scales, model partitioning heuristics, and hardware configurations.
This work entails substantial, in-depth \emph{system-oriented} research that directly impacts model training performance.


%
%
Our key contributions are summarized as follows:
\begin{itemize}[leftmargin=10pt, nosep]
\item \textbf{Pipeline-aware freezing}: We reduce unnecessary freezing by leveraging runtime GPU idle time that arises from execution dependencies in pipeline schedules.
\item \textbf{DAG-based formulation}: We represent pipeline schedules as directed acyclic graphs and derive expected freeze ratios by solving a linear program.
\item \textbf{Time-to-accuracy analysis}: We provide a theoretical analysis of time-to-accuracy, characterizing the trade-off between throughput and convergence. 
\end{itemize}

\section{Preliminary and Related Work}
\label{sec:related_work}


\subsection{Pipeline Parallelism}
\label{subsec:pipeline_parallelism}

Pipeline parallelism partitions a model across multiple devices, enabling training of large-scale models that exceed the memory capacity of a single GPU. By splitting computation into sequential stages, it alleviates memory constraints and accelerates training, and various scheduling strategies have been proposed to maximize throughput.

\noindent
\textbf{GPipe.} \emph{GPipe}~\citep{huang2019gpipe} divides a batch into smaller microbatches, enabling different pipeline stages to be executed concurrently across multiple GPUs. 
However, GPipe still suffers from notable underutilization due to pipeline bubbles 
and incurs an activation memory overhead that grows linearly with the number of microbatches.

\noindent
\textbf{1F1B.}
\emph{1F1B} (1-Forward 1-Backward)~\citep{narayanan2019pipedream1f1b,fan2021dapple1f1b} interleaves forward and backward computations. By performing a backward pass immediately after the forward pass, 1F1B limits the accumulation of activations and reduces memory overhead. 
Nevertheless, 1F1B schedule still suffers from low GPU utilization similarly to GPipe. Subsequently, to further improve utilization, \emph{Interleaved 1F1B}~\citep{narayanan2021interleaved1f1b} divides each stage into multiple micro-stages.

\noindent
\textbf{Zero-Bubble.}
Recent work such as \emph{Zero-Bubble}~\citep{qi2023zerobubble, qi2024pipeline_controllable_memory} decomposes backward computations into finer-grained units to maximize GPU utilization, approaching near 100\% utilization. However, this approach introduces additional communication and synchronization overhead, which can limit practical scalability, especially in heterogeneous or bandwidth-constrained environments.


\subsection{Parameter Freezing}
\label{subsec:parameter_freezing}

\emph{Parameter freezing} is a training acceleration technique that selectively skips backward computations for a subset of model parameters to reduce training time.
Prior work has focused on designing stability metrics to determine which parameters to freeze and when, aiming to minimize accuracy loss while maximizing computational efficiency.

\textbf{Monotonic Freezing from Front Layers.}
%
Early approaches~\citep{freezeout2017,lee2019would_elsa_freezing} rely on the empirical observation that \emph{front layers tend to converge earlier than deeper layers}, and thus employ monotonic prefix freezing that freezes layers sequentially from the front to the back. \citet{freezeout2017} progressively freeze shallow layers using predefined schedules,
reducing computational cost in CNN models. 
\citet{autofreeze2021} and \citet{egeria2023_knowledge_guided_freezing} calculate layer-level stability metrics to determine the depth of the frozen prefix. 
Both methods also cache intermediate activations for the frozen layers, skipping not only backward passes but also redundant forward passes. 

\textbf{Non-Monotonic Freezing.}
More recent work~\citep{li2024smartfrz_freezing} questions the depth-ordered convergence assumption and proposes \emph{non-monotonic} strategies. \citet{li2024smartfrz_freezing} observe that, due to residual connections, later layers may stabilize earlier than some intermediate layers. 
\citet{adaptivefreezeapf2024} further increase granularity by performing parameter-wise freezing in federated learning, using gradient-based stability to reduce communication between clients and the central server. 
These methods relax the strict prefix constraint and thus offer greater flexibility.


However, prior freezing methods generally do \emph{not} account for the characteristics of pipeline parallelism. 
As illustrated in \autoref{fig:overview}(b), when applied to pipeline-parallel settings, such pipeline-unaware designs can lead to unnecessary freezing that is misaligned with the pipeline execution timeline. 
PipeTransformer~\citep{he2021pipetransformer_freezing} partly addresses this issue by introducing dynamic load balancing. However, it permanently freezes parameters once selected, which introduces a non-negligible risk of accuracy degradation.

\subsection{Baseline Freezing Methods}
\label{sec:baseline}
We select AutoFreeze~\citep{autofreeze2021} and APF~\citep{adaptivefreezeapf2024} as our baselines, respectively representing the monotonic and non-monotonic freezing methods. Both methods are among the most widely cited in this area.


\textbf{AutoFreeze.}
AutoFreeze estimates layer stability via the \emph{gradient-norm change} for each layer and freezes layers in prefix order under the assumption that earlier layers converge faster. The gradient norm change for each layer at the $K$-th stability check is defined as 
\begin{align}
\text{Score}_{\text{Auto},K} = \frac{\left| \left\| \Delta_{K - 1} \right\| - \left\| \Delta_{K} \right\| \right|}{\left\| \Delta_{K - 1} \right\|},
\end{align}
where $\Delta_K$ denotes the cumulative parameter update since the $(K-1)$-th check.
A layer is frozen when (i) all preceding layers are already frozen and (ii) its gradient-norm change falls into the lower $P_{\text{Auto}}$-th percentile among all layers, controlled by a hyperparameter $P_{\text{Auto}}$. 


\textbf{APF.}
APF identifies parameters whose gradient updates oscillate without a clear trend as sufficiently stabilized and freezes them to reduce computation.
To quantify such oscillatory behavior, APF computes the \emph{effective perturbation score} for each parameter at the $K$-th stability check as
\begin{align}
\begin{alignedat}{2}
\text{Score}_{\text{APF},K}
&= \frac{|E_K|}{E_K^{\text{abs}}}, \;\;
&\left\{
\begin{aligned}
E_K
&= \alpha E_{K-1} + (1-\alpha)\Delta_K, \\
E_K^{\text{abs}}
&= \alpha E_{K-1}^{\text{abs}} + (1-\alpha)|\Delta_K|.
\end{aligned}
\right.
\end{alignedat}
\end{align}
Parameters whose 
score falls below the threshold $T_{\text{APF}}$
are considered sufficiently stable
and frozen.
\section{Proposed Framework: TimelyFreeze}
\label{sec:method}

\begin{figure*}[t]
    \centering
    \includegraphics[width=\textwidth]{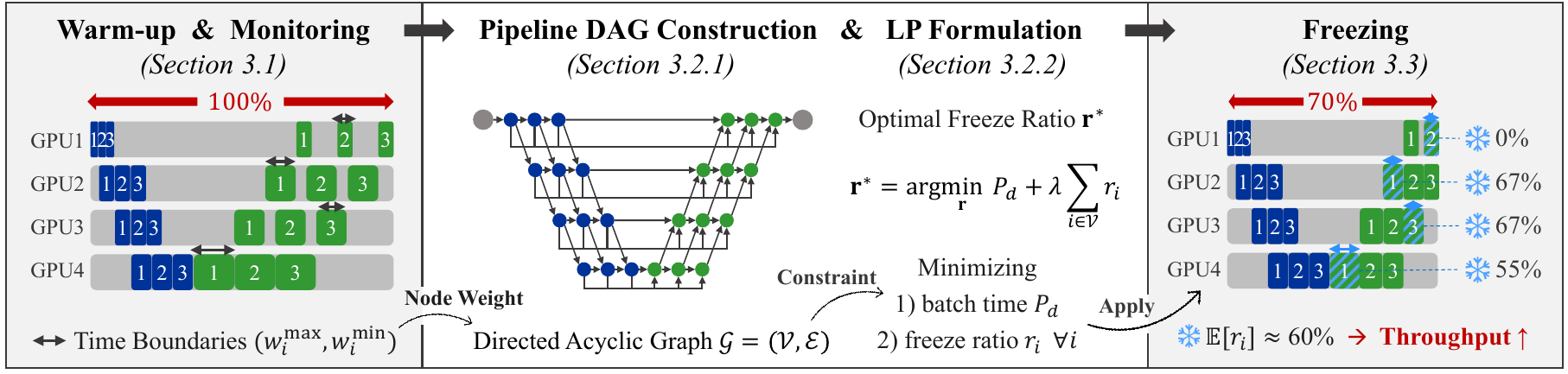}
    \caption{
    Method overview of TimelyFreeze. 
    The system first monitors the upper and lower bounds of execution times and represents the pipeline schedule as a directed acyclic graph (DAG), where nodes and edges respectively denote action blocks and execution dependencies.
    Based on the DAG constraints, it solves a linear program to compute the expected freeze ratios $\mathbf{r}^*$.
    Finally, it freezes each action block at its corresponding freeze ratio to improve training throughput while preserving accuracy.
    }
    \label{fig:method_overview}
\end{figure*}



TimelyFreeze targets time efficiency and model performance preservation in practical pipeline parallel settings. To handle dynamic real world environment, we determine \emph{freeze ratios}---how many parameters to freeze---in a fine-grained manner, per microbatch and per stage in the pipeline schedule. 
Through the overall training process, our method applies step-level parameter freezing through three phases:
\emph{(1) Warm-up and Monitoring} (\autoref{subsec:monitoring}),
\emph{(2) Freeze Ratio Formulation} (\autoref{subsec:freeze_ratio_formulation}), and
\emph{(3) Freezing} (\autoref{subsec:freezing_phase}).

{\bf (1)} During the monitoring phase, the system measures the GPU execution time for each forward and backward action, where each action corresponds to a unit of microbatch execution at a pipeline stage.
{\bf (2)} Based on these runtime measurements, TimelyFreeze constructs a graph representation of the pipeline schedule, which is then used to formulate a linear program that computes the expected freeze ratios $\mathbf{r}^*$. {\bf (3)} In the subsequent freezing phase, each backward action $v_i$ starts freezing its corresponding parameters at $r_i$. 

\textbf{Example.}
As an illustrative example, the right gray box in \autoref{fig:method_overview}
shows a case where the batch time is reduced to about 70\% of the original
(left gray box), by freezing backward actions according to the expected freeze
ratios computed in the previous phase (white box).
%
In this example, each backward action has its own expected freeze ratio, with an
average expected freeze ratio of 60\%.


\textbf{Notation.}
To be clear, let $t$ denote the training step and $\{T_w, T_m, T_f\}$ be the last step of the warm-up phase, the monitoring phase, and the progressive freezing phase, respectively. The notation is summarized in Appendix~\ref{appendix:notations}.

\subsection{Phase I: Warm-up and Monitoring}
\label{subsec:monitoring}

We employ an explicit \emph{warm-up} phase at the beginning of training, as
monitoring is only meaningful after CUDA kernel execution has stabilized.
Accordingly, both monitoring and parameter freezing are disabled for the first
$T_w$ steps.
We align $T_w$ with the learning-rate warm-up steps, since the parameter updates
during learning-rate ramp-up are highly unstable and even mild freezing can
suppress critical early updates, destabilizing training.
Freezing is therefore activated only after the learning-rate warm-up completes.



Once warm-up ends ($t > T_w$), the system starts recording the execution time of 
every forward and backward action on each GPU 
in a two-part monitoring phase.
During the first half, no parameters are frozen so that each action's maximum backward duration can be measured (i.e., \emph{upper-bound monitoring}). In the next half, all parameters are frozen to capture the minimum achievable latency (i.e., \emph{lower-bound monitoring}). 
After samples for both bounds are collected (at step $t = T_m$), the freeze ratios $\mathbf{r}$ are computed by solving the linear program described in \autoref{subsec:freeze_ratio_formulation}.

\subsection{Phase II: Freeze Ratio Formulation}
\label{subsec:freeze_ratio_formulation}
At the end of the monitoring phase ($t=T_m$), TimelyFreeze formulates a linear program (LP) based on the GPU execution time measurements of actions. The objective of this LP is to compute the \emph{expected freeze ratio} for each action. 

To ensure broad applicability across pipeline schedules and training settings, we represent the pipeline execution schedule of a batch as a directed acyclic graph (DAG), as described in \autoref{subsec:pipeline_dag_construction}.
Building on this representation, we formulate a linear program that selects freeze ratios 
for actions to maximize training throughput while minimizing unnecessary parameter freezing, as detailed in \autoref{subsec:lp_formulation}.

\subsubsection{Pipeline DAG Construction}
\label{subsec:pipeline_dag_construction}

\textbf{Pipeline DAG Nodes.}
\label{def:pipeline_dag_nodes}
Let $\gG=(\gV,\gE)$ be a directed acyclic graph (DAG) representing the pipeline schedule of a batch. 
The \emph{node} set $\gV$ primarily consists of action nodes $v_{(a,m,s)}$, each corresponding to an action $a$ of a microbatch $m$ at a stage $s$, where the action is defined as a combination of
\[
a \in \{f,b\}, \quad
m \in \{1,\dots,M\}, \quad
s \in \{1,\dots,S\}.
\]
Here, $a=f$ and $a=b$ denote forward and backward computations, respectively, while $M$ and $S$ denote the total numbers of microbatches and pipeline stages.
In addition, we introduce a source node $v_s$ and a destination node $v_d$, which represent the start and the end of a batch, respectively.

\textbf{Pipeline DAG Edges.}
\label{def:pipeline_dag_edges}
Let $\gE$ denote the set of directed edges in the DAG $\gG=(\gV,\gE)$. 
Each edge $v_i \to v_j \in \gE$ represents an execution dependency, indicating that a node $v_j$ cannot begin until a node $v_i$ finishes. 
$\gE$ encodes all necessary intra- and inter-stage execution constraints of the pipeline. The edge construction rules are provided in \autoref{appendix:edge_construction_rules}.

\textbf{Node Weights.} 
Each action node $v_i \in \mathcal{V}$ is associated with a \emph{weight} $w_i$, denoting the execution duration of the action,
\begin{equation}
w_i \in [\, w_i^{\min}, \, w_i^{\max} \,],
\label{def:action_duration_node_weight}
\end{equation}
where $w_i^{\min}$ and $w_i^{\max}$ denote the execution duration when all parameters are frozen and when no parameters are frozen, respectively, as obtained from the monitoring phase.
Additionally, for the source and destination nodes, we define $w_s = w_d = 0$, since they are abstract nodes that are not associated with any computation.

\begin{figure}[t]
    \centering
    \includegraphics[width=0.95\linewidth]{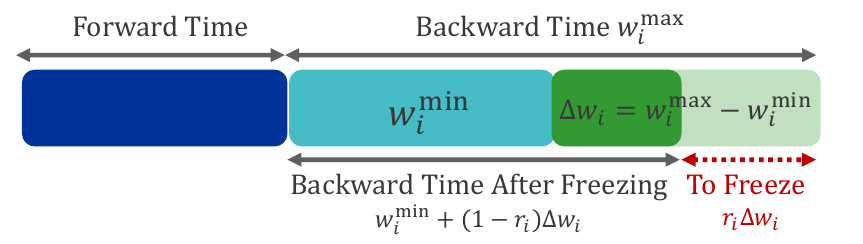}
    \caption{Backward-time reduction via parameter freezing.}
    \label{fig:fwdbwd_ratio}
\end{figure}

\autoref{fig:fwdbwd_ratio} illustrates how the node weight $w_i$ varies with the freeze ratio.
The execution time of forward actions (blue) remains unchanged, as forward computations are not affected by parameter freezing. Accordingly, for any forward nodes $i=(f,m,s)$, $w_i = w_i^{\min} = w_i^{\max}$ holds.
In contrast, backward actions are affected by parameter freezing.
Following the decomposition of backward computation in ZeroBubble analysis~\citep{qi2023zerobubble}, the backward pass consists of two components: 
(1) gradient computation with respect to input activations (in {\color{cyan}cyan} in \autoref{fig:fwdbwd_ratio}) and 
(2) gradient computation with respect to parameters (in {\color{OliveGreen}green}).
Only the latter is influenced by parameter freezing.
As a result, even when parameters are fully frozen, the activation-gradient component remains at $w_i^{\min}$, while the parameter-gradient component decreases according to the freeze ratio.


\textbf{Freeze Ratio.}
\label{def:freeze_ratio}
The \emph{freeze ratio} of a node $v_i$ is defined as
\begin{equation}
    r_i \;:=\; 1 - \dfrac{w_i - w_i^{\min}}{\,w_i^{\max} - w_i^{\min}}, 
    \qquad r_i \in [0,1]
    \label{eq:def_freeze_ratio}
\end{equation}
when $w_i^{\max} \!>\! w_i^{\min}$, and we set $r_i \!=\! 0$ when $w_i^{\max} \!=\! w_i^{\min}$.

Here, $r_i$ quantifies the average fraction of parameters frozen in a node $v_i$. In particular, for forward actions $v_{(f,m,s)}$ whose execution time does not vary with freezing,
\begin{displaymath}
w_{(f,m,s)} = w_{(f,m,s)}^{\min} = w_{(f,m,s)}^{\max} \;\; \implies \;\; r_{(f,m,s)} = 0.
\end{displaymath}

\textbf{Start Time of Action.}
\label{def:start_time}
The \emph{start time} $P_i$ of an action node $v_i$ is determined by the longest-path over all its predecessors in the DAG such that
\begin{equation}
P_i := \max_{(v_j \to v_i) \in \mathcal{E}} \big(P_j + w_j\big),
\end{equation}
where the batch start time according to the source node $v_s$ is defined as $P_s = 0$, and the batch execution time corresponds to the start time of the destination node, $P_d$.

\subsubsection{Linear Programming Formulation}
\label{subsec:lp_formulation}

We formulate the problem of determining the freeze ratios $\mathbf{r}$ as a linear program (LP) on the previously defined pipeline DAG.
Applying the resulting expected freeze ratios $\mathbf{r}^*$ selectively shortens backward computations that are related to the critical path of the end-to-end batch execution time.
Consequently, the overall batch latency is reduced from $P_d^{\max}$ (i.e., $P_d$ where all $w_i = w_i^{\max}$) to $P_d^*$.

\textbf{Decision Variables.}
For each node $v_i \in \mathcal{V}$, including the source and destination nodes $v_s$ and $v_d$, we introduce two non-negative real-valued decision variables:
$P_i \in \mathbb{R}_{\ge 0}$ and $w_i \in \mathbb{R}_{\ge 0}$, denoting the start time and duration of a node $v_i$, respectively. Together, these variables define the temporal placement of each action block within the pipeline schedule.

\textbf{Objective Function.}
The objective is twofold: 
(1) to minimize the batch completion time $P_d$ as the \emph{primary objective},
and (2) to avoid excessive freezing by minimizing the freeze ratios $r_i$ as a \emph{secondary objective}, subject to an upper-bound $r_{\text{max}}$.
We enforce this priority by assigning a sufficiently small $\lambda \ll 1$ to the secondary goal, so that minimizing $P_d$ always dominates the optimization.
Accordingly, we formulate the linear objective as
\begin{equation}
\min \;\Big[\, P_d - \lambda \sum_{i \in \mathcal{V}} \delta_i w_i \,\Big], 
\qquad \lambda \ll 1.
\end{equation}
The second term therefore acts only as a tie-breaker among solutions with similar batch time $P_d$. 
Here, $\delta_i$ denotes the reciprocal of the execution-time range, defined as 
\begin{equation}
    \delta_i = 
\frac{1}{w_i^{\max} - w_i^{\min}} \quad\text{if } w_i^{\max} > w_i^{\min}, \text{  and }0 \text{ otherwise }.
\label{eq:delta_i}
\end{equation}
With this definition, the term $-\delta_i w_i$ is linearly proportional to the freeze ratio $r_i$ for freezable nodes,
whereas $\delta_i = 0$ for nodes whose execution time is unaffected by freezing.

\textbf{Constraints.}
The optimization is subject to the following linear constraints. 

\newpage

\vspace*{-1.3cm}
\begin{align*} 
    &\text{[1]}\quad P_j \ge P_i + w_i, && \forall (i \to j) \in \mathcal{E},
    \\ 
    &\text{[2]}\quad w_i^{\min} \le w_i \le w_i^{\max}, && \forall i \in \gV,
    \\ 
    &\text{[3]}\quad P_s = 0,\;\; w_s = 0, 
    \\ 
    &\text{[4]}\quad \sum_{i \in \gV_s} \delta_i (w_i^{\max}-w_i) \le r_{\max}\,|\gV_s|, && \forall s\in \{1,\dots,S\}.
\end{align*}
Constraint~[1] enforces precedence in the DAG, ensuring that an action $j$ cannot start until all predecessor actions $i$ have finished. 
Constraint~[2] bounds each action duration within $[w_i^{\min}, w_i^{\max}]$, capturing the feasible latency range under parameter freezing. 
Constraint~[3] anchors the beginning of the schedule by setting the source node to $P_s=w_s=0$.

Finally, Constraint~[4] imposes a stage-wise \emph{average freezing budget}. Using the linearized freeze ratio $r_i=\delta_i (w_i^{\max}-w_i)$ (see \autoref{eq:def_freeze_ratio}) for freezable backward nodes $i\in\gV_s$ at a stage $s$, Constraint~[4] is equivalent to 
\begin{equation}
    \frac{1}{|\gV_s|} \sum_{i \in \gV_s} r_i \le r_{\max} ,\quad \forall s,
    \label{eq:max_freeze_ratio}
\end{equation}
which prevents any pipeline stage from being over-frozen on average.
Here, let $\gV_s \subseteq \gV$ be the set of action nodes assigned to a stage $s$, and $r_{\max}$ be a user-specified \emph{maximum freeze ratio}. Although the constraint is imposed at the stage level, $r_{\max}$ effectively bounds the expected parameter-level freeze ratio, since each stage is associated with a distinct set of parameters, and parameter freezing is applied via random selection within each action.
As a result, this constraint ensures that no parameter exceeds the allowable freeze budget in expectation while preserving the full linearity of the LP.


Solving this LP yields execution times $w_i$ for all nodes $v_i \in \mathcal{V}$, from which freeze ratios $r_i$ are derived.
This formulation ensures that each batch completes in minimal time while freezing parameters in a controlled manner across stages.
Moreover, since the formulation imposes no integrality constraints on the variables, it can be solved in polynomial time using standard linear programming solvers, such as 
interior-point methods~\citep{karmarkar1984new}.

\subsection{Phase III: Freezing}
\label{subsec:freezing_phase}
%
After determining the expected freeze ratios $\mathbf{r}^*$, TimelyFreeze applies freezing in a \emph{progressive} manner. 
Specifically, the freeze ratio for each action $v_i$ is gradually increased from zero toward $r^*_i$.
This smooth transition avoids abrupt changes in freezing behavior that could harm training stability.
Formally, at each step $t$, parameters are frozen according to the actual freeze ratio $\textrm{AFR}_{i,t}$. 
Once $t > T_f$, the system enters a \emph{stable freezing phase}, where $\textrm{AFR}_{i,t} = r_i$. The actual freeze ratio for $t > T_m$ is defined as
\begin{align}
    \textrm{AFR}_{i,t}
    = \min\Big\{
        r_i,\;
        r_i \cdot \frac{t - T_m}{T_f - T_m}
      \Big\}.
\end{align}

Regarding which parameters to freeze, TimelyFreeze adopts \emph{uniform random selection} as a strong and unbiased reference.
Under a given $\textrm{AFR}_{i,t}$, this strategy assigns equal freezing probability to all parameters associated with each action.
As a result, it isolates the effect of the freeze ratio itself from additional heuristics or parameter-level statistics.

\autoref{fig:thp_fr} illustrates the evolution of the actual freeze ratio (blue line) over training steps.
The actual freeze ratio starts from zero at $T_m$, the boundary between the monitoring and freezing phases, and gradually increases toward the expected freeze ratio $r^*$.
As the actual freeze ratio increases, the training throughput (gray line) improves accordingly. The full step-level algorithm is provided in Appendix~\ref{appendix:step_level_algorithm}.

\begin{figure}[t]
    \centering
    \includegraphics[width=\linewidth, trim=15pt 0 5pt 0, clip]{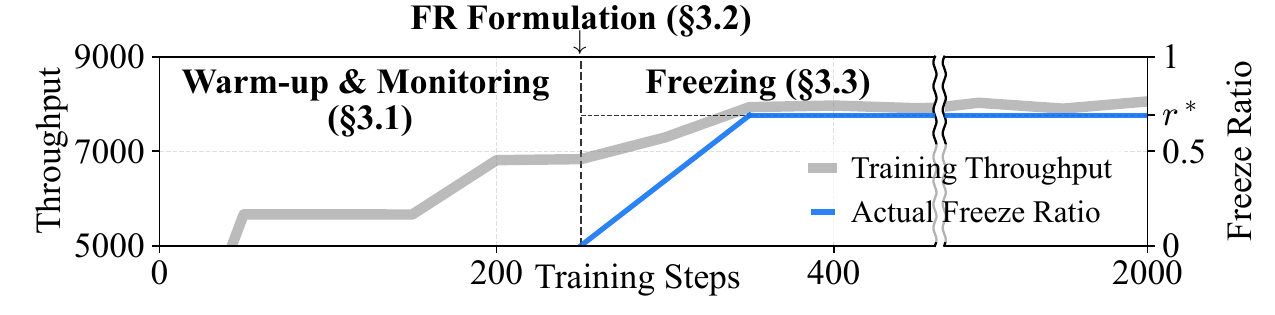}
    \caption{Freeze ratio and training throughput (tokens/sec) across training steps. 
    After warm-up and monitoring, the freeze ratio is gradually increased toward the expected value $r^*$, resulting in a corresponding increase in training throughput.}
    \label{fig:thp_fr}
\end{figure}

\subsection{Time-to-Accuracy Analysis}
\label{subsec:tta_analysis}

We analyze convergence in terms of \emph{time-to-accuracy} (TTA), defined as the
wall-clock time required to reach an $\varepsilon$-stationary point.
Formally, TTA is given by
\begin{equation}
\mathrm{TTA}
\;=\;
\sum_{t=1}^{T_{\varepsilon}} \tau_t,
\end{equation}
where $T_{\varepsilon}$ denotes the number of optimization steps required to
reach the target stationary level $\varepsilon$, and $\tau_t$ denotes the average execution time at step $t$.

\textbf{Iteration Complexity.}
Compared to the no-freezing baseline, TimelyFreeze may require more optimization
steps, since parameter freezing partially suppresses gradient updates.
Our analysis makes this trade-off explicit. In expectation, the effective descent
per step is reduced proportionally to the fraction of parameters being updated,
which yields the iteration-complexity scaling,
\begin{equation}
T_{\varepsilon}^{\mathrm{ours}}
\;\approx\;
\frac{1}{\bar p_{\mathrm{eff}}}\,
T_{\varepsilon}^{\mathrm{base}},
\label{eq:Teps_ratio_main}
\end{equation}
where $\bar p_{\mathrm{eff}}\in[1-r_{\max},1]$ denotes the average effective update
probability (see Appendix~\ref{appendix:convergence_analysis} for the definition).

\textbf{Per-step Execution Time.}
Despite the increase in iteration count, TimelyFreeze reduces the per-step
execution time by freezing throughput bottlenecks along the pipeline critical
path.
We summarize this effect by
\begin{equation}
\tau^{\mathrm{ours}}
\;=\;
\kappa\,\tau^{\mathrm{base}},
\qquad
\kappa \in (0,1),
\end{equation}
where $\kappa$ captures the per-step speedup induced by parameter freezing under TimelyFreeze.
We note that $\kappa$ is directly observable in practice.
Since $\tau \propto 1\,/\,\text{throughput}$, it can be computed from the measured
throughput values reported in the experimental results.

\textbf{Time-to-Accuracy Trade-off.}
Combining the above results, the overall TTA ratio satisfies
\begin{equation}
\frac{\mathrm{TTA}_{\varepsilon}^{\mathrm{ours}}}
{\mathrm{TTA}_{\varepsilon}^{\mathrm{base}}}
\;\approx\;
\frac{\kappa}{\bar p_{\mathrm{eff}}}.
\label{eq:TTA_ratio_main}
\end{equation}
Therefore, TimelyFreeze achieves a smaller time-to-accuracy whenever
the per-step speedup outweighs the increase in iteration complexity, i.e.,
$\kappa < \bar p_{\mathrm{eff}}$, which depends on the effective update probability in practice, even for a large $r_{\max}$.
A detailed derivation is provided in Appendix~\ref{appendix:tta_analysis}.



\section{Experiments}
\label{sec:experiment}

\begin{table*}[t]
\renewcommand{\arraystretch}{1.13}
\centering
\caption{
Comparison of freezing methods across different pipeline schedules (LLaMA-3-8B).
The average accuracy of the pretrained (no fine-tuning) baseline is 50.81.
The best and second-best values among the five freezing methods (APF, AutoFreeze, \TimelyFreeze{}, and two hybrid variants) are highlighted in bold and underline, respectively. $r_{max}$ for \TimelyFreeze{} is set to 0.8 for all experiments.
}
\setlength{\tabcolsep}{3pt}
\vspace*{-0.2cm}

\resizebox{\textwidth}{!}{%
\begin{tabular}{l|cc|cc c l|cc|cc}
\multicolumn{5}{c}{\textbf{GPipe}} & \cellcolor{white}\quad\quad &\multicolumn{5}{c}{\textbf{1F1B}} \\
\cmidrule[\heavyrulewidth]{1-5}\cmidrule[\heavyrulewidth]{7-11}

\multicolumn{1}{c|}{\multirow{2}{*}{\makecell[c]{Freeze \\ Method}}} &
\multicolumn{2}{c|}{Accuracy Preservation} &
\multicolumn{2}{c}{Time Efficiency}
& \cellcolor{white}\quad\quad &
\multicolumn{1}{c|}{\multirow{2}{*}{\makecell[c]{Freeze \\ Method}}} &
\multicolumn{2}{c|}{Accuracy Preservation} &
\multicolumn{2}{c}{Time Efficiency} \\
\cmidrule{2-5}\cmidrule{8-11}
&
\makecell[c]{Avg.\,Acc.\,($\Delta$)$\uparrow$} &
Frz.\,Ratio &
\makecell[c]{Throughput\,($\Delta$)$\uparrow$} &
MFU$\uparrow$
& \cellcolor{white}\quad\quad &
&
\makecell[c]{Avg.\,Acc.\,($\Delta$)$\uparrow$} &
Frz.\,Ratio &
\makecell[c]{Throughput\,($\Delta$)$\uparrow$} &
MFU$\uparrow$ \\
\cmidrule{1-5}\cmidrule{7-11}

No Freezing
& \aaccsign{54.63}{+0.00} & 0.00 & \ttimesign{5737}{0.00} & 25.94
& \cellcolor{white}\quad\quad &
No Freezing
& \aaccsign{54.63}{+0.00} & 0.00 & \ttimesign{5748}{0.00} & 25.99 \\
\graycline{1-5}\graycline{7-11}

APF
& \aaccsign{54.65}{+0.02} & 28.85 & \ttimesign{7293}{27.12} & 33.20
& \cellcolor{white}\quad\quad &
APF
& \aaccsign{54.65}{+0.02} & 28.85 & \ttimesign{7312}{27.20} & 33.29 \\
AutoFreeze
& \aaccsign{53.99}{-0.64} & 41.69 & \ttimesign{7351}{28.13} & 33.45
& \cellcolor{white}\quad\quad &
AutoFreeze
& \aaccsign{53.99}{-0.64} & 41.69 & \ttimesign{7367}{28.15} & 33.52 \\

\rowcolor{gray!15}
\TimelyFreeze{}
& \aaccsign{\underline{54.79}}{+0.17} & 35.64 & \ttimesign{\underline{7821}}{36.33} & \underline{35.69}
& \cellcolor{white}\quad\quad &
\TimelyFreeze{}
& \aaccsign{54.69}{+0.07} & 33.14 & \ttimesign{\underline{7867}}{36.87} & \underline{35.89} \\

\rowcolor{lightgray!10}
\scriptsize\quad +APF
& \aaccsign{\textbf{54.82}}{+0.19} & 35.66 & \ttimesign{\textbf{7839}}{36.63} & \textbf{35.77}
& \cellcolor{white}\quad\quad &
\scriptsize\quad +APF
& \aaccsign{\textbf{54.84}}{+0.21} & 34.23 & \ttimesign{\textbf{8024}}{39.59} & \textbf{36.66} \\

\rowcolor{lightgray!10}
\scriptsize\quad +AutoFreeze
& \aaccsign{54.65}{+0.03} & 35.06 & \ttimesign{7774}{35.50} & 35.44
& \cellcolor{white}\quad\quad &
\scriptsize\quad +AutoFreeze
& \aaccsign{\underline{54.82}}{+0.19} & 32.62 & \ttimesign{7827}{36.16} & 35.70 \\
\cmidrule[\heavyrulewidth]{1-5}\cmidrule[\heavyrulewidth]{7-11}

\multicolumn{5}{c}{\textbf{Interleaved 1F1B}} & \cellcolor{white}\quad\quad & \multicolumn{5}{c}{\textbf{ZBV}} \\
\cmidrule[\heavyrulewidth]{1-5}\cmidrule[\heavyrulewidth]{7-11}

\multicolumn{1}{c|}{\multirow{2}{*}{\makecell[c]{Freeze \\ Method}}} &
\multicolumn{2}{c|}{Accuracy Preservation} &
\multicolumn{2}{c}{Time Efficiency}
& \cellcolor{white}\quad\quad &
\multicolumn{1}{c|}{\multirow{2}{*}{\makecell[c]{Freeze \\ Method}}} &
\multicolumn{2}{c|}{Accuracy Preservation} &
\multicolumn{2}{c}{Time Efficiency} \\
\cmidrule{2-5}\cmidrule{8-11}
&
\makecell[c]{Avg.\,Acc.\,($\Delta$)$\uparrow$} &
Frz.\,Ratio &
\makecell[c]{Throughput\,($\Delta$)$\uparrow$} &
MFU$\uparrow$
& \cellcolor{white}\quad\quad &
&
\makecell[c]{Avg.\,Acc.\,($\Delta$)$\uparrow$} &
Frz.\,Ratio &
\makecell[c]{Throughput\,($\Delta$)$\uparrow$} &
MFU$\uparrow$ \\
\cmidrule{1-5}\cmidrule{7-11}

No Freezing
& \aaccsign{54.78}{+0.00} & 0.00 & \ttimesign{6173}{0.00} & 27.91
& \cellcolor{white}\quad\quad &
No Freezing
& \aaccsign{54.66}{+0.00} & 0.00 & \ttimesign{6889}{0.00} & 31.16 \\
\graycline{1-5}\graycline{7-11}

APF
& \aaccsign{\textbf{54.99}}{+0.21} & 58.05 & \ttimesign{7734}{25.29} & 35.20
& \cellcolor{white}\quad\quad &
APF
& \aaccsign{54.47}{-0.19} & 43.35 & \ttimesign{8522}{23.70} & 38.71 \\
AutoFreeze
& \aaccsign{54.54}{-0.24} & 76.86 & \ttimesign{7495}{21.43} & 34.10
& \cellcolor{white}\quad\quad &
AutoFreeze
& \aaccsign{54.86}{+0.20} & 24.28 & \ttimesign{7139}{3.62} & 32.28 \\

\rowcolor{gray!15}
\TimelyFreeze{}
& \aaccsign{54.64}{-0.14} & 60.07 & \ttimesign{\underline{8081}}{30.91} & \underline{36.78}
& \cellcolor{white}\quad\quad &
\TimelyFreeze{}
& \aaccsign{54.72}{+0.06} & 69.92 & \ttimesign{\textbf{8939}}{29.75} & \textbf{40.70} \\

\rowcolor{lightgray!10}
\scriptsize\quad +APF
& \aaccsign{54.71}{-0.06} & 60.72 & \ttimesign{\textbf{8108}}{31.36} & \textbf{36.92}
& \cellcolor{white}\quad\quad &
\scriptsize\quad +APF
& \aaccsign{\underline{55.03}}{+0.37} & 69.97 & \ttimesign{\underline{8923}}{29.51} & \underline{40.63} \\

\rowcolor{lightgray!10}
\scriptsize\quad +AutoFreeze
& \aaccsign{\underline{54.72}}{-0.06} & 60.76 & \ttimesign{8057}{30.53} & 36.67
& \cellcolor{white}\quad\quad &
\scriptsize\quad +AutoFreeze
& \aaccsign{\textbf{55.05}}{+0.39} & 70.03 & \ttimesign{8909}{29.32} & 40.56 \\

\cmidrule[\heavyrulewidth]{1-5}\cmidrule[\heavyrulewidth]{7-11}
\end{tabular}%
}

\vspace*{-0.3cm}
\label{tab:llama8b}
\end{table*}

\subsection{Hybrid Variants}
\label{sec:hybrid_variants}

As discussed in \autoref{sec:baseline}, both baselines, APF~\citep{adaptivefreezeapf2024} 
and AutoFreeze~\citep{autofreeze2021}, jointly determine 
\emph{how many parameters to freeze} and \emph{which parameters to freeze}, based on gradient-derived metrics.

In contrast, our method focuses solely on the first component by computing
stage-wise freeze ratios via our LP formulation, while remaining agnostic to
parameter selection.
To demonstrate that our method is complementary to existing parameter
selection heuristics, we introduce two hybrid variants, \TimelyAPF{} and \TimelyAuto{}.
In these variants, TimelyFreeze determines the freeze budget, while APF or AutoFreeze provides the parameter selection metric.
%
Detailed hybrid mechanism is provided in Appendix~\ref{appendix:hybrid_variants_algorithm}.

\subsection{Experimental Setup}
\textbf{Models and Datasets.} 
We conduct instruction finetuning task on the \texttt{LLaMA-3.2-1B}, \texttt{LLaMA-3-8B}~\citep{dubey2024llama}, and \texttt{LLaMA-2-13B}~\citep{touvron2023llama} models using the \texttt{Alpaca-GPT4}~\citep{peng2023instruction_alpaca_gpt4} and \texttt{OpenHermes-2.5}~\citep{OpenHermes_2.5} datasets to evaluate various freezing methods. 
To further demonstrate the applicability across domains, 
we conduct image classification by finetuning pretrained \texttt{ViT-L/32}~\citep{huggingface_vit} and \texttt{ConvNeXt-V2-L}~\citep{woo2023convnext} models on \texttt{ImageNet-1K}~\citep{deng2009imagenet} and \texttt{Food-101}~\citep{bossard14food101} datasets, respectively. 
For detailed experiment setup, refer to Appendix~\ref{appendix:exp_setting}.
The source code is available at \url{https://anonymous.4open.science/r/TimelyFreeze/}.

\textbf{Pipeline Schedules.} 
We validate our method under four representative pipeline schedules: \emph{GPipe}~\citep{huang2019gpipe}, \emph{1F1B}~\citep{fan2021dapple1f1b}, \emph{Interleaved 1F1B}~\citep{narayanan2021interleaved1f1b}, and \emph{Zero-Bubble V-shaped (ZBV)}~\citep{qi2023zerobubble}. 
We employ the \texttt{TorchTitan} framework~\citep{liang2024torchtitan} developed by PyTorch team for distributed training.

\textbf{Hardware Settings.}
All experiments were conducted on four Nvidia A6000 48GB GPUs and four H200 140GB GPUs. 
A6000 GPUs communicate via PCIe, whereas H200 GPUs communicate through NVLink. 

\textbf{Metrics.}
%
Model performance is assessed using the \emph{average accuracy} over four widely used benchmarks: \emph{MMLU} (5 shots)~\citep{hendrycks2020measuring_mmlu}, 
\emph{HellaSwag} (zero shot)~\citep{zellers2019hellaswag}, 
\emph{ARC-Challenge (ARC-C)} (10 shots)~\citep{clark2018think_arc_challenge}, 
and \emph{TruthfulQA} (zero shot)~\citep{lin2022truthfulqa}. 
We report the mean proportion of frozen parameters across all GPUs using the
\emph{average freeze ratio} (``Frz. Ratio'' in \autoref{tab:llama8b}), which is defined as
\[
\text{Average Freeze Ratio}
\;:=\;
\mathbb{E}_{t,i,j}\!\left[\mathbb{I}^{(j)}_{t,i}\right],
\]
where $\mathbb{I}^{(j)}_{t,i}$ is a Bernoulli random variable induced by uniform
random freezing, indicating whether a parameter $j$ of node $v_i$ is frozen at
step $t$.

\section{Results and Discussion}
\label{subsec:main_results}

\autoref{tab:llama8b} shows the accuracy preservation and time efficiency results for LLaMA-3-8B under four pipeline schedules. 
Here, the accuracy gain (e.g., $\Delta$ of Avg. Acc.) denotes the difference in average accuracy relative to the
no-freezing baseline, while throughput gain ($\Delta$ of Throughput) represents the relative throughput improvement.

Across all pipeline schedules, TimelyFreeze and its hybrid variants consistently improve throughput over the no-freezing baseline, while largely preserving accuracy.
In particular, under the 1F1B schedule, \TimelyAPF{} achieves the highest throughput improvement of 39.59\% with comparable accuracy, and in some cases slightly surpasses the no-freezing baseline.
Compared to the baseline freezing methods (APF and AutoFreeze), TimelyFreeze and its hybrids generally achieve higher or comparable accuracy across the GPipe, 1F1B, and ZBV schedules.


\subsection{Scaling Toward Larger Models}
\label{subsec:scaling_larger_models_llama}

\autoref{fig:llama_result} presents the accuracy--throughput trade-offs for the
LLaMA family as the model size scales from 1B to 8B and 13B under four pipeline schedules.
Across most settings, TimelyFreeze and its hybrid variants consistently lie on or near the Pareto frontier, achieving higher throughput with comparable accuracy relative to both the no-freezing baseline and baseline freezing methods.
 
For LLaMA-1B\,(top) in \autoref{fig:llama_result}, \TimelyFreeze{} and its hybrid variants achieve moderate throughput improvements of up to 19--21\% with less than 1\%p accuracy degradation compared to the no-freezing baseline.
 
As the model size increases, these gains become more pronounced.
On LLaMA-8B\,(middle) and LLaMA-13B\,(bottom), \TimelyFreeze{} yields substantially larger throughput improvements across GPipe and 1F1B schedules, even up to 46\% than the no-freezing baseline in 13B, while preserving accuracy at the comparable extent. 
For Interleaved 1F1B and ZBV on LLaMA-13B, accuracy shows higher variance, suggesting that additional multi-seed evaluation would be useful to assess stability at this scale.

%


Overall, the scaling results indicate that \TimelyFreeze{} scales favorably with
model size, yielding higher throughput gains while maintaining comparable
accuracy.
%
See Appendices~\ref{appendix:llama1b}--\ref{appendix:llama1b8b_benchmarks} for the detailed numbers and benchmark scores. In addition, the examples of actual pipeline execution schedules are visualized in Appendix~\ref{appendix:schedules_comparison_4gpus}.


\begin{figure}[t]
  \centering

  \makebox[\linewidth]{\fontsize{8.5pt}{10pt}\selectfont LLaMA-1B}
  \vspace{-0.8em}
  \includegraphics[width=\linewidth, trim=0cm 0cm 0.5cm 0.3cm, clip]{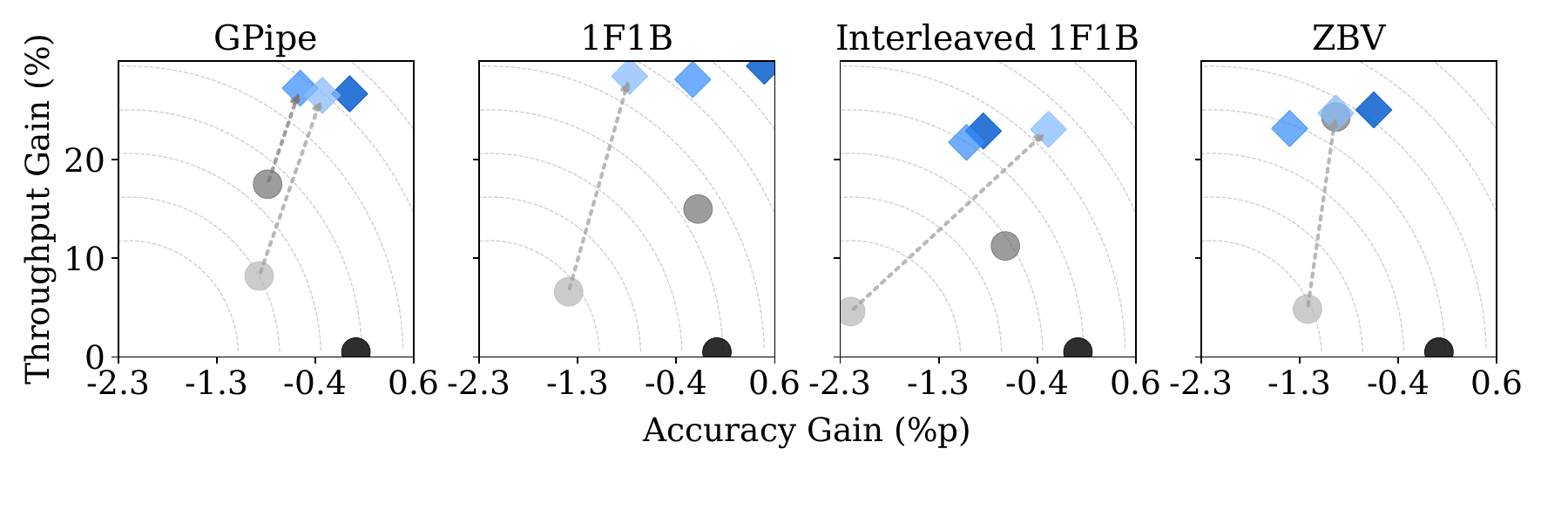}

  \vspace{-0.3em}
  \makebox[\linewidth]{\fontsize{8.5pt}{10pt}\selectfont LLaMA-8B}
  \vspace{-0.8em}
  \includegraphics[width=\linewidth, trim=0cm 0cm 0.5cm 0.3cm, clip]{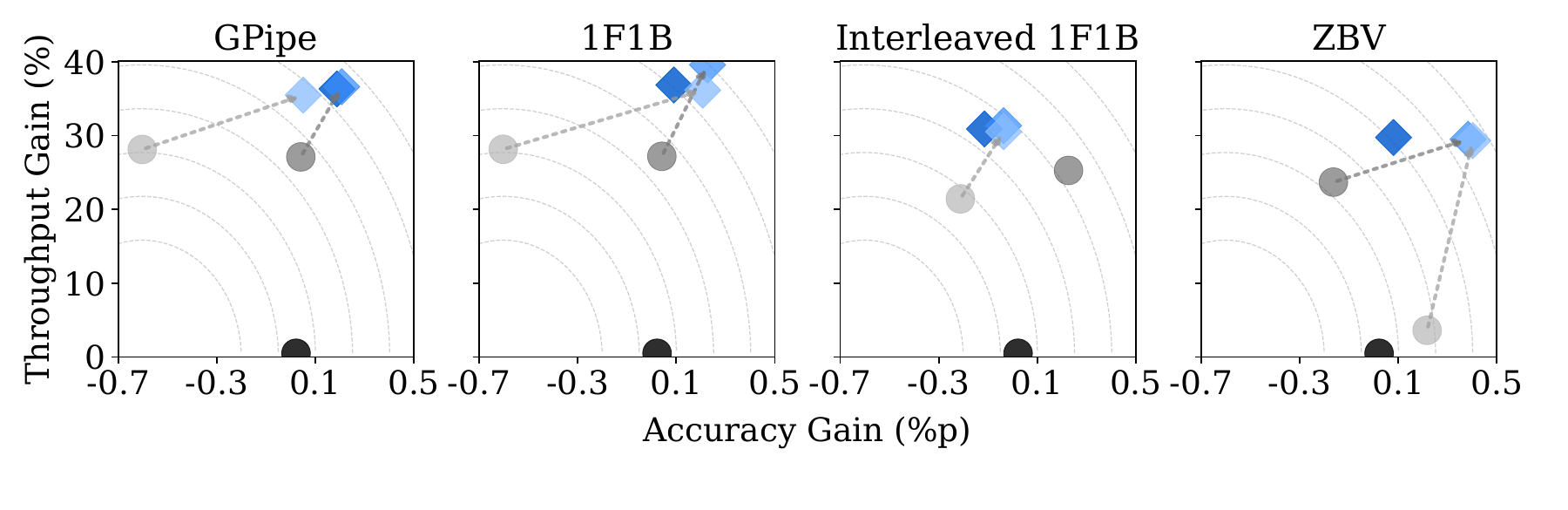}

  \vspace{-0.3em}
  \makebox[\linewidth]{\fontsize{8.5pt}{10pt}\selectfont LLaMA-13B}
  \includegraphics[width=\linewidth, trim=0cm 0cm 0.5cm 0.3cm, clip]{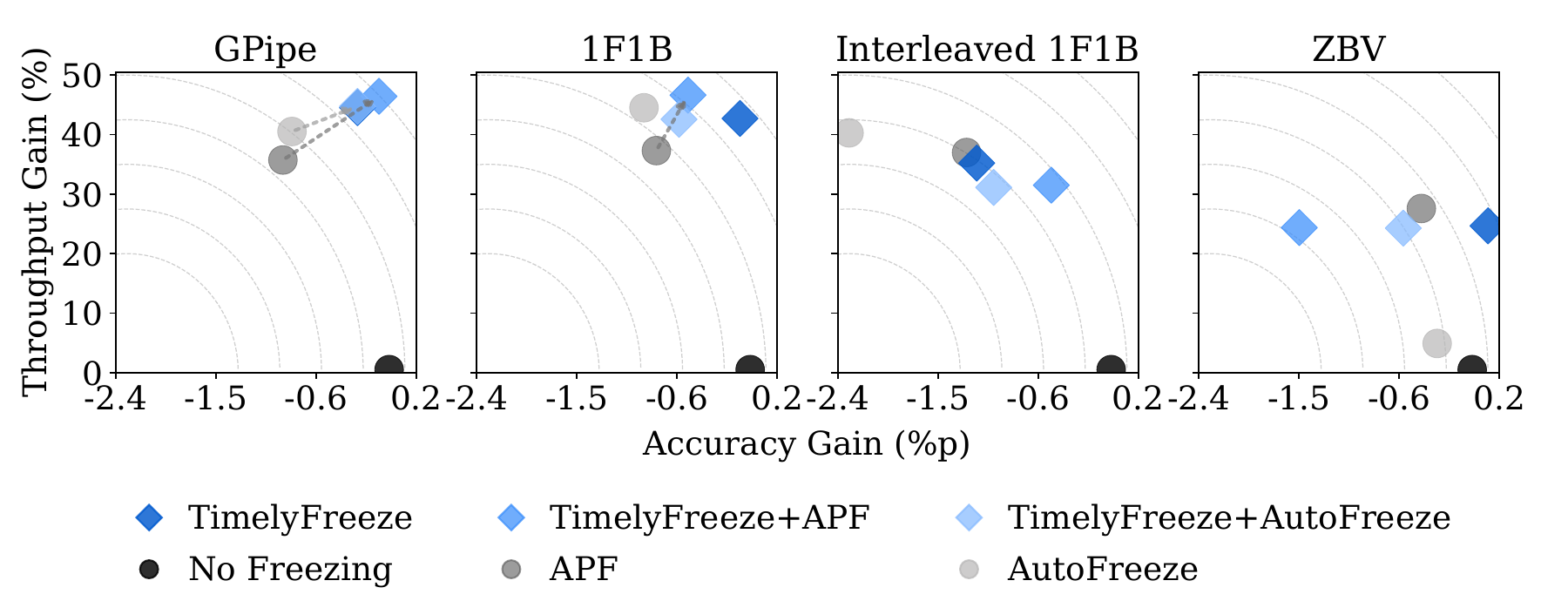}

  \caption{Comparison of accuracy--throughput trade-offs under GPipe, 1F1B, Interleaved 1F1B, and Zero-Bubble (ZBV).}
  \label{fig:llama_result}
\end{figure}

\subsection{Freezing Controller Sensitivity}

\begin{figure}[t]
    \centering
    \includegraphics[width=1.02\linewidth, trim=0.0cm 0.0cm 0.5cm 0.5cm, clip]{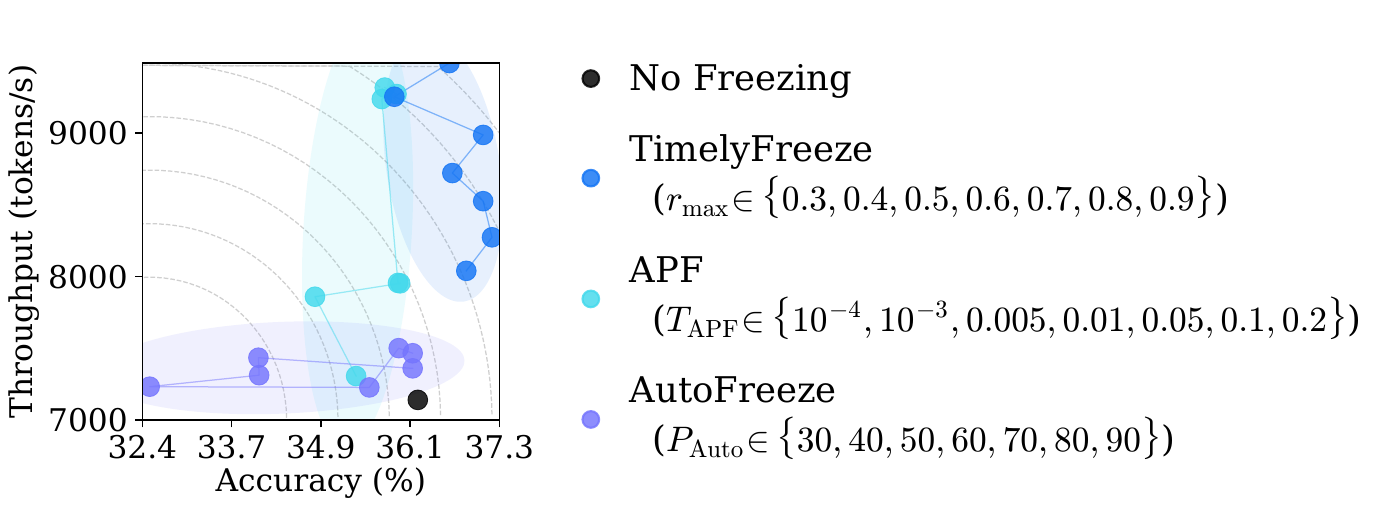}
    \caption{Accuracy--throughput trade-offs on LLaMA-1B under different freezing controller values with the 1F1B schedule.}
    \label{fig:llama1b_hyperparam}
\end{figure}
%

In \autoref{fig:llama1b_hyperparam}, we analyze the accuracy--throughput trade-off
of the LLaMA-1B model across different user-specified freezing controller values
(i.e., $r_{\max}$ for TimelyFreeze, $T_{\mathrm{APF}}$ for APF, and
$P_{\mathrm{Auto}}$ for AutoFreeze). The control values used for each method are indicated in the figure, and the corresponding results are arranged from left to right in the order of increasing freezing strength.


TimelyFreeze exhibits a consistent trend, as $r_{\max}$ decreases (i.e., weaker freezing), throughput decreases monotonically while accuracy remains well preserved across different $r_{\max}$ values.
In contrast, the two baselines consistently achieve lower throughput and accuracy
than TimelyFreeze across all controller settings.
Moreover, their throughput and accuracy vary irregularly as the user-specified hyperparameters change, indicating that $r_{\max}$ provides more predictable and controllable throughput behavior in TimelyFreeze.
Based on our experience, we recommend selecting $r_{max}$ in the range of $[0.5, 0.8]$, where accuracy is rather stable, depending on the desired accuracy--throughput trade-off.


\subsection{Generalization on Vision Models}
%
We further evaluate the generality of TimelyFreeze on vision models, including ViT-L/32 and ConvNeXt-V2-L, with detailed results reported in
Appendix~\ref{appendix:vision_models}.

ConvNeXt-V2-L exhibits highly unbalanced parameter distributions, where deeper layers contain substantially larger parameter groups, leading to execution-time skew under pipeline parallelism.
As explained in Appendix~\ref{appendix:convnext}, across different partitioning heuristics, TimelyFreeze consistently reduces training time while maintaining accuracy comparable to existing freezing methods.
Under the GPipe and 1F1B schedules, TimelyFreeze often attains the best or second-best training time among all freezing methods, despite using moderate freeze ratios.
This result shows that LP-based, pipeline-aware freezing effectively addresses execution-time imbalance arising from architectural unevenness.

Consistent robustness is also observed for ViT-L/32 finetuning in Appendix~\ref{appendix:vit}.
While APF suffers from severe accuracy degradation due to aggressive, metric-driven freezing, TimelyFreeze preserves accuracy close to the no-freezing baseline while achieving the largest training-time reduction under both pipeline schedules.
This result indicates that TimelyFreeze generalizes robustly beyond a specific model structure or task, providing stable performance even when importance metrics are noisy or unreliable.



\section{Conclusion}
\label{sec:conclusion}

\vfill
We proposed \textbf{\emph{TimelyFreeze}}, an adaptive parameter-freezing mechanism for pipeline-parallel training that explicitly incorporates pipeline execution structure.
By modeling the pipeline schedule as a DAG and computing stage-wise freeze ratios via linear programming, TimelyFreeze reduces batch latency while preserving model accuracy.
Extensive experiments on large language models and vision models demonstrate that TimelyFreeze consistently achieves superior accuracy--throughput trade-offs across pipeline schedules.
Future work includes extending TimelyFreeze to multi-node settings and integrating it with hybrid parallelism.



\section*{Acknowledgements}

This work was supported by Institute of Information \& Communications Technology Planning \& Evaluation\,(IITP) grant funded by the Korea government\,(MSIT) (No.\ RS-2020-II200862, DB4DL: High-Usability and Performance In-Memory Distributed DBMS for Deep Learning, 50\% and No.\ RS-2022-II220157, Robust, Fair, Extensible Data-Centric Continual Learning, 50\%).

\section*{Impact Statement}

This work introduces \TimelyFreeze{}, a system-level optimization technique that improves the efficiency of pipeline-parallel training by aligning parameter-freezing decisions with execution-time dynamics. By reducing redundant computation without modifying model architectures or learning objectives, \TimelyFreeze{} can lower training time and computational cost for large-scale models, potentially reducing energy consumption and improving accessibility to large-model training in resource-constrained environments.

The method is intended for use in controlled training settings and does not alter model inference behavior or introduce new capabilities that could raise additional ethical concerns beyond those associated with the underlying models. As with other efficiency-oriented system optimizations, the broader societal impacts of this work depend on how large-scale models are ultimately deployed. Overall, we do not foresee any specific negative impacts unique to \TimelyFreeze{} beyond those already present in large-scale model training and deployment.


\bibliography{reference}

@inproceedings{kaplan2020_scalinglaw,
  author    = {J. Kaplan and Sam McCandlish and T. Henighan and Tom B. Brown and Benjamin Chess and R. Child and Scott Gray and Alec Radford and Jeff Wu and Dario Amodei},
  year      = {2020},
  title     = {Scaling Laws for Neural Language Models},
  booktitle = {arXiv.org},
}

@inproceedings{you2018imagenet_dp,
  title={Imagenet Training in Minutes},
  author={You, Yang and Zhang, Zhao and Hsieh, Cho-Jui and Demmel, James and Keutzer, Kurt},
  booktitle={Proceedings of the International Conference on Parallel Processing},
  pages={1--10},
  year={2018}
}

@article{sergeev2018horovod_dp,
  title={Horovod: Fast and Easy Distributed Deep Learning in TensorFlow},
  author={Sergeev, Alexander and Del Balso, Mike},
  journal={arXiv preprint arXiv:1802.05799},
  year={2018}
}

@inproceedings{liang2024torchtitan,
  title={TorchTitan: One-Stop PyTorch Native Solution for Production Ready {LLM} Pretraining},
  author={Liang, Wanchao and Liu, Tianyu and Wright, Less and Constable, Will and Gu, Andrew and Huang, Chien-Chin and Zhang, Iris and Feng, Wei and Huang, Howard and Wang, Junjie and others},
  booktitle={Proceedings of the International Conference on Learning Representations},
  year={2025}
}

@article{shoeybi2019megatron,
  title={{Megatron-LM}: Training Multi-Billion Parameter Language Models Using Model Parallelism},
  author={Shoeybi, Mohammad and Patwary, Mostofa and Puri, Raul and LeGresley, Patrick and Casper, Jared and Catanzaro, Bryan},
  journal={arXiv preprint arXiv:1909.08053},
  year={2019}
}

@inproceedings{huang2019gpipe,
  title={Gpipe: Efficient Training of Giant Neural Networks Using Pipeline Parallelism},
  author={Huang, Yanping and Cheng, Youlong and Bapna, Ankur and Firat, Orhan and Chen, Dehao and Chen, Mia and Lee, HyoukJoong and Ngiam, Jiquan and Le, Quoc V and Wu, Yonghui and others},
  booktitle={Proceedings of the International Conference on Machine Learning},
  pages={103--112},
  year={2019}
}

@inproceedings{narayanan2019pipedream1f1b,
  title={PipeDream: Generalized Pipeline Parallelism for {DNN} Training},
  author={Narayanan, Deepak and Harlap, Aaron and Phanishayee, Amar and Seshadri, Vivek and Devanur, Nikhil R and Ganger, Gregory R and Gibbons, Phillip B and Zaharia, Matei},
  booktitle={Proceedings of the ACM Symposium on Operating Systems Principles},
  pages={1--15},
  year={2019}
}

@inproceedings{fan2021dapple1f1b,
  title={DAPPLE: A Pipelined Data Parallel Approach for Training Large Models},
  author={Fan, Shiqing and Rong, Yi and Meng, Chen and Cao, Zongyan and Wang, Siyu and Zheng, Zhen and Wu, Chuan and Long, Guoping and Yang, Jun and Xia, Lixue and others},
  booktitle={Proceedings of the ACM SIGPLAN Symposium on Principles and Practice of Parallel Programming},
  pages={431--445},
  year={2021}
}

@inproceedings{narayanan2021interleaved1f1b,
  title={Efficient Large-Scale Language Model Training on {GPU} Clusters Using Megatron-{LM}},
  author={Narayanan, Deepak and Shoeybi, Mohammad and Casper, Jared and LeGresley, Patrick and Patwary, Mostofa and Korthikanti, Vijay and Vainbrand, Dmitri and Kashinkunti, Prethvi and Bernauer, Julie and Catanzaro, Bryan and others},
  booktitle={Proceedings of the International Conference for High Performance Computing, Networking, Storage and Analysis},
  pages={1--15},
  year={2021}
}

@article{qi2023zerobubble,
  title={Zero Bubble Pipeline Parallelism},
  author={Qi, Penghui and Wan, Xinyi and Huang, Guangxing and Lin, Min},
  journal={arXiv preprint arXiv:2401.10241},
  year={2023}
}

@article{qi2024pipeline_controllable_memory,
  title={Pipeline Parallelism With Controllable Memory},
  author={Qi, Penghui and Wan, Xinyi and Amar, Nyamdavaa and Lin, Min},
  journal={arXiv preprint arXiv:2405.15362},
  year={2024}
}

@article{adaptivefreezeapf2024,
  title={Synchronize Only The Immature Parameters: Communication-Efficient Federated Learning by Freezing Parameters Adaptively},
  author={Chen, Chen and Xu, Hong and Wang, Wei and Li, Baochun and Li, Bo and Chen, Li and Zhang, Gong},
  journal={IEEE Transactions on Parallel and Distributed Systems},
  volume={35},
  number={7},
  pages={1155--1173},
  year={2023}
}

@article{autofreeze2021,
  title={Autofreeze: Automatically Freezing Model Blocks to Accelerate Fine-Tuning},
  author={Liu, Yuhan and Agarwal, Saurabh and Venkataraman, Shivaram},
  journal={arXiv preprint arXiv:2102.01386},
  year={2021}
}

@inproceedings{egeria2023_knowledge_guided_freezing,
  title={Egeria: Efficient {DNN} Training with Knowledge-Guided Layer Freezing},
  author={Wang, Yiding and Sun, Decang and Chen, Kai and Lai, Fan and Chowdhury, Mosharaf},
  booktitle={Proceedings of the European Conference on Computer Systems},
  pages={851--866},
  year={2023}
}

@article{freezeout2017,
  title={Freezeout: Accelerate Training by Progressively Freezing Layers},
  author={Brock, Andrew and Lim, Theodore and Ritchie, James M and Weston, Nick},
  journal={arXiv preprint arXiv:1706.04983},
  year={2017}
}

@article{lee2019would_elsa_freezing,
  title={What would elsa do? freezing layers during transformer fine-tuning},
  author={Lee, Jaejun and Tang, Raphael and Lin, Jimmy},
  journal={arXiv preprint arXiv:1911.03090},
  year={2019}
}

@inproceedings{he2021pipetransformer_freezing,
  title={Pipetransformer: Automated elastic pipelining for distributed training of large-scale models},
  author={He, Chaoyang and Li, Shen and Soltanolkotabi, Mahdi and Avestimehr, Salman},
  booktitle={Proceedings of the International Conference on Machine Learning},
  pages={4150--4159},
  year={2021},
  organization={PMLR}
}

@article{li2024smartfrz_freezing,
  title={Smartfrz: An efficient training framework using attention-based layer freezing},
  author={Li, Sheng and Yuan, Geng and Dai, Yue and Zhang, Youtao and Wang, Yanzhi and Tang, Xulong},
  journal={arXiv preprint arXiv:2401.16720},
  year={2024}
}

@inproceedings{woo2023convnext,
  title={Convnext v2: Co-Designing and Scaling Convnets with Masked Autoencoders},
  author={Woo, Sanghyun and Debnath, Shoubhik and Hu, Ronghang and Chen, Xinlei and Liu, Zhuang and Kweon, In So and Xie, Saining},
  booktitle={Proceedings of the IEEE/CVF Conference on Computer Vision and Pattern Recognition},
  pages={16133--16142},
  year={2023}
}

@article{touvron2023llama,
  title={The {L}lama 3 Herd of Models},
  author={Hugo Touvron and Louis Martin and Kevin Stone and Peter Albert and Amjad Almahairi and Yasmine Babaei and Nikolay Bashlykov and Soumya Batra and Prajjwal Bhargava and others},
  journal={arXiv preprint arXiv:2407.21783},
  year={2024}
}

@article{dubey2024llama,
  title={Llama 2: Open Foundation and Fine-Tuned Chat Models},
  author={Dubey, Abhimanyu and Jauhri, Abhinav and Pandey, Abhinav and Kadian, Abhishek and Al-Dahle, Ahmad and Letman, Aiesha and Mathur, Akhil and Schelten, Alan and Yang, Amy and Fan, Angela and others},
  journal={arXiv preprint 	arXiv:2307.09288},
  year={2023}
}

@misc{huggingface_vit,
  author      = {Huggingface},
  title       = {Vision Transformer (ViT)},
  url         = {https://huggingface.co/docs/transformers/model_doc/vit},
  lastaccessed = {June 30, 2025},
  year         = {2025}
}

@inproceedings{bossard14food101,
  title = {Food-101 -- Mining Discriminative Components with Random Forests},
  author = {Bossard, Lukas and Guillaumin, Matthieu and Van Gool, Luc},
  booktitle = {Proceedings of the European Conference on Computer Vision},
  pages = {446--461},
  year = {2014}
}

@inproceedings{deng2009imagenet,
  title={Image{N}et: A Large-Scale Hierarchical Image Database},
  author={Deng, Jia and Dong, Wei and Socher, Richard and Li, Li-Jia and Li, Kai and Fei-Fei, Li},
  booktitle={Proceedings of the IEEE/CVF Conference on Computer Vision and Pattern Recognition},
  pages={248--255},
  year={2009}
}

@article{peng2023instruction_alpaca_gpt4,
  title={Instruction Tuning with GPT-4},
  author={Peng, Baolin and Li, Chunyuan and He, Pengcheng and Galley, Michel and Gao, Jianfeng},
  journal={arXiv preprint arXiv:2304.03277},
  year={2023}
}

@misc{OpenHermes_2.5,
  title = {OpenHermes 2.5: An Open Dataset of Synthetic Data for Generalist LLM Assistants},
  author = {Teknium},
  year = {2023},
  publisher = {HuggingFace},
  url = {https://huggingface.co/datasets/teknium/OpenHermes-2.5}
}

@article{hendrycks2020measuring_mmlu,
  title={Measuring massive multitask language understanding},
  author={Hendrycks, Dan and Burns, Collin and Basart, Steven and Zou, Andy and Mazeika, Mantas and Song, Dawn and Steinhardt, Jacob},
  journal={arXiv preprint arXiv:2009.03300},
  year={2020}
}

@article{zellers2019hellaswag,
  title={Hellaswag: Can a machine really finish your sentence?},
  author={Zellers, Rowan and Holtzman, Ari and Bisk, Yonatan and Farhadi, Ali and Choi, Yejin},
  journal={arXiv preprint arXiv:1905.07830},
  year={2019}
}

@article{clark2018think_arc_challenge,
  title={Think you have solved question answering? try arc, the ai2 reasoning challenge},
  author={Clark, Peter and Cowhey, Isaac and Etzioni, Oren and Khot, Tushar and Sabharwal, Ashish and Schoenick, Carissa and Tafjord, Oyvind},
  journal={arXiv preprint arXiv:1803.05457},
  year={2018}
}

@inproceedings{lin2022truthfulqa,
  title={Truthfulqa: Measuring how models mimic human falsehoods},
  author={Lin, Stephanie and Hilton, Jacob and Evans, Owain},
  booktitle={Proceedings of the Annual Meeting of the Association for Computational Linguistics},
  pages={3214--3252},
  year={2022}
}

@article{karmarkar1984new,
  title        = {A New Polynomial-Time Algorithm for Linear Programming},
  author       = {Karmarkar, Narendra},
  journal      = {Combinatorica},
  volume       = {4},
  number       = {4},
  pages        = {373--395},
  year         = {1984},
  doi          = {10.1007/BF02579150}
}
\bibliographystyle{icml2026}

\newpage
\appendix
\onecolumn
\section{Notations}
\label{appendix:notations}
\begin{table}[H]
\centering
\caption{Notation used in \TimelyFreeze{}.}
\label{tab:notation}
\begin{tabular}{c l}
\toprule
\textbf{Symbol} & \textbf{Description} \\
\midrule
$\mathcal{G}=(\mathcal{V},\mathcal{E})$ 
& Directed acyclic graph (DAG) representing a pipeline schedule. \\

$\mathcal{V}$ 
& Set of action nodes in the pipeline DAG. \\

$\mathcal{E}$ 
& Set of directed edges encoding precedence constraints. \\

$v_{(a,m,s)}$ 
& Action node with action type $a$, microbatch $m$, at stage $s$. For simplicity, it is also denoted as $v_i$. \\

$a \in \{f,b\}$ 
& Action type: forward ($f$) or backward ($b$). \\

$m \in \{1,\dots,M\}$ 
& Microbatch index; $M$ is the total number of microbatches. \\

$s \in \{1,\dots,S\}$ 
& Pipeline stage index; $S$ is the number of stages. \\

$v_s,\; v_d$ 
& Source and destination nodes representing batch start and completion. \\

$P_i$ 
& Start time of action node $v_i$, determined by the longest path in $\mathcal{G}$. \\

$P_d$ 
& Start time of destination node $v_d$, corresponding to batch execution time. \\

$w_i$ 
& Execution time (duration) of action node $v_i$. \\

$w_i^{\min},\; w_i^{\max}$ 
& Minimum and maximum execution times of node $v_i$ under full and no freezing. \\

$r_i$ 
& Expected freeze ratio of action node $v_i$. \\

$r_{\max}$ 
& Maximum average freeze ratio allowed per pipeline stage specified by the user. \\

$t\in \{1,\cdots,T_{\mathrm{tot}}\}$
& Training step; $T_{\mathrm{tot}}$ is the total number of training steps. \\

$T_w,\; T_m,\;T_f$
& The last step of the warm-up, monitoring, and progressive freezing phase, respectively. \\

$N$
& Total number of model parameters. \\
$N_s$  
& Number of parameters in stage $s$ and $N=\sum_{s=1}^S N_s$ holds. \\

$\mathbb{I}_{t,i} $
& Binary freezing mask of the action node $v_i$ at step $t$; \\

& $\mathbb{I}^{(j)}_{t,i} = 1$ if parameter $j$ associated with $v_i$
is selected to be frozen at step $t$, and $0$ otherwise.\\

\bottomrule
\end{tabular}
\end{table}

\section{Edge Construction Rules}
\label{appendix:edge_construction_rules}
\begin{enumerate}[leftmargin=1.2em, label=\arabic*.]
    \item \emph{Source and destination connections:} 
    The source and destination nodes are connected to the actions nodes as
    \begin{displaymath}
    v_s \to v_{(f,1,1)}, \quad v_{(b,M,1)} \to v_d.
    \end{displaymath}
    Here, a batch starts with the forward action of the first microbatch at the first stage and ends with the backward action of the last microbatch at the first stage.
    
    \item \emph{Intra-stage dependencies:} 
    Within each stage, any action of microbatch $m+1$ starts only after the corresponding action of microbatch $m$ finishes. Moreover, each backward action must follow its corresponding forward action.
    \begin{displaymath}
        v_{(a,m,s)} \to v_{(a,m+1,s)}, \quad
        v_{(f,m,s)} \to v_{(b,m,s)}.
    \end{displaymath}
    \item \emph{Inter-stage dependencies:} A forward action at stage $s+1$ can begin only after the forward action at stage $s$ completes, and a backward action at stage $s-1$ can begin only after the backward action at stage $s$ completes.
    \begin{displaymath}
    v_{(f,m,s)} \to v_{(f,m,s+1)},\quad
    v_{(b,m,s)} \to v_{(b,m,s-1)}.
    \end{displaymath}
    \item \emph{Pipeline schedule dependencies:} 
    For action nodes assigned to the same GPU rank, the execution order defined by the pipeline schedule must be respected. This constraint depends on the pipeline schedule. For example, under GPipe, any backward action can begin only after the last forward microbatch completes (i.e., $v_{(f,M,s)} \to v_{(b,1,s)}$), whereas 1F1B does not impose this constraint. 
\end{enumerate}

\clearpage
\section{Algorithms}
\subsection{TimelyFreeze's Freezing Algorithm}
\label{appendix:step_level_algorithm}
\SetKwComment{Comment}{$\triangleright$\ }{}
\SetKwProg{Fn}{Function}{}{}
\SetKw{Continue}{continue}

\begin{algorithm}[H]
\setcounter{AlgoLine}{0}
\caption{Step-Level Freeze Ratio Scheduling (Stage $s$)}
\label{alg:timelyfreeze_steps}
\DontPrintSemicolon

\KwIn{
Phase boundaries $\{T_w, T_m, T_f, T_{total}\}$;
Parameter set $\gP_s$ and
action set $\mathcal{V}_s$ corresponding to stage $s$.
}

\For{$t \gets 1$ \KwTo $T_{total}$}{
  \If{$t \le T_w$}{
    \textbf{continue} \Comment*[r]{Warm-up (no freezing)}
  }

  \ForEach{$v_i \in \mathcal{V}_g$}{
    Record execution time $\tau_{i,t}$
  }

  \uIf{$T_w < t \le T_m/2$}{
    \ForEach{$v_i$}{
      $\textrm{AFR}_{i,t} \gets 0$ \Comment*[r]{Upperbound Monitoring}
    }
  }

  \uElseIf{$T_m/2 < t \le T_m$}{
    \ForEach{$v_i$}{
      $\textrm{AFR}_{i,t} \gets 1$ \Comment*[r]{Lowerbound Monitoring}
    }
  }

  \uElseIf{$t = T_m$}{
    $\{\tau\}\gets $ All-gather $\{\tau_{i,t'} : v_i \in \mathcal{V}_s,\; t'=1,\dots,T_m\}$ across stages $\{s=1,\cdots, S\}$\;

    $\{r_i\}, \{P_i\} \gets \textsc{LPSolver}(
    \{\tau\})$ \Comment*[r]{LP Formulation}
  }

  \Else{ 
    \ForEach{$v_i \in \mathcal{V}_s$}{
      $\textrm{AFR}_{i,t} \gets 
        \min\;\Big\{r_i, \;\;r_i \cdot \dfrac{t - T_m}{T_f - T_m}\Big\}$ \Comment*[r]{Freezing}

      Randomly select $I_i \subseteq \gP_s$
      such that $E[|I_i|] = \textrm{AFR}_{i,t} \cdot |\gP_s|$\;

      Freeze parameters in $I_i$\;
    }
  }
}
\end{algorithm}

\subsection{Hybrid Variants Algorithm}
\label{appendix:hybrid_variants_algorithm}

\SetKwComment{Comment}{$\triangleright$\ }{}
\SetKwProg{Fn}{Function}{}{}
\SetKw{Continue}{continue}

\begin{algorithm}[H]
\setcounter{AlgoLine}{0}
\caption{Metric-Aware Parameter Selection for \TimelyAPF{} / \TimelyAuto{}}
\label{alg:hybrid_variants}
\DontPrintSemicolon

\KwIn{
Expected freeze ratios $\{r_i\}$ from TimelyFreeze; Current step $t$ and stage $s$;
Number of parameters $N_s$ in stage $s$;
Baseline method's freezing mask $\mathbb{I}_{base,t}$ at step $t$.
}

\KwOut{
Binary freezing mask $\mathbb{I}_t \in \{0,1\}^{n_s}$
}

\ForEach{$v_i\in \gV_s$}{
  $N_{freeze} \gets \lfloor r_i \cdot N_s \rfloor$ \Comment*[r]{Number of freezing parameters}

  \uIf{$\|\mathbb{I}_{t,i}\|_1
 = N_{freeze}$}{
    $\mathbb{I}_{t,i} \gets \mathbb{I}_{base,t,i}$
  }
  \uElseIf{$\|\mathbb{I}_{t,i}\|_1
 < N_{freeze} $}{
    $\gP_{add} \gets
    \text{Uniformly sample } (N_{\text{freeze}} - \|\mathbb{I}_{t,i}\|_1)
    \text{ indices from } \{j \mid \mathbb{I}^{(j)}_{t,i} = 0\}$

    $\mathbb{I}^{(j)}_{t,i} \gets
    \mathbb{I}\!\left[
    \mathbb{I}^{(j)}_{\text{base},t,i} = 1 \;\lor\; j \in \gP_{add}
    \right],
    \quad \forall j \in \{1,\dots,n_s\}$
  }
  \Else{
    $\gP_{sub} \gets
    \text{Uniformly sample } (\|\mathbb{I}_{t,i}\|_1 - N_{\text{freeze}}) 
    \text{ indices from } \{j \mid \mathbb{I}^{(j)}_{t,i} = 1\}$

    $\mathbb{I}^{(j)}_{t,i} \gets
    \mathbb{I}\!\left[
    \mathbb{I}^{(j)}_{\text{base},t,i} = 1 \;\land\; j \notin \gP_{sub}
    \right],
    \quad \forall j \in \{1,\dots,n_s\}$
  }

    Freeze all parameters $j$ such that $\mathbb{I}^{(j)}_{t,i} = 1$\;

}
\end{algorithm}

\section{Time-to-Accuracy Analysis}
\label{appendix:tta_analysis}

In this section, we analyze the \emph{wall-clock time-to-accuracy} (TTA), defined as
\begin{equation}
    \mathrm{TTA}_\varepsilon^{\mathrm{alg}} := 
    \mathbb{E}\!\left[\sum_{t=1}^{T_\varepsilon^{\mathrm{alg}}}\tau_t^{\mathrm{alg}}\right].
    \label{eq:def_TTA}
\end{equation}
Here, $T_\varepsilon^{\mathrm{alg}}$ denotes the (sufficient) number of training steps required by freezing algorithm $\mathrm{alg}$ to reach a target error $\varepsilon$ under the specified convergence criterion, and $\tau_t^{\mathrm{alg}}$ is the execution time of step $t$ under $\mathrm{alg}$.
Accordingly, we decompose the TTA analysis into two components:
(1) the iteration complexity $T_\varepsilon^{\mathrm{alg}}$ (\autoref{appendix:convergence_analysis}), and
(2) the per-step execution time $\tau_t^{\mathrm{alg}}$ (\autoref{appendix:subsec:execution_time_tau_analysis}).
Finally, we combine these two results to compare the overall TTA of TimelyFreeze with that of the no-freezing baseline in \autoref{appendix:subsec:tta_comparison}.

\subsection{Convergence Analysis}
\label{appendix:convergence_analysis}


Under the assumptions in \autoref{app:assumptions} and the lemmas in \autoref{app:lemmas},
we analyze the convergence behavior of a SGD framework under TimelyFreeze and characterize the number of training steps required to reach a target stationary level.
Specifically, we derive the iteration complexity under TimelyFreeze (denoted by $T_\varepsilon^{\mathrm{ours}}$) in \autoref{appendix:subsec:iteration_complexity_timelyfreeze},
and compare it with that of the \emph{no-freezing baseline} ($T_\varepsilon^{\mathrm{base}}$) in \autoref{appendix:subsec:iteration_complexity_baseline}.

A training step is indexed by $t \in \{1,\dots,T\}$, and each step consists of $M$ microbatches. For simplicity, we assume that a gradient update is performed after each batch execution, so that the number of training steps coincides with the number
of executed batches. 
Let $\theta_t \in \mathbb{R}^N$ denote the model parameters at step $t$, where $N$ is the number of trainable parameters, and define the population objective as
\begin{equation}
F(\theta) := \mathbb{E}_{\xi \sim \mathcal{D}}\big[\ell(\theta;\xi)\big].
\end{equation}
For each microbatch $m \in \{1,\dots,M\}$, we define the stochastic gradient
\begin{equation}
g_{t,m} := \nabla_\theta \ell(\theta_t;\xi_{t,m}).
\end{equation}

\textbf{Iteration Complexity.}
For an algorithm $\mathrm{alg}\in\{\mathrm{ours},\mathrm{base}\}$, we define its \emph{iteration complexity} $T_{\varepsilon}^{\mathrm{alg}}$ as
the smallest number of training steps required to reach an
$\varepsilon$-stationary point, measured by the expected squared gradient norm, i.e.,

\begin{equation}
T_{\varepsilon}^{\mathrm{alg}}
\;:=\;
\inf\Big\{T\in\mathbb{N}:\;
\frac{1}{T}\sum_{t=1}^{T}
\mathbb{E}\big[\|\nabla F(\theta_t)\|^2\big]
\;\le\;
\varepsilon
\Big\}.
\label{def:Talg_def}
\end{equation}

\textbf{Freezing and Update Masks (Our Convention).}
Let $\mathbb{I}_{t,m} \in \{0,1\}^N$ denote the \emph{freezing mask} applied at step $t$ and microbatch $m$, where for each parameter $j\in\{1,\dots,N\}$,
\begin{equation}
\mathbb{I}_{t,m}^{(j)} = 1 \ \text{indicates that parameter $j$ is frozen (no update)},\quad
\mathbb{I}_{t,m}^{(j)} = 0 \ \text{indicates that parameter $j$ is updated}.
\end{equation}
For convenience, we define the corresponding \emph{update mask} as
\begin{equation}
\mathbb{U}_{t,m}^{(j)} := \mathbf{1} - \mathbb{I}_{t,m}^{(j)} \in \{0,1\}^N,
\end{equation}
where $\mathbb{U}_{t,m}^{(j)} = 1$ indicates that parameter $j$ is updated and $\mathbb{U}_{t,m}^{(j)} = 0$ otherwise.

\textbf{Update Rule.}
The model parameters are updated using masked microbatch gradients as
\begin{equation}
\theta_{t+1} = \theta_t - \eta\,\Delta_t,
\qquad
\Delta_t := \frac{1}{M}\sum_{m=1}^M \big(\mathbb{U}_{t,m}\odot g_{t,m}\big),
\label{eq:update_rule}
\end{equation}
where $\eta > 0$ is the stepsize.
For simplicity, we assume a constant stepsize $\eta$ across training steps.

\subsubsection{Assumptions}
\label{app:assumptions}
\begin{assumption}[$L$-Smoothness] 
\label{assump:l-smoothness}
$F$ is $L$-smooth:
\begin{equation}
F(y)\le F(x) + \langle \nabla F(x), y-x\rangle + \frac{L}{2}\|y-x\|^2,\quad \forall x,y.
\label{eq:smooth}
\end{equation}
\end{assumption}

\begin{assumption}[Unbiased Gradients and Bounded Variance]
\label{assump:unbiased_bounded_var}
For all step $t$ and microbatch $m$,
\begin{equation}
\mathbb{E}[g_{t,m}\mid \theta_t] = \nabla F(\theta_t),
\qquad
\mathbb{E}\big[\|g_{t,m}-\nabla F(\theta_t)\|^2\mid \theta_t\big]\le \sigma^2.
\label{eq:unbiased_var}
\end{equation}
\end{assumption}

\begin{assumption}[Conditional Independence of Freezing]
\label{assump:mask_independence}
Conditioned on $\theta_t$, the mask and the stochastic gradient are independent:
\begin{equation}
\mathbb{I}_{t,m}\ \perp\ g_{t,m}\ \mid \theta_t
\qquad\Longleftrightarrow\qquad
\mathbb{U}_{t,m}\ \perp\ g_{t,m}\ \mid \theta_t.
\label{eq:mask_indep}
\end{equation}
\end{assumption}

\begin{assumption}[Bounded Freeze Ratio]
Let the per-parameter \emph{update probability} be
\label{assump:bounded_freeze_ratio}
\begin{equation}
p_{t,m}^{(j)} := \mathbb{E}\big[\mathbb{U}_{t,m}^{(j)}\mid \theta_t\big]
= 1-\mathbb{E}\big[\mathbb{I}_{t,m}^{(j)}\mid \theta_t\big].
\end{equation}
Assume it is uniformly lower bounded by
\begin{equation}
 p_{\min} := 1-r_{\max} \le p_{t,m}^{(j)} ,\quad \forall t,m,j,
\label{eq:pmin}
\end{equation}
where $r_{max}\in[0,1]$ is a user-specified \emph{maximum freeze ratio} as first mentioned in \autoref{eq:max_freeze_ratio}. 
\end{assumption}

\begin{assumption}[Conditional Independence across Microbatches]    
\label{assump:microbatch_independence}
Conditioned on $\theta_t$, $\{(\mathbb{I}_{t,m},g_{t,m})\}_{m=1}^M$ are independent across $m$.
\end{assumption}

\subsubsection{Auxiliary Definitions}
\label{app:aux_definitions}

\begin{definition}[Average Update Probability]
\label{def:avg_update_p}
    \begin{equation}
\bar p_t := \frac{1}{M}\sum_{m=1}^M p_{t,m}.
\label{eq:def_avg_p}
\end{equation}
This quantity averages per-microbatch update probabilities over $M$ microbatches.
\end{definition}

\begin{definition}[Effective Update Probability]
\label{def:peff}
We define the \emph{effective update probability} at step $t$ as
\begin{equation}
p_{\mathrm{eff},t}
\;:=\;
\frac{\sum_{j=1}^N \bar p_t^{(j)} \big(\partial_j F(\theta_t)\big)^2}
{\|\nabla F(\theta_t)\|^2},
\label{eq:def_peff}
\end{equation}
whenever $\nabla F(\theta_t)\neq 0$ (and set $p_{\mathrm{eff},t}:=1$ otherwise).
This quantity captures the fraction of gradient energy that is effectively updated
under freezing.
\end{definition}

\begin{definition}[Average Effective Update Probability]
\label{def:peff_bar}
For a given training horizon $T$, we define
\begin{equation}
\bar p_{\mathrm{eff}}
\;:=\;
\frac{\sum_{t=1}^T \mathbb{E}\!\left[p_{\mathrm{eff},t}\,\|\nabla F(\theta_t)\|^2\right]}
{\sum_{t=1}^T \mathbb{E}\!\left[\|\nabla F(\theta_t)\|^2\right]}.
\label{eq:def_peff_bar}
\end{equation}
This quantity summarizes the effective update probabilities over the entire
training horizon.
\end{definition}

\subsubsection{Lemmas}
\label{app:lemmas}

\begin{lemma}[Mean of Masked Microbatch Gradient]
\label{lem:mean_masked}
Under Assumptions~\ref{assump:unbiased_bounded_var} and
\ref{assump:mask_independence}, for each $m$,
\begin{equation}
\mathbb{E}\big[\mathbb{U}_{t,m}\odot g_{t,m}\mid \theta_t\big]
=
\mathbb{E}\big[\mathbb{U}_{t,m}\mid \theta_t\big]\odot \mathbb{E}\big[g_{t,m}\mid \theta_t\big]
=
p_{t,m}\odot \nabla F(\theta_t),
\end{equation}
where $p_{t,m}\in[0,1]^N$ stacks $\{p_{t,m}^{(j)}\}_j$.
Hence, using \autoref{eq:update_rule} and Definition~\ref{def:avg_update_p},
\begin{equation}
\mathbb{E}[\Delta_t\mid \theta_t]
=
\bar p_t \odot \nabla F(\theta_t).
\end{equation}
\end{lemma}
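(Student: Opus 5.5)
The plan is to establish the two identities in turn: first componentwise for a single microbatch, then by averaging over microbatches.

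For a fixed step $t$, microbatch $m$ and coordinate $j$, the $j$-th entry of $\mathbb{U}_{t,m}\odot g_{t,m}$ is the scalar product $\mathbb{U}_{t,m}^{(j)}\, g_{t,m}^{(j)}$. By Assumption~\ref{assump:mask_independence}, $\mathbb{U}_{t,m}$ and $g_{t,m}$ are conditionally independent given $\theta_t$. Conditional independence of the vectors implies conditional independence of any pair of their coordinates, since these are measurable functions of the vectors. Hence the conditional expectation factorizes:
\begin{equation*}
\mathbb{E}\big[\mathbb{U}_{t,m}^{(j)} g_{t,m}^{(j)}\mid\theta_t\big]=\mathbb{E}\big[\mathbb{U}_{t,m}^{(j)}\mid\theta_t\big]\,\mathbb{E}\big[g_{t,m}^{(j)}\mid\theta_t\big].
\end{equation*}
The only integrability requirement is that $g_{t,m}^{(j)}$ have a finite conditional first moment. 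This follows from the bounded-variance part of Assumption~\ref{assump:unbiased_bounded_var}, since $\mathbb{E}[\|g_{t,m}\|^2\mid\theta_t]\le\|\nabla F(\theta_t)\|^2+\sigma^2<\infty$. The mask is bounded in $\{0,1\}$, so no further condition is needed.

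Next I substitute the two definitions. The first factor is $p_{t,m}^{(j)}$ by the definition in Assumption~\ref{assump:bounded_freeze_ratio}. The second factor is $\partial_j F(\theta_t)$ by the unbiasedness in Assumption~\ref{assump:unbiased_bounded_var}. Stacking over $j$ gives the first display, $\mathbb{E}[\mathbb{U}_{t,m}\odot g_{t,m}\mid\theta_t]=p_{t,m}\odot\nabla F(\theta_t)$.

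For $\Delta_t$, I apply linearity of conditional expectation to the update rule \eqref{eq:update_rule}:
\begin{equation*}
\mathbb{E}[\Delta_t\mid\theta_t]=\frac1M\sum_{m=1}^M p_{t,m}\odot\nabla F(\theta_t)=\Big(\frac1M\sum_{m=1}^M p_{t,m}\Big)\odot\nabla F(\theta_t).
\end{equation*}
The bracketed average is $\bar p_t$ by Definition~\ref{def:avg_update_p}, which gives the second identity. Assumption~\ref{assump:microbatch_independence} is not needed here; it matters only for second-moment computations later.

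There is no real obstacle. The only point needing care is justifying that conditional independence of the vectors yields coordinatewise factorization of the conditional expectation of the Hadamard product, together with the integrability just noted.
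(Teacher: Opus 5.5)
Your proof is correct and takes the same route as the paper, whose argument is simply the chain of equalities in the lemma statement: conditional independence (Assumption~\ref{assump:mask_independence}) factors $\mathbb{E}[\mathbb{U}_{t,m}\odot g_{t,m}\mid\theta_t]$, unbiasedness identifies the gradient factor as $\nabla F(\theta_t)$, and linearity over the $M$ microbatches produces $\bar p_t$. Your coordinatewise justification of the factorization and your integrability check via the bounded second moment fill in details the paper leaves implicit, and you are right that Assumption~\ref{assump:microbatch_independence} is not needed here.
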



\begin{lemma}[Descent Inner Product (Worst-case Bound)]
\label{lem:inner}
Under Assumptions~\ref{assump:mask_independence} and
\ref{assump:bounded_freeze_ratio},
\begin{equation}
\Big\langle \nabla F(\theta_t),\, \mathbb{E}[\Delta_t\mid \theta_t]\Big\rangle
=
\sum_{j=1}^N \bar p_t^{(j)} \big(\partial_j F(\theta_t)\big)^2
\ \ge\
p_{\min}\,\|\nabla F(\theta_t)\|^2.
\label{eq:inner_lower}
\end{equation}
\end{lemma}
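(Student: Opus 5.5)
The identity follows from Lemma~\ref{lem:mean_masked}, and the inequality from the pointwise lower bound in Assumption~\ref{assump:bounded_freeze_ratio}. We first take the conditional mean of the update, then expand the inner product coordinate-wise, then bound each coordinate's weight from below.

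\textbf{Step 1 (identity).} By Lemma~\ref{lem:mean_masked}, $\mathbb{E}[\Delta_t\mid\theta_t]=\bar p_t\odot\nabla F(\theta_t)$. That lemma rests on Assumption~\ref{assump:mask_independence}, which lets the conditional expectation of the product $\mathbb{U}_{t,m}\odot g_{t,m}$ factor. The inner product of a vector with a coordinate-wise rescaling of itself is
\begin{equation*}
\Big\langle \nabla F(\theta_t),\, \bar p_t\odot \nabla F(\theta_t)\Big\rangle=\sum_{j=1}^N \bar p_t^{(j)}\big(\partial_j F(\theta_t)\big)^2 .
\end{equation*}
This gives the stated equality. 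We note that $\theta_t$ is fixed under the conditioning, so $\nabla F(\theta_t)$ can be pulled outside the expectation.

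\textbf{Step 2 (lower bound).} By Definition~\ref{def:avg_update_p}, $\bar p_t^{(j)}=\frac1M\sum_{m}p_{t,m}^{(j)}$ is an average of quantities each at least $p_{\min}=1-r_{\max}$ by \eqref{eq:pmin}. Hence $\bar p_t^{(j)}\ge p_{\min}$ for every $j$. Since every weight $(\partial_jF(\theta_t))^2$ is nonnegative, we get
\begin{equation*}
\sum_j \bar p_t^{(j)}(\partial_jF)^2\ge p_{\min}\sum_j(\partial_jF)^2=p_{\min}\|\nabla F(\theta_t)\|^2 .
\end{equation*}

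\textbf{Main obstacle.} The argument is essentially routine. The only delicate point is being explicit that the unbiasedness in Assumption~\ref{assump:unbiased_bounded_var}, used inside Lemma~\ref{lem:mean_masked}, is what justifies replacing $\mathbb{E}[g_{t,m}\mid\theta_t]$ by $\nabla F(\theta_t)$. We cite the lemma rather than re-derive it. The lower bound also needs the averaging step, because Assumption~\ref{assump:bounded_freeze_ratio} is stated per microbatch rather than for $\bar p_t$.
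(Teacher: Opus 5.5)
Your proposal is correct and is the argument the paper intends. The paper states this lemma without a written proof: the identity comes from Lemma~\ref{lem:mean_masked}, and the inequality comes from averaging the per-microbatch bound of Assumption~\ref{assump:bounded_freeze_ratio} over $m$. You are also right that the identity implicitly uses the unbiasedness in Assumption~\ref{assump:unbiased_bounded_var}, which the lemma's hypothesis list leaves out.
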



\begin{lemma}[Descent Inner Product (Exact Characterization)]
\label{lem:inner_peff}
Using Definition~\ref{def:peff}, we have
\begin{equation}
\Big\langle \nabla F(\theta_t),\, \mathbb{E}[\Delta_t\mid \theta_t]\Big\rangle
=
p_{\mathrm{eff},t}\,\|\nabla F(\theta_t)\|^2,
\end{equation}
and hence $p_{\mathrm{eff},t}\in[p_{\min},1]$.
\end{lemma}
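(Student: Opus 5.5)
The claim follows by combining Lemma~\ref{lem:mean_masked} with Definition~\ref{def:peff}, and then reading off the range of $p_{\mathrm{eff},t}$ from Assumption~\ref{assump:bounded_freeze_ratio}.

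\textbf{The identity.} By Lemma~\ref{lem:mean_masked}, $\mathbb{E}[\Delta_t\mid\theta_t]=\bar p_t\odot\nabla F(\theta_t)$. Taking the inner product with $\nabla F(\theta_t)$ coordinatewise gives
\[
\big\langle \nabla F(\theta_t),\mathbb{E}[\Delta_t\mid\theta_t]\big\rangle=\sum_{j=1}^N \bar p_t^{(j)}\big(\partial_j F(\theta_t)\big)^2 .
\]
This is the equality already stated in Lemma~\ref{lem:inner}. Now split into two cases.

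\emph{Case $\nabla F(\theta_t)\neq 0$.} Multiply and divide by $\|\nabla F(\theta_t)\|^2$. By Definition~\ref{def:peff}, the resulting quotient is exactly $p_{\mathrm{eff},t}$, which gives the claimed identity.

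\emph{Case $\nabla F(\theta_t)=0$.} Both sides vanish. The convention $p_{\mathrm{eff},t}:=1$ keeps the identity valid.

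\textbf{The range $p_{\mathrm{eff},t}\in[p_{\min},1]$.} The case $\nabla F(\theta_t)=0$ is trivial since $p_{\mathrm{eff},t}=1$. Otherwise, $p_{\mathrm{eff},t}$ is a convex combination of the values $\bar p_t^{(j)}$, with weights $(\partial_jF)^2/\|\nabla F\|^2$ that are nonnegative and sum to one. It therefore suffices to show $\bar p_t^{(j)}\in[p_{\min},1]$ for every $j$:
\begin{itemize}
\item Each $p_{t,m}^{(j)}=\mathbb{E}[\mathbb{U}_{t,m}^{(j)}\mid\theta_t]$ lies in $[0,1]$, because $\mathbb{U}_{t,m}^{(j)}$ is $\{0,1\}$-valued.
\item Each $p_{t,m}^{(j)}\ge p_{\min}$ by Assumption~\ref{assump:bounded_freeze_ratio}.
\item Averaging over $m$ preserves both bounds.
\end{itemize}
The lower bound is consistent with Lemma~\ref{lem:inner}, which gives the same conclusion directly.

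There is no real obstacle. The only points needing care are the degenerate zero-gradient case and confirming that the convex-combination weights are well defined. Both are handled above.
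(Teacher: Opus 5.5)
Your proposal is correct and matches the argument the paper intends. The paper gives no separate proof: it treats the identity as immediate from the coordinatewise expression $\sum_j \bar p_t^{(j)}(\partial_j F(\theta_t))^2$ in Lemma~\ref{lem:inner} combined with Definition~\ref{def:peff}, with the range following from $p_{\min}\le \bar p_t^{(j)}\le 1$, and your handling of the zero-gradient convention and the convex-combination reading of $p_{\mathrm{eff},t}$ just make those implicit steps explicit.
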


\begin{lemma}[Second Moment of Masked Microbatch Average]
\label{lem:second}
Under Assumptions~\ref{assump:unbiased_bounded_var} and \ref{assump:microbatch_independence},
\begin{equation}
\mathbb{E}\big[\|\Delta_t\|^2\mid \theta_t\big]
\le
\Big(1+\frac{1}{M}\Big)\|\nabla F(\theta_t)\|^2 + \frac{\sigma^2}{M}.
\label{eq:second_moment}
\end{equation}
\end{lemma}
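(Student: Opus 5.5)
The plan is a bias--variance decomposition of the microbatch average, conditional on $\theta_t$, with both pieces bounded using only the fact that masking can never increase a coordinate's magnitude. Write $X_m := \mathbb{U}_{t,m}\odot g_{t,m}$, so that $\Delta_t = \frac{1}{M}\sum_{m=1}^M X_m$. All expectations below are conditional on $\theta_t$.

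First, I expand the second moment as
\begin{equation*}
\mathbb{E}\big[\|\Delta_t\|^2\mid\theta_t\big] = \big\|\mathbb{E}[\Delta_t\mid\theta_t]\big\|^2 + \mathbb{E}\big[\|\Delta_t-\mathbb{E}[\Delta_t\mid\theta_t]\|^2\mid\theta_t\big].
\end{equation*}
By Assumption~\ref{assump:microbatch_independence}, the $X_m$ are conditionally independent across $m$. The cross terms in the centered part therefore vanish, and the variance term equals $\frac{1}{M^2}\sum_{m}\mathbb{E}\big[\|X_m-\mathbb{E}X_m\|^2\mid\theta_t\big]$.

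Second, the bias term. By Lemma~\ref{lem:mean_masked}, $\mathbb{E}[\Delta_t\mid\theta_t]=\bar p_t\odot\nabla F(\theta_t)$. Each coordinate satisfies $\bar p_t^{(j)}\in[0,1]$, so $\|\bar p_t\odot\nabla F(\theta_t)\|^2\le\|\nabla F(\theta_t)\|^2$. This supplies the leading $1\cdot\|\nabla F(\theta_t)\|^2$.

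Third, the variance term. For each $m$ I use $\mathbb{E}\|X_m-\mathbb{E}X_m\|^2\le\mathbb{E}\|X_m\|^2$. Since $\mathbb{U}_{t,m}\in\{0,1\}^N$, the pointwise bound $\|X_m\|^2\le\|g_{t,m}\|^2$ holds. Assumption~\ref{assump:unbiased_bounded_var} then gives
\begin{equation*}
\mathbb{E}\|g_{t,m}\|^2=\|\nabla F(\theta_t)\|^2+\mathbb{E}\|g_{t,m}-\nabla F(\theta_t)\|^2\le\|\nabla F(\theta_t)\|^2+\sigma^2,
\end{equation*}
where the cross term vanishes by unbiasedness. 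Summing over the $M$ microbatches and dividing by $M^2$ yields $\frac{1}{M}\|\nabla F(\theta_t)\|^2+\frac{\sigma^2}{M}$. Adding the bias bound gives exactly \eqref{eq:second_moment}.

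The only delicate step is the first: the cross terms vanish only if the independence holds jointly for the pairs (mask, gradient), which is precisely what Assumption~\ref{assump:microbatch_independence} states. I also need the mean formula for $\Delta_t$, which I take from Lemma~\ref{lem:mean_masked}. Apart from that the bound is crude but sufficient. In particular, it deliberately discards the $p$-dependence of the variance, so the result holds uniformly in the freeze ratios.
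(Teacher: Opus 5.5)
Your proof is correct and matches the paper's sketch step for step: the split $\mathbb{E}\|\Delta_t\|^2=\|\mathbb{E}[\Delta_t\mid\theta_t]\|^2+\mathrm{Var}(\Delta_t\mid\theta_t)$, the bias bound from Lemma~\ref{lem:mean_masked} with $\bar p_t^{(j)}\in[0,1]$, and the variance bound via cross-microbatch independence, $\mathrm{Var}(X)\le\mathbb{E}\|X\|^2$, $\|\mathbb{U}_{t,m}\odot g_{t,m}\|\le\|g_{t,m}\|$, and bounded variance. As in the paper, invoking Lemma~\ref{lem:mean_masked} also brings in Assumption~\ref{assump:mask_independence}, which the statement does not list but which is genuinely needed: a mask correlated with the sign of $g_{t,m}$ can push $\|\mathbb{E}[\Delta_t\mid\theta_t]\|^2$ above $\|\nabla F(\theta_t)\|^2$, and the inequality then fails for large $M$.
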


\noindent\emph{Proof Sketch.}
Using the second-moment decomposition
$\mathbb{E}\|X\|^2 = \|\mathbb{E}X\|^2 + \mathrm{Var}(X)$ and Lemma~\ref{lem:mean_masked},
\[
\|\mathbb{E}[\Delta_t\mid\theta_t]\|^2 = \|\bar p_t\odot \nabla F(\theta_t)\|^2 \le \|\nabla F(\theta_t)\|^2.
\]
By conditional independence across $m$,
\[
\mathrm{Var}(\Delta_t\mid\theta_t)
=
\frac{1}{M^2}\sum_{m=1}^M \mathrm{Var}(\mathbb{U}_{t,m}\odot g_{t,m}\mid\theta_t)
\le
\frac{1}{M^2}\sum_{m=1}^M \mathbb{E}\big[\|\mathbb{U}_{t,m}\odot g_{t,m}\|^2\mid\theta_t\big].
\]
Since $\mathbb{U}_{t,m}^{(j)}\in\{0,1\}$, $\|\mathbb{U}_{t,m}\odot g_{t,m}\|^2\le \|g_{t,m}\|^2$,
and by Assumption~\ref{assump:unbiased_bounded_var},
$\mathbb{E}\|g_{t,m}\|^2\le \|\nabla F(\theta_t)\|^2+\sigma^2$.
Combining yields \autoref{eq:second_moment}.


\subsubsection{Iteration Complexity under TimelyFreeze}
\label{appendix:subsec:iteration_complexity_timelyfreeze}

\begin{theorem}[Iteration Complexity under TimelyFreeze]
\label{thm:main_peff}
Based on Assumptions~\ref{assump:l-smoothness}--\ref{assump:microbatch_independence},
the update rule~\autoref{eq:update_rule}, and
Lemmas~\ref{lem:inner_peff} and \ref{lem:second},
fix a training horizon $T\ge 1$ and let $\bar p_{\mathrm{eff}}$ be defined in
Definition~\ref{def:peff_bar}.
If the stepsize satisfies
\begin{equation}
\eta \ \le\ \frac{p_{\min}}{L(1+1/M)}
\;=\;
\frac{1-r_{\max}}{L(1+1/M)},
\label{eq:stepsize_cond_peff}
\end{equation}
then
\begin{equation}
\frac{1}{T}\sum_{t=1}^T \mathbb{E}\|\nabla F(\theta_t)\|^2
\le
\frac{2\big(F(\theta_1)-F^\star\big)}
{\bar p_{\mathrm{eff}}\,\eta\,T}
+
\frac{L\eta}{\bar p_{\mathrm{eff}}}\cdot \frac{\sigma^2}{M},
\label{eq:avg_grad_bound_peff}
\end{equation}
where $F^\star := \inf_\theta F(\theta)$.
\end{theorem}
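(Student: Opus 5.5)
We follow the standard nonconvex SGD descent argument, with Lemma~\ref{lem:inner_peff} replacing the usual descent inner product. Apply $L$-smoothness (Assumption~\ref{assump:l-smoothness}) with $x=\theta_t$ and $y=\theta_{t+1}=\theta_t-\eta\Delta_t$:
\[
F(\theta_{t+1})\le F(\theta_t)-\eta\langle\nabla F(\theta_t),\Delta_t\rangle+\tfrac{L\eta^2}{2}\|\Delta_t\|^2 .
\]
Taking the conditional expectation given $\theta_t$ gives two terms to control.
\begin{itemize}
\item For the linear term, Lemma~\ref{lem:inner_peff} gives exactly $-\eta\,p_{\mathrm{eff},t}\|\nabla F(\theta_t)\|^2$.
\item For the quadratic term, Lemma~\ref{lem:second} gives the bound $\tfrac{L\eta^2}{2}\big[(1+1/M)\|\nabla F(\theta_t)\|^2+\sigma^2/M\big]$.
\end{itemize}

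Next we absorb the positive gradient-norm part of the quadratic term into the descent term, and this is where the stepsize condition \eqref{eq:stepsize_cond_peff} enters. It gives $L\eta(1+1/M)\le p_{\min}\le p_{\mathrm{eff},t}$, where the last inequality is the range statement in Lemma~\ref{lem:inner_peff}. Hence
\[
\tfrac{L\eta^2}{2}(1+1/M)\|\nabla F\|^2\le \tfrac{\eta}{2}p_{\mathrm{eff},t}\|\nabla F\|^2 ,
\]
which leaves the one-step bound
\[
\mathbb{E}[F(\theta_{t+1})\mid\theta_t]\le F(\theta_t)-\tfrac{\eta}{2}p_{\mathrm{eff},t}\|\nabla F(\theta_t)\|^2+\tfrac{L\eta^2\sigma^2}{2M}.
\]
The key design choice is to compare against $p_{\mathrm{eff},t}$ itself rather than the worst case $p_{\min}$. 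This keeps the exact weight $p_{\mathrm{eff},t}$ on the gradient term, which is needed to produce $\bar p_{\mathrm{eff}}$ later.

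We then take total expectation and telescope over $t=1,\dots,T$. Using $F(\theta_{T+1})\ge F^\star$, this yields
\[
\tfrac{\eta}{2}\sum_{t=1}^T\mathbb{E}\big[p_{\mathrm{eff},t}\|\nabla F(\theta_t)\|^2\big]\le F(\theta_1)-F^\star+\tfrac{TL\eta^2\sigma^2}{2M}.
\]
By Definition~\ref{def:peff_bar}, the left-hand sum equals $\bar p_{\mathrm{eff}}\sum_t\mathbb{E}\|\nabla F(\theta_t)\|^2$. Dividing by $\tfrac{\eta}{2}\bar p_{\mathrm{eff}}T$ gives \eqref{eq:avg_grad_bound_peff}.

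Two degenerate cases need handling.
\begin{itemize}
\item If $\sum_t\mathbb{E}\|\nabla F(\theta_t)\|^2=0$, the claim is trivial, so we may assume the denominator of $\bar p_{\mathrm{eff}}$ is positive.
\item The convention $p_{\mathrm{eff},t}=1$ when $\nabla F(\theta_t)=0$ is harmless, since those terms vanish on both sides.
\end{itemize}

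The main obstacle is the rigor of the conditional-expectation step in the second paragraph. Lemma~\ref{lem:inner_peff} is stated via $\bar p_t$, which is a conditional expectation given $\theta_t$. We must check that $p_{\mathrm{eff},t}$ is $\theta_t$-measurable, so that multiplying by it and then taking outer expectations is legitimate. We must also confirm $p_{\mathrm{eff},t}\ge p_{\min}$ pointwise; this follows from Assumption~\ref{assump:bounded_freeze_ratio} because $p_{\mathrm{eff},t}$ is a weighted average of $\bar p_t^{(j)}\ge p_{\min}$. This pointwise bound is what licenses the absorption, so we will verify it explicitly before telescoping.
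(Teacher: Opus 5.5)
Your proposal is correct and follows the paper's proof step for step. The steps are: the smoothness expansion; conditional expectation via Lemmas~\ref{lem:inner_peff} and \ref{lem:second}; absorbing the $(1+1/M)$ term using $L\eta(1+1/M)\le p_{\min}\le p_{\mathrm{eff},t}$ so that the exact weight $\tfrac{\eta}{2}p_{\mathrm{eff},t}$ is kept; telescoping with $F(\theta_{T+1})\ge F^\star$; and converting through Definition~\ref{def:peff_bar}. Your extra care about the $\theta_t$-measurability of $p_{\mathrm{eff},t}$ and the degenerate zero-gradient cases is sound and goes slightly beyond what the paper writes out (note also that $\eta>0$ in the stepsize condition forces $p_{\min}>0$, which is what makes dividing by $\bar p_{\mathrm{eff}}$ legitimate).
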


\begin{proof}
By $L$-smoothness~\autoref{eq:smooth}, for $x=\theta_t$ and
$y=\theta_{t+1}=\theta_t-\eta\Delta_t$, we have
\[
F(\theta_{t+1})
\le
F(\theta_t)
-\eta\langle \nabla F(\theta_t), \Delta_t\rangle
+\frac{L\eta^2}{2}\|\Delta_t\|^2.
\]
Taking conditional expectation given $\theta_t$ and applying
Lemmas~\ref{lem:inner_peff} and \ref{lem:second} yield
\[
\mathbb{E}[F(\theta_{t+1})\mid\theta_t]
\le
F(\theta_t)
-\eta\,p_{\mathrm{eff},t}\|\nabla F(\theta_t)\|^2
+\frac{L\eta^2}{2}
\Big(\big(1+\tfrac{1}{M}\big)\|\nabla F(\theta_t)\|^2
+ \tfrac{\sigma^2}{M}\Big).
\]
Under the stepsize condition~\autoref{eq:stepsize_cond_peff} and since
$p_{\mathrm{eff},t}\ge p_{\min}$ for all $t$, we have
\[
\eta\,p_{\mathrm{eff},t}
-\frac{L\eta^2}{2}\Big(1+\tfrac{1}{M}\Big)
\;\ge\;
\frac{\eta\,p_{\mathrm{eff},t}}{2}.
\]
Therefore,
\[
\mathbb{E}[F(\theta_{t+1})]
\le
\mathbb{E}[F(\theta_t)]
-\frac{\eta}{2}\,
\mathbb{E}\!\left[p_{\mathrm{eff},t}\|\nabla F(\theta_t)\|^2\right]
+\frac{L\eta^2}{2}\cdot\frac{\sigma^2}{M}.
\]
Summing over $t=1,\dots,T$ and using $F(\theta_{T+1})\ge F^\star$ give
\[
\frac{\eta}{2}
\sum_{t=1}^T
\mathbb{E}\!\left[p_{\mathrm{eff},t}\|\nabla F(\theta_t)\|^2\right]
\le
F(\theta_1)-F^\star
+\frac{L\eta^2}{2}\cdot\frac{\sigma^2}{M}\,T.
\]
By Definition~\ref{def:peff_bar},
\[
\sum_{t=1}^T
\mathbb{E}\!\left[p_{\mathrm{eff},t}\|\nabla F(\theta_t)\|^2\right]
=
\bar p_{\mathrm{eff}}
\sum_{t=1}^T
\mathbb{E}\!\left[\|\nabla F(\theta_t)\|^2\right].
\]
Dividing both sides by
$\frac{\eta}{2}\,\bar p_{\mathrm{eff}}\,T$
yields~\autoref{eq:avg_grad_bound_peff}.
\end{proof}

\textbf{Iteration Complexity Extraction.}
Let $B_{\mathrm{ours}}(T)$ denote the right-hand side of
\autoref{eq:avg_grad_bound_peff}.
By Definition~\ref{def:Talg_def}, the iteration complexity
$T_{\varepsilon}^{\mathrm{ours}}$ is the smallest $T$ such that
$B_{\mathrm{ours}}(T)\le \varepsilon$.

Equivalently, if
\begin{equation}
\varepsilon
\;>\;
\frac{L\eta}{\bar p_{\mathrm{eff}}}\cdot \frac{\sigma^2}{M},
\label{eq:TF_noise_floor_cond_peff}
\end{equation}
which ensures that the target accuracy $\varepsilon$ lies above the
\emph{noise floor} induced by stochastic gradient variance, then $T_{\varepsilon}^{\mathrm{ours}}$ admits the closed-form bound
\begin{equation}
T_{\varepsilon}^{\mathrm{ours}}
\;=\;
\left\lceil
\frac{2\big(F(\theta_1)-F^\star\big)}
{\bar p_{\mathrm{eff}}\,\eta
\left(
\varepsilon-\frac{L\eta}{\bar p_{\mathrm{eff}}}\cdot\frac{\sigma^2}{M}
\right)}
\right\rceil.
\label{eq:Tours_closed_peff}
\end{equation}

In the noise-free case $\sigma=0$, the noise-floor term vanishes and
\autoref{eq:Tours_closed_peff} simplifies to
\begin{equation}
T_{\varepsilon}^{\mathrm{ours}}
\;=\;
\left\lceil
\frac{C}{\bar p_{\mathrm{eff}} \;\varepsilon}
\right\rceil,
\label{eq:Tours_clean_peff}
\end{equation}
where
\(
C := \frac{2(F(\theta_1)-F^\star)}{\eta}
\)
is independent of freezing.



\subsubsection{Comparison with No-Freezing Baseline}
\label{appendix:subsec:iteration_complexity_baseline}

For the no-freezing baseline, define $\bar g_t := \frac{1}{M}\sum_{m=1}^M g_{t,m}$ and
\begin{equation}
\theta_{t+1} = \theta_t - \eta\,\bar g_t.
\end{equation}
Repeating the same argument (with $\mathbb{U}_{t,m}\equiv \mathbf{1}$ and $r_{\max}=0$) yields, under
$\eta \le \frac{1}{L(1+1/M)}$,
\begin{equation}
\frac{1}{T}\sum_{t=1}^T \mathbb{E}\|\nabla F(\theta_t)\|^2
\le
\frac{2\big(F(\theta_1)-F^\star\big)}{\eta\,T}
+
L\eta\cdot \frac{\sigma^2}{M}.
\label{eq:avg_grad_bound_base}
\end{equation}
Accordingly, for $\varepsilon$ satisfying 
\begin{equation}
\varepsilon \;>\; {L\eta}\cdot \frac{\sigma^2}{M},
\label{eq:base_noise_floor_cond}
\end{equation}
$T_{\varepsilon}^{\mathrm{base}}$ can be expressed as
\begin{equation}
T_{\varepsilon}^{\mathrm{base}}
\;=\;
\left\lceil
\frac{2\big(F(\theta_1)-F^\star\big)}
{\eta\left(\varepsilon-{L\eta}\cdot\frac{\sigma^2}{M}\right)}
\right\rceil,
\label{eq:Tbase_closed}
\end{equation}
which can be simplified to 
\begin{equation}
T_{\varepsilon}^{\mathrm{ours}}
\;=\;
\left\lceil{\frac{C}{\varepsilon}}
\right\rceil,
\label{eq:Tbase_clean}
\end{equation}
in the noise-free case $\sigma=0$. (Note that $C = \frac{2(F(\theta_1)-F^\star)}{\eta}$.)

\begin{corollary}[Freeze-ratio Scaling]
\label{cor:ratio}
In the noise-free case $\sigma=0$, combining
\autoref{eq:Tours_clean_peff} and \autoref{eq:Tbase_clean} yields
\begin{equation}
T_\varepsilon^{\mathrm{ours}}
\;\approx\;
\frac{1}{\bar p_{\mathrm{eff}}}\,
T_\varepsilon^{\mathrm{base}},
\label{eq:ratio_clean}
\end{equation}
where $\bar p_{\mathrm{eff}}\in[1-r_{\max},\,1]$ denotes the average effective
update probability defined in Definition~\ref{def:peff_bar}.

That is, compared to the no-freezing baseline,
TimelyFreeze increases the iteration complexity by a factor inversely
proportional to the fraction of gradient energy that is effectively updated.
In the worst case, this recovers the scaling
$T_\varepsilon^{\mathrm{ours}} = \Theta\!\big(\frac{1}{1-r_{\max}}
T_\varepsilon^{\mathrm{base}}\big)$,
while in typical regimes where $\bar p_{\mathrm{eff}} \gg 1-r_{\max}$,
the resulting iteration complexity remains significantly tighter than
the worst-case bound implied by $r_{\max}$.

More generally, when $\sigma>0$, the above scaling continues to hold for the
optimization term, up to the standard noise-floor effects in
\autoref{eq:avg_grad_bound_peff} and \autoref{eq:avg_grad_bound_base}.

\end{corollary}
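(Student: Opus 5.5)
The plan is to obtain Corollary~\ref{cor:ratio} by dividing the two closed-form iteration complexities, and then to justify the range of $\bar p_{\mathrm{eff}}$ separately. With $\sigma=0$, \autoref{eq:Tours_clean_peff} gives $T_\varepsilon^{\mathrm{ours}}=\lceil C/(\bar p_{\mathrm{eff}}\varepsilon)\rceil$, and \autoref{eq:Tbase_clean} gives $T_\varepsilon^{\mathrm{base}}=\lceil C/\varepsilon\rceil$. Both use the same constant $C=2(F(\theta_1)-F^\star)/\eta$. For this to hold, both runs must use a common stepsize $\eta\le (1-r_{\max})/(L(1+1/M))$, which also satisfies the baseline condition $\eta\le 1/(L(1+1/M))$.

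Since $x\le\lceil x\rceil<x+1$, the ratio of the two ceilings lies between $\frac{C/(\bar p_{\mathrm{eff}}\varepsilon)}{C/\varepsilon+1}$ and $\frac{C/(\bar p_{\mathrm{eff}}\varepsilon)+1}{C/\varepsilon}$. Both bounds tend to $1/\bar p_{\mathrm{eff}}$ as $C/\varepsilon\to\infty$. This limit is the precise meaning of ``$\approx$'' in \autoref{eq:ratio_clean}: it becomes exact in the small-$\varepsilon$ regime, up to the rounding error of the ceiling.

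The main obstacle is that $\bar p_{\mathrm{eff}}$ is defined over a horizon $T$ (Definition~\ref{def:peff_bar}), so the bound is self-referential. I would handle this by evaluating $\bar p_{\mathrm{eff}}$ at $T=T_\varepsilon^{\mathrm{ours}}$ and stating the corollary at that horizon. Alternatively, when an honest bound is needed, I would replace $\bar p_{\mathrm{eff}}$ by its lower bound.

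That lower bound comes from the range $\bar p_{\mathrm{eff}}\in[1-r_{\max},1]$. Lemma~\ref{lem:inner_peff} gives $p_{\mathrm{eff},t}\in[p_{\min},1]$ pointwise. The bound $p_{\mathrm{eff},t}\le1$ follows because each $\bar p_t^{(j)}\le 1$, and the lower bound follows from Assumption~\ref{assump:bounded_freeze_ratio} via Lemma~\ref{lem:inner}. $\bar p_{\mathrm{eff}}$ is a weighted average of the $p_{\mathrm{eff},t}$ with nonnegative weights $\mathbb{E}\|\nabla F(\theta_t)\|^2$, so it inherits the same interval. In the degenerate case where all weights vanish, the convention $p_{\mathrm{eff},t}=1$ applies.

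Setting $\bar p_{\mathrm{eff}}=1-r_{\max}$ then gives the worst-case $\Theta\big(\frac{1}{1-r_{\max}}T_\varepsilon^{\mathrm{base}}\big)$ scaling claimed in the statement.

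For $\sigma>0$, I would repeat the same division on \autoref{eq:Tours_closed_peff} and \autoref{eq:Tbase_closed}. The optimization terms again differ by exactly the factor $1/\bar p_{\mathrm{eff}}$. The denominators differ only through the noise floors $L\eta\sigma^2/(\bar p_{\mathrm{eff}}M)$ and $L\eta\sigma^2/M$. These floors are negligible when $\varepsilon$ is well above them, which yields the ``up to noise-floor effects'' remark.
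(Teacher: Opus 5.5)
Your proposal is correct and follows the paper's own one-line argument. You divide \autoref{eq:Tours_clean_peff} by \autoref{eq:Tbase_clean} under a common stepsize so that $C$ cancels, take $\bar p_{\mathrm{eff}}\in[1-r_{\max},1]$ from Lemma~\ref{lem:inner_peff}, and read the $\sigma>0$ remark off \autoref{eq:Tours_closed_peff} and \autoref{eq:Tbase_closed}. The ceiling bounds, the explicit shared-stepsize requirement and the horizon dependence of $\bar p_{\mathrm{eff}}$ are extra care the paper omits, and there are two small nits: $\bar p_{\mathrm{eff}}$ weights $p_{\mathrm{eff},t}$ by $\|\nabla F(\theta_t)\|^2$ inside the expectation rather than by $\mathbb{E}\|\nabla F(\theta_t)\|^2$, so the interval should come from sandwiching $p_{\min}\|\nabla F(\theta_t)\|^2\le p_{\mathrm{eff},t}\|\nabla F(\theta_t)\|^2\le\|\nabla F(\theta_t)\|^2$ before taking expectations, and the $0/0$ case of Definition~\ref{def:peff_bar} needs its own convention, since the one for $p_{\mathrm{eff},t}$ does not fix it.
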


\subsection{Execution Time Analysis}
\label{appendix:subsec:execution_time_tau_analysis}

We now analyze how parameter freezing affects the per-step execution time $\tau_t^{\mathrm{ours}}$ under
TimelyFreeze, and relate the resulting speedup explicitly to the maximum
freeze ratio $r_{\max}$.

\textbf{Remark on Progressive Freezing.}
TimelyFreeze employs a \emph{progressive freezing} strategy, under which the
per-step execution time $\tau_t^{\mathrm{ours}}$ may vary during the transition
phase.
For analytical clarity, we focus on the \emph{stable freezing regime}, where the
freeze ratios have converged and the pipeline schedule becomes stationary, so
that the per-step execution time can be well approximated by a
constant $\tau^{\mathrm{ours}}$ (i.e.,
$\tau_t^{\mathrm{ours}} \approx \tau^{\mathrm{ours}}$ for all $t$).
This assumption is also practically justified, as the stable freezing regime dominates the
overall training time in long-running jobs (see \autoref{fig:thp_fr}).

\textbf{DAG-based Per-step Time Representation.}
Recall that the per-step execution time is characterized by the makespan of the
pipeline DAG (see \autoref{subsec:pipeline_dag_construction}).
We therefore identify the per-step execution time with the critical-path length
of the end-to-end graph,
\begin{equation}
\tau(\mathbf{r}) \;\equiv\; P_d(\mathbf{r}),
\end{equation}
where $P_d(\mathbf{r})$ denotes the longest-path value (i.e., the start time of
the destination node) under a freezing decision $\mathbf{r}$.

\textbf{Upper and Lower Makespan Envelopes.}
Given the measured execution-time bounds $w_i^{\min}$ and $w_i^{\max}$ for each
action block $i$, we define the corresponding makespan envelopes as
\begin{equation}
P_d^{\min} := P_d(\mathbf{1}),
\qquad
P_d^{\max} := P_d(\mathbf{0}),
\end{equation}
where $P_d(\cdot)$ denotes the longest-path operator on the DAG.
Here, $P_d(\mathbf{1})$ corresponds to the case where all parameters are frozen
($r_i=1$ and $w_i=w_i^{\min}$ for all action nodes $i$), while
$P_d(\mathbf{0})$ corresponds to the no-freezing case
($r_i=0$ and $w_i=w_i^{\max}$ for all action nodes $i$).
Accordingly, the no-freezing baseline satisfies
\begin{equation}
\tau^{\mathrm{base}} = P_d^{\max}.
\label{eq:tau_base_is_Pdmax}
\end{equation}

Since freezing primarily reduces the backward-related portion of execution time,
we approximate the achievable makespan reduction as scaling linearly with the
maximum freeze ratio $r_{\max}$.
Specifically, we interpolate between the two makespan envelopes as
\begin{equation}
\tau^{\mathrm{ours}}
\;\approx\;
P_d^{\max} - r_{\max}\big(P_d^{\max}-P_d^{\min}\big)
\;=\;
P_d^{\min} + (1-r_{\max})\big(P_d^{\max}-P_d^{\min}\big).
\label{eq:tau_ours_interp}
\end{equation}
This approximation is most accurate when (i) the critical path is dominated by
backward computation and (ii) the identity of the bottleneck path does not change
significantly under freezing.

\subsubsection{Comparison with No-Freezing Baseline}
Combining the baseline \autoref{eq:tau_base_is_Pdmax} and TimelyFreeze \autoref{eq:tau_ours_interp}, we obtain
\begin{equation}
\frac{\tau^{\mathrm{ours}}}{\tau^{\mathrm{base}}}
\;\approx\;
\frac{P_d^{\min} + (1-r_{\max})(P_d^{\max}-P_d^{\min})}{P_d^{\max}}.
\label{eq:tau_ratio_raw}
\end{equation}
For simplicity, we define the time-reduction factor
\begin{equation}
\kappa
\;:=\;
\frac{\tau^{\mathrm{ours}}}{\tau^{\mathrm{base}}}
\;\approx\;
(1-r_{\max})
+
r_{\max}\,\frac{P_d^{\min}}{P_d^{\max}},
\label{eq:kappa_def_pd}
\end{equation}
where $\kappa\in(0,1]$.
This expression makes explicit how the achievable per-step speedup depends on
both the maximum freeze ratio $r_{\max}$ and the gap between the upper and lower
makespan envelopes. 
Intuitively, when backward computation dominates the critical path,
the per-step execution time decreases nearly proportionally to $1-r_{\max}$,
whereas the irreducible forward and pipeline overheads limit the speedup
as $P_d^{\min}/P_d^{\max}$ increases.


\subsection{Time-to-Accuracy Comparison}
\label{appendix:subsec:tta_comparison}

We combine the iteration complexity analysis
(\autoref{appendix:convergence_analysis}) and the per-step execution time
analysis (\autoref{appendix:subsec:execution_time_tau_analysis}) to characterize
the end-to-end \emph{time-to-accuracy} (TTA).

In the stable freezing regime, the per-step execution time can be approximated
by a constant $\tau^{\mathrm{alg}}$, with
$\tau^{\mathrm{base}} = P_d^{\max}$ for the no-freezing baseline.
Accordingly, the TTA definition in \autoref{eq:def_TTA} admits the approximation
\begin{equation}
\mathrm{TTA}_\varepsilon^{\mathrm{alg}}
\;\approx\;
T_\varepsilon^{\mathrm{alg}} \cdot \bar\tau^{\mathrm{alg}},
\label{eq:tta_approx}
\end{equation}
where $T_\varepsilon^{\mathrm{alg}}$ denotes a sufficient number of training
steps given by the convergence bound.

\begin{theorem}[Noise-free TTA Comparison]
\label{thm:tta_noise_free}
In the noise-free case $\sigma=0$, under the stable-regime approximation
\autoref{eq:tta_approx}, suppose the iteration complexity and per-step execution
time satisfy
\begin{equation}
T_\varepsilon^{\mathrm{ours}}
\;\approx\;
\frac{1}{\bar p_{\mathrm{eff}}}\,T_\varepsilon^{\mathrm{base}}
\qquad
\text{(cf.~Corollary~\ref{cor:ratio})},
\label{eq:iter_ratio_peff_ref}
\end{equation}
and
\begin{equation}
\kappa
:= \frac{\tau^{\mathrm{ours}}}{\tau^{\mathrm{base}}}
\;\approx\;
(1-r_{\max}) + r_{\max}\frac{P_d^{\min}}{P_d^{\max}}
\qquad
\text{(cf.~\autoref{eq:kappa_def_pd})}.
\label{eq:kappa_ref}
\end{equation}
Then the time-to-accuracy ratio admits the approximation
\begin{equation}
\frac{\mathrm{TTA}^{\mathrm{ours}}_\varepsilon}
{\mathrm{TTA}^{\mathrm{base}}_\varepsilon}
\;\approx\;
\frac{\kappa}{\bar p_{\mathrm{eff}}}.
\label{eq:tta_ratio_clean}
\end{equation}
In particular, if
\begin{equation}
\kappa < \bar p_{\mathrm{eff}},
\label{eq:tta_improve_cond}
\end{equation}
then TimelyFreeze strictly improves wall-clock time-to-accuracy, i.e.,
\[
\mathrm{TTA}^{\mathrm{ours}}_\varepsilon
<
\mathrm{TTA}^{\mathrm{base}}_\varepsilon.
\]
\end{theorem}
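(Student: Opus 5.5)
The plan is to substitute the two hypothesized approximations into the stable-regime TTA formula \autoref{eq:tta_approx}. The argument is essentially algebraic, so the only real care needed is to keep track of which relations are approximations and which are exact.

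First, I apply \autoref{eq:tta_approx} to both algorithms. In the stable freezing regime the per-step time is constant, $\bar\tau^{\mathrm{alg}}=\tau^{\mathrm{alg}}$, so
\[
\mathrm{TTA}^{\mathrm{ours}}_\varepsilon \approx T_\varepsilon^{\mathrm{ours}}\,\tau^{\mathrm{ours}},
\qquad
\mathrm{TTA}^{\mathrm{base}}_\varepsilon \approx T_\varepsilon^{\mathrm{base}}\,\tau^{\mathrm{base}},
\]
where $\tau^{\mathrm{base}}=P_d^{\max}$ by \autoref{eq:tau_base_is_Pdmax}. Both quantities are strictly positive: $T_\varepsilon^{\mathrm{base}}\ge 1$ by Definition~\ref{def:Talg_def}, and $P_d^{\max}>0$ because the DAG contains action nodes of positive weight. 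Hence the ratio is well defined:
\[
\frac{\mathrm{TTA}^{\mathrm{ours}}_\varepsilon}{\mathrm{TTA}^{\mathrm{base}}_\varepsilon}
\approx \frac{T_\varepsilon^{\mathrm{ours}}}{T_\varepsilon^{\mathrm{base}}}\cdot\frac{\tau^{\mathrm{ours}}}{\tau^{\mathrm{base}}}.
\]

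Second, I replace the first factor using \autoref{eq:iter_ratio_peff_ref}, which gives $1/\bar p_{\mathrm{eff}}$. I replace the second factor by $\kappa$ as defined in \autoref{eq:kappa_ref}. The product is \autoref{eq:tta_ratio_clean}. This step needs $\bar p_{\mathrm{eff}}>0$, which follows from Lemma~\ref{lem:inner_peff}: that lemma gives $p_{\mathrm{eff},t}\ge p_{\min}$, so $\bar p_{\mathrm{eff}}\ge p_{\min}$ from Definition~\ref{def:peff_bar}, and $p_{\min}=1-r_{\max}$ is positive as long as $r_{\max}<1$.

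Third, the strict-improvement claim. Multiplying the ratio by $\mathrm{TTA}^{\mathrm{base}}_\varepsilon>0$, the condition $\kappa<\bar p_{\mathrm{eff}}$ is equivalent to $\kappa/\bar p_{\mathrm{eff}}<1$, and hence to $\mathrm{TTA}^{\mathrm{ours}}_\varepsilon<\mathrm{TTA}^{\mathrm{base}}_\varepsilon$.

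The main obstacle is that the conclusion is only as strong as the approximations it uses. The relation $\approx$ in Corollary~\ref{cor:ratio} hides the ceiling operations in \autoref{eq:Tours_clean_peff} and \autoref{eq:Tbase_clean}. To make the argument rigorous, I would work with the unrounded quantities $C/(\bar p_{\mathrm{eff}}\varepsilon)$ and $C/\varepsilon$, whose ratio is exactly $1/\bar p_{\mathrm{eff}}$. The rounding then contributes a relative error of order $\varepsilon/C$, which vanishes as $\varepsilon\to 0$. A strict inequality $\kappa<\bar p_{\mathrm{eff}}$ leaves a positive margin that survives these small errors. Accordingly, I would state the improvement as holding for sufficiently small $\varepsilon$, or whenever the errors in \autoref{eq:tau_ours_interp} and in the ceiling rounding are below the gap $\bar p_{\mathrm{eff}}-\kappa$.

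One further subtlety is that $\bar p_{\mathrm{eff}}$ depends on the horizon $T$. I would evaluate it at $T=T_\varepsilon^{\mathrm{ours}}$ so that the iteration bound is self-consistent.
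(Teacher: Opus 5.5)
Your proposal is correct and matches the paper's proof: apply $\mathrm{TTA}^{\mathrm{alg}}_\varepsilon\approx T^{\mathrm{alg}}_\varepsilon\tau^{\mathrm{alg}}$ to both algorithms, split the ratio into $T^{\mathrm{ours}}_\varepsilon/T^{\mathrm{base}}_\varepsilon\approx 1/\bar p_{\mathrm{eff}}$ times $\tau^{\mathrm{ours}}/\tau^{\mathrm{base}}=\kappa$, and read off the condition $\kappa<\bar p_{\mathrm{eff}}$. Your positivity checks and error bookkeeping go beyond the paper, which treats every $\approx$ as an equality; note, though, that $\bar p_{\mathrm{eff}}$ depends on the horizon, so the gap $\bar p_{\mathrm{eff}}-\kappa$ itself varies with $\varepsilon$, and your ``sufficiently small $\varepsilon$'' refinement would need a uniform lower bound on that gap.
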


\begin{proof}
By the approximation $\mathrm{TTA}_\varepsilon^{\mathrm{alg}}
\approx T_\varepsilon^{\mathrm{alg}}\bar\tau^{\mathrm{alg}}$,
\[
\frac{\mathrm{TTA}^{\mathrm{ours}}_\varepsilon}
{\mathrm{TTA}^{\mathrm{base}}_\varepsilon}
\;\approx\;
\frac{T_\varepsilon^{\mathrm{ours}}}{T_\varepsilon^{\mathrm{base}}}
\cdot
\frac{\tau^{\mathrm{ours}}}{\tau^{\mathrm{base}}}
\;\approx\;
\frac{1}{\bar p_{\mathrm{eff}}}\cdot \kappa.
\]
The improvement condition \autoref{eq:tta_improve_cond} immediately follows. 
\end{proof}

\clearpage
\section{Experiment Results}
\subsection{Experiment Setting Details}
\label{appendix:exp_setting}
The implementation code is available at \url{https://anonymous.4open.science/r/TimelyFreeze/}. 
\begin{table*}[th]
\renewcommand{\arraystretch}{1.1}
\centering
\caption{Experimental setup for language and vision finetuning tasks.}
\resizebox{\textwidth}{!}{
\rowcolors{3}{white}{lightgray!10}
\begin{tabular}{l|c|c|c|c|c}
\toprule
 &
\multicolumn{3}{c|}{\textbf{Language Tasks}} &
\multicolumn{2}{c}{\textbf{Vision Tasks}} \\ 
\midrule
Model & LLaMA-3.2-1B & LLaMA-3-8B & LLaMA-2-13B & ConvNeXt-V2-L & ViT-L/32 \\ 
\midrule
Dataset & Alpaca-GPT4 & OpenHermes-2.5 & OpenHermes-2.5 & Food-101 & ImageNet-1K \\
GPU & 4 $\times$ A6000 & 4 $\times$ H200 & 4 $\times$ H200 & 4 $\times$ RTX 3090 & 8 $\times$ RTX 3090 \\
Seed & 42, 22 , 11 & 42, 22 , 11 & 42 & 42 & 42 \\
Training Steps & 800 & 2,000 & 2,000 & 20,000 & 17,500 \\
Warmup Steps & 100 & 200 & 200 & 2,500 & 1,500 \\
Optimizer & AdamW & AdamW & AdamW & AdamW & SGD \\
Learning Rate & 5.0e-6 & 3.0e-6 & 3.0e-6 & 1e-4 & 3e-3 \\
LR Scheduler & \multicolumn{3}{c|}{Cosine Annealing} & \multicolumn{2}{c}{Cosine Annealing} \\
Global Batch Size & 128 & 64 & 64 & 64 & 512 \\
Local Batch Size & 16 & 16 & 16 & 64 & 512 \\
Seq. Length / Image Size & 1024 & 1024 & 1024 & 224 $\times$ 224 & 224 $\times$ 224 \\
Pipeline Parallel Degree & 4 & 4 & 4 & 4 & 8 \\
Num. Microbatches & 8 & 8 & 8 & 8 & 8 \\
Phase Boundaries $T_w, T_m, T_f$ & 60,\;100,\;200 & 160,\;200,\;250 & 150,\;200,\;250 & 2350,\;2850,\;5600 & 1400,\;1600,\;2400 \\
Max Freeze Ratio $r_{\max}$ & 0.8 & 0.8 & 0.8 & 0.5 & 0.8 \\
APF Threshold $T_{APF}$ & 0.01 & 0.0001 & 0.0001 & 0.005 & 0.005 \\
AutoFreeze Percentage $P_{Auto}$ & 80\% & 80\% & 80\% & 80\% & 80\% \\
\bottomrule
\end{tabular}}
\label{tab:exp_setup}
\end{table*}

\subsection{LLaMA 1B Result Table}
\label{appendix:llama1b}
\begin{table*}[th]
\renewcommand{\arraystretch}{1.14}
\centering
\caption{
Comparison of freezing methods across different pipeline schedules (Llama-3.2-1B).
The average accuracy of the pretrained (no fine-tuning) baseline is 34.93.
The best and second-best values among the five freezing methods (APF, AutoFreeze, \TimelyFreeze{}, and two hybrid variants) are highlighted in bold and underline, respectively.
}
\setlength{\tabcolsep}{3pt}
\vspace*{-0.2cm}

\resizebox{\textwidth}{!}{%
\begin{tabular}{l|cc|cc c l|cc|cc}
\multicolumn{5}{c}{\textbf{GPipe}} & \cellcolor{white}\quad\quad &\multicolumn{5}{c}{\textbf{1F1B}} \\
\cmidrule[\heavyrulewidth]{1-5}\cmidrule[\heavyrulewidth]{7-11}

\multicolumn{1}{c|}{\multirow{2}{*}{\makecell[c]{Freeze \\ Method}}} &
\multicolumn{2}{c|}{Accuracy Preservation} &
\multicolumn{2}{c}{Time Efficiency}
& \cellcolor{white}\quad\quad &
\multicolumn{1}{c|}{\multirow{2}{*}{\makecell[c]{Freeze \\ Method}}} &
\multicolumn{2}{c|}{Accuracy Preservation} &
\multicolumn{2}{c}{Time Efficiency} \\
\cmidrule{2-5}\cmidrule{8-11}
&
\makecell[c]{Avg.\,Acc.\,($\Delta$)$\uparrow$} &
Frz.\,Ratio &
\makecell[c]{Throughput\,($\Delta$)$\uparrow$} &
MFU$\uparrow$
& \cellcolor{white}\quad\quad &
&
\makecell[c]{Avg.\,Acc.\,($\Delta$)$\uparrow$} &
Frz.\,Ratio &
\makecell[c]{Throughput\,($\Delta$)$\uparrow$} &
MFU$\uparrow$ \\
\cmidrule{1-5}\cmidrule{7-11}

No Freezing
& \aaccsign{36.70}{+0.00} & 0.00 & \ttimesign{6965}{0.00} & 16.05
& \cellcolor{white}\quad\quad &
No Freezing
& \aaccsign{35.91}{+0.00} & 0.00 & \ttimesign{7152}{0.00} & 16.48 \\
\graycline{1-5}\graycline{7-11}

APF
& \aaccsign{35.86}{-0.84} & 35.08 & \ttimesign{8183}{17.49} & 18.81
& \cellcolor{white}\quad\quad &
APF
& \aaccsign{\underline{35.73}}{-0.18} & 35.06 & \ttimesign{8223}{14.98} & 18.91 \\
AutoFreeze
& \aaccsign{35.78}{-0.92} & 24.91 & \ttimesign{7536}{8.20} & 17.34
& \cellcolor{white}\quad\quad &
AutoFreeze
& \aaccsign{34.51}{-1.41} & 26.47 & \ttimesign{7624}{6.60} & 17.54 \\

\rowcolor{gray!15}
\TimelyFreeze{}
& \aaccsign{\textbf{36.64}}{-0.06} & 19.97 & \ttimesign{\underline{8821}}{26.64} & \underline{20.27}
& \cellcolor{white}\quad\quad &
\TimelyFreeze{}
& \aaccsign{\textbf{36.36}}{+0.45} & 15.53 & \ttimesign{\textbf{9257}}{29.44} & \textbf{21.31} \\

\rowcolor{lightgray!10}
\scriptsize\quad +APF
& \aaccsign{36.18}{-0.53} & 20.19 & \ttimesign{\textbf{8860}}{27.21} & \textbf{20.41}
& \cellcolor{white}\quad\quad &
\scriptsize\quad +APF
& \aaccsign{35.69}{-0.23} & 15.36 & \ttimesign{9161}{28.09} & 21.10 \\

\rowcolor{lightgray!10}
\scriptsize\quad +AutoFreeze
& \aaccsign{\underline{36.39}}{-0.32} & 19.49 & \ttimesign{8809}{26.47} & 20.28
& \cellcolor{white}\quad\quad &
\scriptsize\quad +AutoFreeze
& \aaccsign{35.09}{-0.83} & 15.53 & \ttimesign{\underline{9184}}{28.42} & \underline{21.16} \\
\cmidrule[\heavyrulewidth]{1-5}\cmidrule[\heavyrulewidth]{7-11}

\multicolumn{5}{c}{\textbf{Interleaved 1F1B}} & \cellcolor{white}\quad\quad & \multicolumn{5}{c}{\textbf{ZBV}} \\
\cmidrule[\heavyrulewidth]{1-5}\cmidrule[\heavyrulewidth]{7-11}

\multicolumn{1}{c|}{\multirow{2}{*}{\makecell[c]{Freeze \\ Method}}} &
\multicolumn{2}{c|}{Accuracy Preservation} &
\multicolumn{2}{c}{Time Efficiency}
& \cellcolor{white}\quad\quad &
\multicolumn{1}{c|}{\multirow{2}{*}{\makecell[c]{Freeze \\ Method}}} &
\multicolumn{2}{c|}{Accuracy Preservation} &
\multicolumn{2}{c}{Time Efficiency} \\
\cmidrule{2-5}\cmidrule{8-11}
&
\makecell[c]{Avg.\,Acc.\,($\Delta$)$\uparrow$} &
Frz.\,Ratio &
\makecell[c]{Throughput\,($\Delta$)$\uparrow$} &
MFU$\uparrow$
& \cellcolor{white}\quad\quad &
&
\makecell[c]{Avg.\,Acc.\,($\Delta$)$\uparrow$} &
Frz.\,Ratio &
\makecell[c]{Throughput\,($\Delta$)$\uparrow$} &
MFU$\uparrow$ \\
\cmidrule{1-5}\cmidrule{7-11}

No Freezing
& \aaccsign{36.42}{+0.00} & 0.00 & \ttimesign{6800}{0.00} & 15.68
& \cellcolor{white}\quad\quad &
No Freezing
& \aaccsign{36.57}{+0.00} & 0.00 & \ttimesign{7541}{0.00} & 17.31 \\
\graycline{1-5}\graycline{7-11}

APF
& \aaccsign{\underline{35.73}}{-0.69} & 69.45 & \ttimesign{7564}{11.24} & 17.40
& \cellcolor{white}\quad\quad &
APF
& \aaccsign{\underline{35.59}}{-0.98} & 65.64 & \ttimesign{9372}{24.29} & 21.50 \\
AutoFreeze
& \aaccsign{34.25}{-2.16} & 45.96 & \ttimesign{7112}{4.60} & 16.38
& \cellcolor{white}\quad\quad &
AutoFreeze
& \aaccsign{35.32}{-1.25} & 26.93 & \ttimesign{7906}{4.85} & 18.13 \\

\rowcolor{gray!15}
\TimelyFreeze{}
& \aaccsign{35.52}{-0.90} & 37.89 & \ttimesign{\underline{8355}}{22.87} & \underline{19.23}
& \cellcolor{white}\quad\quad &
\TimelyFreeze{}
& \aaccsign{\textbf{35.95}}{-0.62} & 39.65 & \ttimesign{\textbf{9426}}{25.01} & \textbf{21.63} \\

\rowcolor{lightgray!10}
\scriptsize\quad +APF
& \aaccsign{35.36}{-1.06} & 38.50 & \ttimesign{8277}{21.73} & 19.06
& \cellcolor{white}\quad\quad &
\scriptsize\quad +APF
& \aaccsign{35.16}{-1.42} & 38.41 & \ttimesign{9284}{23.12} & 21.32 \\

\rowcolor{lightgray!10}
\scriptsize\quad +AutoFreeze
& \aaccsign{\textbf{36.13}}{-0.28} & 38.50 & \ttimesign{\textbf{8366}}{23.04} & \textbf{19.25}
& \cellcolor{white}\quad\quad &
\scriptsize\quad +AutoFreeze
& \aaccsign{\underline{35.59}}{-0.98} & 39.28 & \ttimesign{\underline{9402}}{24.68} & \underline{21.58} \\

\cmidrule[\heavyrulewidth]{1-5}\cmidrule[\heavyrulewidth]{7-11}
\end{tabular}%
}

\label{tab:llama1b}
\end{table*}

\begin{table*}[th]
\renewcommand{\arraystretch}{1.14}
\centering
\caption{
Comparison of freezing methods across different pipeline schedules (LLaMA-2-13B).
The average accuracy of the pretrained (no fine-tuning) baseline is 46.73.
The best and second-best values among the five freezing methods (APF, AutoFreeze, \TimelyFreeze{}, and two hybrid variants) are highlighted in bold and underline, respectively.
}
\setlength{\tabcolsep}{3pt}
\vspace*{-0.2cm}

\resizebox{\textwidth}{!}{%
\begin{tabular}{l|cc|cc c l|cc|cc}
\multicolumn{5}{c}{\textbf{GPipe}} & \cellcolor{white}\quad\quad &\multicolumn{5}{c}{\textbf{1F1B}} \\
\cmidrule[\heavyrulewidth]{1-5}\cmidrule[\heavyrulewidth]{7-11}

\multicolumn{1}{c|}{\multirow{2}{*}{\makecell[c]{Freeze \\ Method}}} &
\multicolumn{2}{c|}{Accuracy Preservation} &
\multicolumn{2}{c}{Time Efficiency}
& \cellcolor{white}\quad\quad &
\multicolumn{1}{c|}{\multirow{2}{*}{\makecell[c]{Freeze \\ Method}}} &
\multicolumn{2}{c|}{Accuracy Preservation} &
\multicolumn{2}{c}{Time Efficiency} \\
\cmidrule{2-5}\cmidrule{8-11}
&
\makecell[c]{Avg.\,Acc.\,($\Delta$)$\uparrow$} &
Frz.\,Ratio &
\makecell[c]{Throughput\,($\Delta$)$\uparrow$} &
MFU$\uparrow$
& \cellcolor{white}\quad\quad &
&
\makecell[c]{Avg.\,Acc.\,($\Delta$)$\uparrow$} &
Frz.\,Ratio &
\makecell[c]{Throughput\,($\Delta$)$\uparrow$} &
MFU$\uparrow$ \\
\cmidrule{1-5}\cmidrule{7-11}

No Freezing
& \aaccsign{51.09}{+0.00} & 0.00 & \ttimesign{3550}{0.00} & 27.40
& \cellcolor{white}\quad\quad &
No Freezing
& \aaccsign{51.15}{0.00} & 0.00 & \ttimesign{3917}{0.00} & 30.23 \\
\graycline{1-5}\graycline{7-11}

APF
& \aaccsign{50.14}{-0.94} & 60.75 & \ttimesign{4819}{35.73} & 37.65
& \cellcolor{white}\quad\quad &
APF
& \aaccsign{50.31}{-0.83} & 60.80 & \ttimesign{5378}{37.31} & 42.08 \\
AutoFreeze
& \aaccsign{50.23}{-0.86} & 84.45 & \ttimesign{4989}{40.52} & 39.11
& \cellcolor{white}\quad\quad &
AutoFreeze
& \aaccsign{50.21}{-0.94} & 87.64 & \ttimesign{\underline{5660}}{44.51} & \underline{44.26} \\

\rowcolor{gray!15}
\TimelyFreeze{}
& \aaccsign{\underline{50.81}}{-0.28} & 72.18 & \ttimesign{5132}{44.54} & 40.13
& \cellcolor{white}\quad\quad &
\TimelyFreeze{}
& \aaccsign{\textbf{51.06}}{-0.09} & 69.33 & \ttimesign{5590}{42.72} & 43.68 \\

\rowcolor{lightgray!10}
\scriptsize\quad +APF
& \aaccsign{\textbf{51.00}}{-0.09} & 72.24 & \ttimesign{\textbf{5199}}{46.44} & \textbf{40.70}
& \cellcolor{white}\quad\quad &
\scriptsize\quad +APF
& \aaccsign{\underline{50.59}}{-0.55} & 69.68 & \ttimesign{\textbf{5743}}{46.63} & \textbf{44.97} \\

\rowcolor{lightgray!10}
\scriptsize\quad +AutoFreeze
& \aaccsign{\underline{50.81}}{-0.28} & 72.21 & \ttimesign{\underline{5138}}{44.73} & \underline{40.19}
& \cellcolor{white}\quad\quad &
\scriptsize\quad +AutoFreeze
& \aaccsign{50.52}{-0.63} & 69.65 & \ttimesign{5584}{42.57} & 43.62 \\
\cmidrule[\heavyrulewidth]{1-5}\cmidrule[\heavyrulewidth]{7-11}

\multicolumn{5}{c}{\textbf{Interleaved 1F1B}} & \cellcolor{white}\quad\quad & \multicolumn{5}{c}{\textbf{ZBV}} \\
\cmidrule[\heavyrulewidth]{1-5}\cmidrule[\heavyrulewidth]{7-11}

\multicolumn{1}{c|}{\multirow{2}{*}{\makecell[c]{Freeze \\ Method}}} &
\multicolumn{2}{c|}{Accuracy Preservation} &
\multicolumn{2}{c}{Time Efficiency}
& \cellcolor{white}\quad\quad &
\multicolumn{1}{c|}{\multirow{2}{*}{\makecell[c]{Freeze \\ Method}}} &
\multicolumn{2}{c|}{Accuracy Preservation} &
\multicolumn{2}{c}{Time Efficiency} \\
\cmidrule{2-5}\cmidrule{8-11}
&
\makecell[c]{Avg.\,Acc.\,($\Delta$)$\uparrow$} &
Frz.\,Ratio &
\makecell[c]{Throughput\,($\Delta$)$\uparrow$} &
MFU$\uparrow$
& \cellcolor{white}\quad\quad &
&
\makecell[c]{Avg.\,Acc.\,($\Delta$)$\uparrow$} &
Frz.\,Ratio &
\makecell[c]{Throughput\,($\Delta$)$\uparrow$} &
MFU$\uparrow$ \\
\cmidrule{1-5}\cmidrule{7-11}

No Freezing
& \aaccsign{50.90}{0.00} & 0.00 & \ttimesign{4271}{0.00} & 32.97
& \cellcolor{white}\quad\quad &
No Freezing
& \aaccsign{50.22}{0.00} & 0.00 & \ttimesign{4660}{0.00} & 35.97 \\
\graycline{1-5}\graycline{7-11}

APF
& \aaccsign{49.62}{-1.28} & 62.47 & \ttimesign{\underline{5851}}{36.99} & \underline{45.72}
& \cellcolor{white}\quad\quad &
APF
& \aaccsign{49.77}{-0.45} & 45.23 & \ttimesign{\textbf{5945}}{27.58} & \textbf{46.16} \\
AutoFreeze
& \aaccsign{48.58}{-2.32} & 84.46 & \ttimesign{\textbf{5992}}{40.29} & \textbf{46.96}
& \cellcolor{white}\quad\quad &
AutoFreeze
& \aaccsign{\underline{49.91}}{-0.31} & 27.76 & \ttimesign{4888}{4.90} & 37.76 \\

\rowcolor{gray!15}
\TimelyFreeze{}
& \aaccsign{49.71}{-1.19} & 72.22 & \ttimesign{5775}{35.22} & 44.95
& \cellcolor{white}\quad\quad &
\TimelyFreeze{}
& \aaccsign{\textbf{50.36}}{+0.14} & 63.42 & \ttimesign{\underline{5809}}{24.67} & \underline{45.03} \\

\rowcolor{lightgray!10}
\scriptsize\quad +APF
& \aaccsign{\textbf{50.37}}{-0.53} & 72.35 & \ttimesign{5616}{31.50} & 43.65
& \cellcolor{white}\quad\quad &
\scriptsize\quad +APF
& \aaccsign{48.69}{-1.53} & 63.48 & \ttimesign{5796}{24.37} & 44.92 \\

\rowcolor{lightgray!10}
\scriptsize\quad +AutoFreeze
& \aaccsign{\underline{49.87}}{-1.04} & 71.71 & \ttimesign{5600}{31.12} & 43.52
& \cellcolor{white}\quad\quad &
\scriptsize\quad +AutoFreeze
& \aaccsign{49.62}{-0.61} & 63.51 & \ttimesign{5791}{24.28} & 44.89 \\

\cmidrule[\heavyrulewidth]{1-5}\cmidrule[\heavyrulewidth]{7-11}
\end{tabular}%
}

\label{tab:llama13b}
\end{table*}

\pagebreak
\subsection{LLaMA Series Benchmark Results}
\label{appendix:llama1b8b_benchmarks}
\begin{table*}[ht]
\renewcommand{\arraystretch}{1.12}
\centering
\caption{Detailed Benchmark Scores under different pipeline schedules (Llama-3.2-1B).}
\setlength{\tabcolsep}{3pt}

\begin{tabular}{c}
\resizebox{0.49\linewidth}{!}{%
\begin{tabular}{cccc|c}
\multicolumn{5}{c}{\textbf{Base Pretrained Model}} \\
\toprule
\makecell[c]{MMLU\\[-5pt]\scriptsize{(5 shots)}} &
\makecell[c]{HellaSwag\\[-5pt]\scriptsize{(0 shots)}} &
\makecell[c]{ARC-C\\[-5pt]\scriptsize{(10 shots)}} &
\makecell[c]{TruthfulQA\\[-5pt]\scriptsize{(0 shots)}} &
Avg. Acc. $\uparrow$ \\
\midrule
31.99 & 47.75 & 36.95 & 23.01 & 34.93 \\
\bottomrule
\end{tabular}}%
\\[2.5em]
\end{tabular}

\vspace*{0.4em}
\resizebox{\textwidth}{!}{%
\begin{tabular}{l|cccc|c c l|cccc|c}

\multicolumn{6}{c}{\textbf{GPipe}} &
\cellcolor{white}\quad\quad &
\multicolumn{6}{c}{\textbf{1F1B}} \\
\cmidrule[\heavyrulewidth]{1-6}
\cmidrule[\heavyrulewidth]{8-13}

Freeze Method & MMLU & \footnotesize HellaSwag & ARC-C & TruthQA & Avg. &
\cellcolor{white}\quad\quad &
Freeze Method & MMLU & \footnotesize HellaSwag & ARC-C & TruthQA & Avg. \\
\cmidrule{1-6}
\cmidrule{8-13}

No Freezing
& 32.04 & 49.15 & 37.63 & 27.99 & 36.70
& \cellcolor{white}\quad\quad &
No Freezing
& 30.42 & 49.34 & 37.57 & 26.32 & 35.91 \\
\graycline{1-6}\graycline{8-13}

APF
& 32.74 & 48.55 & 37.09 & 25.05 & 35.86
& \cellcolor{white}\quad\quad &
APF
& 32.52 & 48.39 & 36.77 & 25.25 & \underline{35.73} \\
AutoFreeze
& 30.48 & 48.61 & 36.66 & 27.38 & 35.78
& \cellcolor{white}\quad\quad &
AutoFreeze
& 29.15 & 44.61 & 37.63 & 26.65 & 34.51 \\

\rowcolor{gray!15}
\TimelyFreeze{}
& 31.79 & 49.32 & 37.46 & 27.99 & \textbf{36.64}
& \cellcolor{white}\quad\quad &
\TimelyFreeze{}
& 31.30 & 49.49 & 37.80 & 26.85 & \textbf{36.36} \\

\rowcolor{lightgray!10}
\scriptsize\quad +APF
& 31.32 & 49.04 & 37.37 & 26.97 & 36.18
& \cellcolor{white}\quad\quad &
\scriptsize\quad +APF
& 29.66 & 49.26 & 37.51 & 26.32 & 35.69 \\

\rowcolor{lightgray!10}
\scriptsize\quad +AutoFreeze
& 31.56 & 49.25 & 37.23 & 27.50 & \underline{36.39}
& \cellcolor{white}\quad\quad &
\scriptsize\quad +AutoFreeze
& 28.15 & 48.94 & 37.17 & 26.07 & 35.09 \\
\cmidrule[\heavyrulewidth]{1-6}
\cmidrule[\heavyrulewidth]{8-13}

\multicolumn{6}{c}{\textbf{Interleaved 1F1B}} &
\cellcolor{white}\quad\quad &
\multicolumn{6}{c}{\textbf{ZBV}} \\
\cmidrule[\heavyrulewidth]{1-6}
\cmidrule[\heavyrulewidth]{8-13}

Freeze Method & MMLU & \footnotesize HellaSwag & ARC-C & TruthQA & Avg. &
\cellcolor{white}\quad\quad &
Freeze Method & MMLU & \footnotesize HellaSwag & ARC-C & TruthQA & Avg. \\
\cmidrule{1-6}
\cmidrule{8-13}

No Freezing
& 30.90 & 49.66 & 37.97 & 27.13 & 36.42
& \cellcolor{white}\quad\quad &
No Freezing
& 31.57 & 49.72 & 38.03 & 26.97 & 36.57 \\
\graycline{1-6}\graycline{8-13}

APF
& 32.55 & 48.35 & 36.89 & 25.13 & \underline{35.73}
& \cellcolor{white}\quad\quad &
APF
& 32.33 & 48.24 & 37.23 & 24.56 & \underline{35.59} \\
AutoFreeze
& 28.39 & 44.53 & 37.65 & 26.44 & 34.25
& \cellcolor{white}\quad\quad &
AutoFreeze
& 29.10 & 48.08 & 37.63 & 26.48 & 35.32 \\

\rowcolor{gray!15}
\TimelyFreeze{}
& 28.79 & 49.08 & 38.08 & 26.11 & 35.52
& \cellcolor{white}\quad\quad &
\TimelyFreeze{}
& 29.47 & 49.29 & 38.20 & 26.84 & \textbf{35.95} \\

\rowcolor{lightgray!10}
\scriptsize\quad +APF
& 28.61 & 49.06 & 37.85 & 25.91 & 35.36
& \cellcolor{white}\quad\quad &
\scriptsize\quad +APF
& 27.99 & 48.94 & 38.11 & 25.58 & 35.16 \\

\rowcolor{lightgray!10}
\scriptsize\quad +AutoFreeze
& 30.42 & 49.36 & 38.31 & 26.44 & \textbf{36.13}
& \cellcolor{white}\quad\quad &
\scriptsize\quad +AutoFreeze
& 29.09 & 49.14 & 38.00 & 26.15 & \underline{35.59} \\
\cmidrule[\heavyrulewidth]{1-6}
\cmidrule[\heavyrulewidth]{8-13}

\end{tabular}%
}

\label{tab:llama1b_benchmark}
\end{table*}

\begin{table*}[ht]
\renewcommand{\arraystretch}{1.12}
\centering
\caption{Detailed Benchmark Scores under different pipeline schedules (Llama-3-8B).}
\setlength{\tabcolsep}{3pt}

\begin{tabular}{c}
\resizebox{0.49\linewidth}{!}{%
\begin{tabular}{cccc|c}
\multicolumn{5}{c}{\textbf{Base Pretrained Model}} \\
\toprule
\makecell[c]{MMLU\\[-5pt]\scriptsize{(5 shots)}} &
\makecell[c]{HellaSwag\\[-5pt]\scriptsize{(0 shots)}} &
\makecell[c]{ARC-C\\[-5pt]\scriptsize{(10 shots)}} &
\makecell[c]{TruthfulQA\\[-5pt]\scriptsize{(0 shots)}} &
Avg. Acc. $\uparrow$ \\
\midrule
63.55 & 60.00 & 51.28 & 28.40 & 50.81 \\
\bottomrule
\end{tabular}}%
\\[2.5em]
\end{tabular}

\vspace*{0.4em}
\resizebox{\textwidth}{!}{%
\begin{tabular}{l|cccc|c c l|cccc|c}

\multicolumn{6}{c}{\textbf{GPipe}} &
\cellcolor{white}\quad\quad &
\multicolumn{6}{c}{\textbf{1F1B}} \\
\cmidrule[\heavyrulewidth]{1-6}
\cmidrule[\heavyrulewidth]{8-13}

Freeze Method & MMLU & \footnotesize HellaSwag & ARC-C & TruthQA & Avg. &
\cellcolor{white}\quad\quad &
Freeze Method & MMLU & \footnotesize HellaSwag & ARC-C & TruthQA & Avg. \\
\cmidrule{1-6}
\cmidrule{8-13}

No Freezing
& 64.09 & 61.45 & 55.63 & 37.33 & 54.63
& \cellcolor{white}\quad\quad &
No Freezing
& 64.09 & 61.45 & 55.63 & 37.33 & 54.63 \\
\graycline{1-6}\graycline{8-13}

APF
& 64.09 & 61.44 & 55.12 & 37.94 & 54.65
& \cellcolor{white}\quad\quad &
APF
& 64.09 & 61.44 & 55.12 & 37.94 & 54.65 \\
AutoFreeze
& 64.39 & 61.74 & 54.44 & 35.37 & 53.99
& \cellcolor{white}\quad\quad &
AutoFreeze
& 64.39 & 61.74 & 54.44 & 35.37 & 53.99 \\

\rowcolor{gray!15}
\TimelyFreeze{}
& 63.44 & 61.73 & 55.86 & 38.15 & \underline{54.79}
& \cellcolor{white}\quad\quad &
\TimelyFreeze{}
& 64.02 & 61.39 & 55.94 & 37.41 & 54.69 \\

\rowcolor{lightgray!10}
\scriptsize\quad +APF
& 63.27 & 61.60 & 56.45 & 37.94 & \textbf{54.82}
& \cellcolor{white}\quad\quad &
\scriptsize\quad +APF
& 64.36 & 61.51 & 55.58 & 37.90 & \textbf{54.84} \\

\rowcolor{lightgray!10}
\scriptsize\quad +AutoFreeze
& 63.60 & 61.75 & 55.35 & 37.90 & 54.65
& \cellcolor{white}\quad\quad &
\scriptsize\quad +AutoFreeze
& 64.00 & 61.48 & 55.60 & 38.19 & \underline{54.82} \\
\cmidrule[\heavyrulewidth]{1-6}
\cmidrule[\heavyrulewidth]{8-13}

\multicolumn{6}{c}{\textbf{Interleaved 1F1B}} &
\cellcolor{white}\quad\quad &
\multicolumn{6}{c}{\textbf{ZBV}} \\
\cmidrule[\heavyrulewidth]{1-6}
\cmidrule[\heavyrulewidth]{8-13}

Freeze Method & MMLU & \footnotesize HellaSwag & ARC-C & TruthQA & Avg. &
\cellcolor{white}\quad\quad &
Freeze Method & MMLU & \footnotesize HellaSwag & ARC-C & TruthQA & Avg. \\
\cmidrule{1-6}
\cmidrule{8-13}

No Freezing
& 63.83 & 61.08 & 55.89 & 38.31 & 54.78
& \cellcolor{white}\quad\quad &
No Freezing
& 63.91 & 61.36 & 55.55 & 37.82 & 54.66 \\
\graycline{1-6}\graycline{8-13}

APF
& 64.72 & 61.53 & 55.89 & 37.82 & \textbf{54.99}
& \cellcolor{white}\quad\quad &
APF
& 64.56 & 61.65 & 55.55 & 36.11 & 54.47 \\
AutoFreeze
& 64.34 & 61.29 & 55.20 & 37.33 & 54.54
& \cellcolor{white}\quad\quad &
AutoFreeze
& 63.64 & 61.48 & 56.23 & 38.07 & 54.86 \\

\rowcolor{gray!15}
\TimelyFreeze{}
& 64.37 & 61.41 & 55.35 & 37.41 & 54.64
& \cellcolor{white}\quad\quad &
\TimelyFreeze{}
& 63.89 & 61.59 & 55.26 & 38.15 & 54.72 \\

\rowcolor{lightgray!10}
\scriptsize\quad +APF
& 64.29 & 61.56 & 55.63 & 37.37 & 54.71
& \cellcolor{white}\quad\quad &
\scriptsize\quad +APF
& 64.15 & 61.76 & 55.72 & 38.47 & \underline{55.03} \\

\rowcolor{lightgray!10}
\scriptsize\quad +AutoFreeze
& 64.40 & 61.54 & 55.29 & 37.66 & \underline{54.72}
& \cellcolor{white}\quad\quad &
\scriptsize\quad +AutoFreeze
& 63.97 & 61.64 & 56.40 & 38.19 & \textbf{55.05} \\
\cmidrule[\heavyrulewidth]{1-6}
\cmidrule[\heavyrulewidth]{8-13}

\end{tabular}%
}

\label{tab:llama8b_benchmark}
\end{table*}

\begin{table*}[ht]
\renewcommand{\arraystretch}{1.12}
\centering
\caption{Detailed Benchmark Scores under different pipeline schedules (Llama-2-13B).}
\setlength{\tabcolsep}{3pt}

\begin{tabular}{c}
\resizebox{0.49\linewidth}{!}{%
\begin{tabular}{cccc|c}
\multicolumn{5}{c}{\textbf{Base Pretrained Model}} \\
\toprule
\makecell[c]{MMLU\\[-5pt]\scriptsize{(5 shots)}} &
\makecell[c]{HellaSwag\\[-5pt]\scriptsize{(0 shots)}} &
\makecell[c]{ARC-C\\[-5pt]\scriptsize{(10 shots)}} &
\makecell[c]{TruthfulQA\\[-5pt]\scriptsize{(0 shots)}} &
Avg. Acc. $\uparrow$ \\
\midrule
52.38 & 60.18 & 47.53 & 26.81 & 46.73 \\
\bottomrule
\end{tabular}}%
\\[2.5em]
\end{tabular}

\vspace*{0.4em}
\resizebox{\textwidth}{!}{%
\begin{tabular}{l|cccc|c c l|cccc|c}
\multicolumn{6}{c}{\textbf{GPipe}} & \cellcolor{white}\quad\quad & \multicolumn{6}{c}{\textbf{1F1B}} \\
\cmidrule[\heavyrulewidth]{1-6}\cmidrule[\heavyrulewidth]{8-13}

Freeze Method & MMLU & \footnotesize HellaSwag & ARC-C & TruthQA & Avg. &
\cellcolor{white}\quad\quad &
Freeze Method & MMLU & \footnotesize HellaSwag & ARC-C & TruthQA & Avg. \\
\cmidrule{1-6}\cmidrule{8-13}

No Freezing
& 56.33 & 60.33 & 54.27 & 33.41 & 51.09
& \cellcolor{white}\quad\quad &
No Freezing
& 55.57 & 60.71 & 54.27 & 34.03 & 51.15 \\
\graycline{1-6}\graycline{8-13}

APF
& 55.03 & 60.35 & 53.24 & 31.95 & 50.14
& \cellcolor{white}\quad\quad &
APF
& 55.24 & 60.34 & 52.99 & 32.68 & 50.31 \\
AutoFreeze
& 54.89 & 60.41 & 53.41 & 32.19 & 50.23
& \cellcolor{white}\quad\quad &
AutoFreeze
& 53.07 & 59.29 & 53.58 & 34.88 & 50.21 \\

\rowcolor{gray!15}
\TimelyFreeze{}
& 55.21 & 60.76 & 54.69 & 32.56 & \underline{50.81}
& \cellcolor{white}\quad\quad &
\TimelyFreeze{}
& 55.71 & 60.22 & 55.12 & 33.17 & \textbf{51.06} \\

\rowcolor{lightgray!10}
\scriptsize\quad +APF
& 54.81 & 60.95 & 54.69 & 33.54 & \textbf{51.00}
& \cellcolor{white}\quad\quad &
\scriptsize\quad +APF
& 55.78 & 60.82 & 52.47 & 33.29 & \underline{50.59} \\

\rowcolor{lightgray!10}
\scriptsize\quad +AutoFreeze
& 55.76 & 61.33 & 53.58 & 32.56 & 50.81
& \cellcolor{white}\quad\quad &
\scriptsize\quad +AutoFreeze
& 55.68 & 60.72 & 52.73 & 32.93 & 50.52 \\
\cmidrule[\heavyrulewidth]{1-6}\cmidrule[\heavyrulewidth]{8-13}

\multicolumn{6}{c}{\textbf{Interleaved 1F1B}} & \cellcolor{white}\quad\quad & \multicolumn{6}{c}{\textbf{ZBV}} \\
\cmidrule[\heavyrulewidth]{1-6}\cmidrule[\heavyrulewidth]{8-13}

Freeze Method & MMLU & \footnotesize HellaSwag & ARC-C & TruthQA & Avg. &
\cellcolor{white}\quad\quad &
Freeze Method & MMLU & \footnotesize HellaSwag & ARC-C & TruthQA & Avg. \\
\cmidrule{1-6}\cmidrule{8-13}

No Freezing
& 55.68 & 59.98 & 53.67 & 34.27 & 50.90
& \cellcolor{white}\quad\quad &
No Freezing
& 55.59 & 60.17 & 52.56 & 32.56 & 50.22 \\
\graycline{1-6}\graycline{8-13}

APF
& 55.36 & 60.44 & 50.85 & 31.82 & 49.62
& \cellcolor{white}\quad\quad &
APF
& 55.22 & 59.98 & 52.05 & 31.82 & 49.77 \\
AutoFreeze
& 54.76 & 57.89 & 54.61 & 27.05 & 48.58
& \cellcolor{white}\quad\quad &
AutoFreeze
& 55.60 & 60.60 & 53.07 & 30.35 & \underline{49.91} \\

\rowcolor{gray!15}
\TimelyFreeze{}
& 54.91 & 60.43 & 51.79 & 31.70 & 49.71
& \cellcolor{white}\quad\quad &
\TimelyFreeze{}
& 55.43 & 60.53 & 52.90 & 32.56 & \textbf{50.36} \\

\rowcolor{lightgray!10}
\scriptsize\quad +APF
& 55.24 & 60.62 & 52.82 & 32.80 & \textbf{50.37}
& \cellcolor{white}\quad\quad &
\scriptsize\quad +APF
& 54.21 & 59.37 & 51.19 & 29.99 & 48.69 \\

\rowcolor{lightgray!10}
\scriptsize\quad +AutoFreeze
& 54.45 & 60.48 & 53.07 & 31.46 & \underline{49.87}
& \cellcolor{white}\quad\quad &
\scriptsize\quad +AutoFreeze
& 55.59 & 60.23 & 50.94 & 31.70 & 49.62 \\
\cmidrule[\heavyrulewidth]{1-6}\cmidrule[\heavyrulewidth]{8-13}

\end{tabular}%
}

\label{tab:llama13b_benchmark}
\end{table*}

\clearpage
\section{Pipeline Schedule Visualization}
\subsection{Four-GPU Pipeline Schedules}
\label{appendix:schedules_comparison_4gpus}
\autoref{fig:gpipe_schedules_comparison} presents example pipeline schedules for training the Llama-3-8B model using the GPipe schedule with eight microbatches on four H200 GPUs.
This figure visualizes the pipeline schedules corresponding to the main results reported in \autoref{tab:llama8b}.
We compare three parameter-freezing methods—AutoFreeze, APF, and TimelyFreeze—against the baseline training without freezing.
All freezing methods reduce the batch execution time to some extent; however, TimelyFreeze achieves the largest improvement, reducing the batch time by 31.66\% relative to the baseline of 698\,ms.
\\

\begin{figure}[H]
\centering
\begin{subfigure}[t]{\linewidth}
    \centering
    \includegraphics[height=3.5cm]{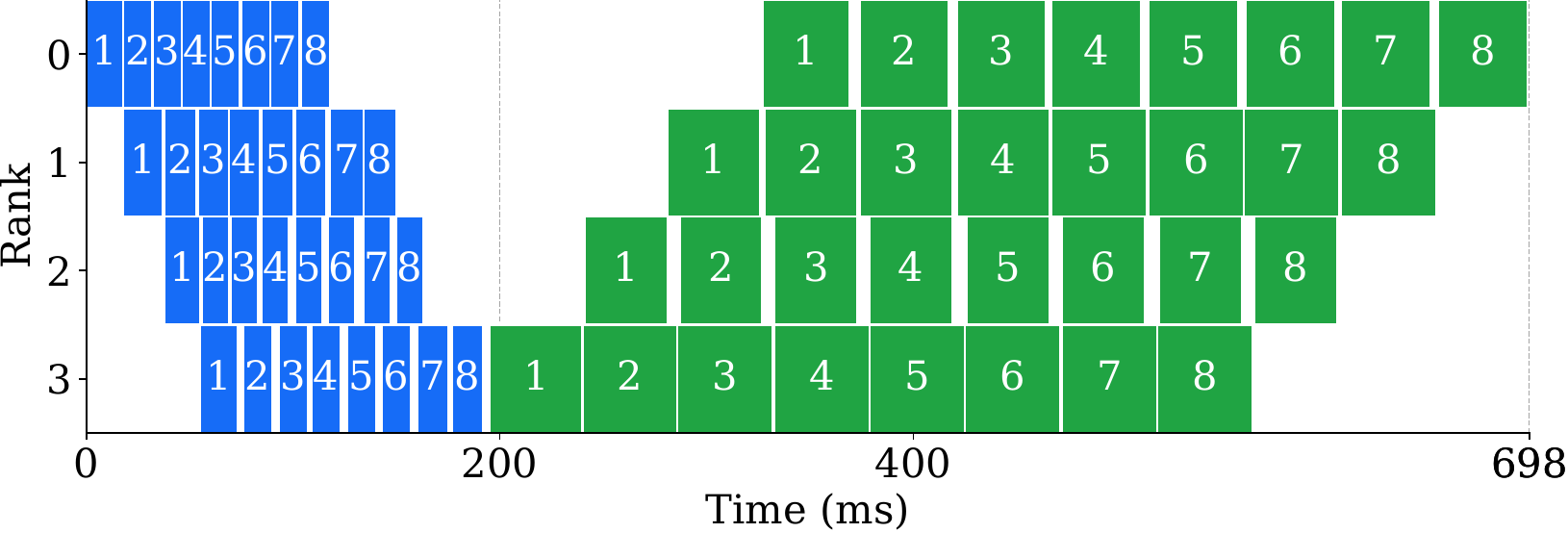}
    \caption{GPipe: Execution pipeline without freezing.}
    \label{fig:gpipe_nofreeze}
    \vspace{1em}
\end{subfigure}
\begin{subfigure}[t]{\linewidth}
    \centering
    \includegraphics[height=3.5cm]{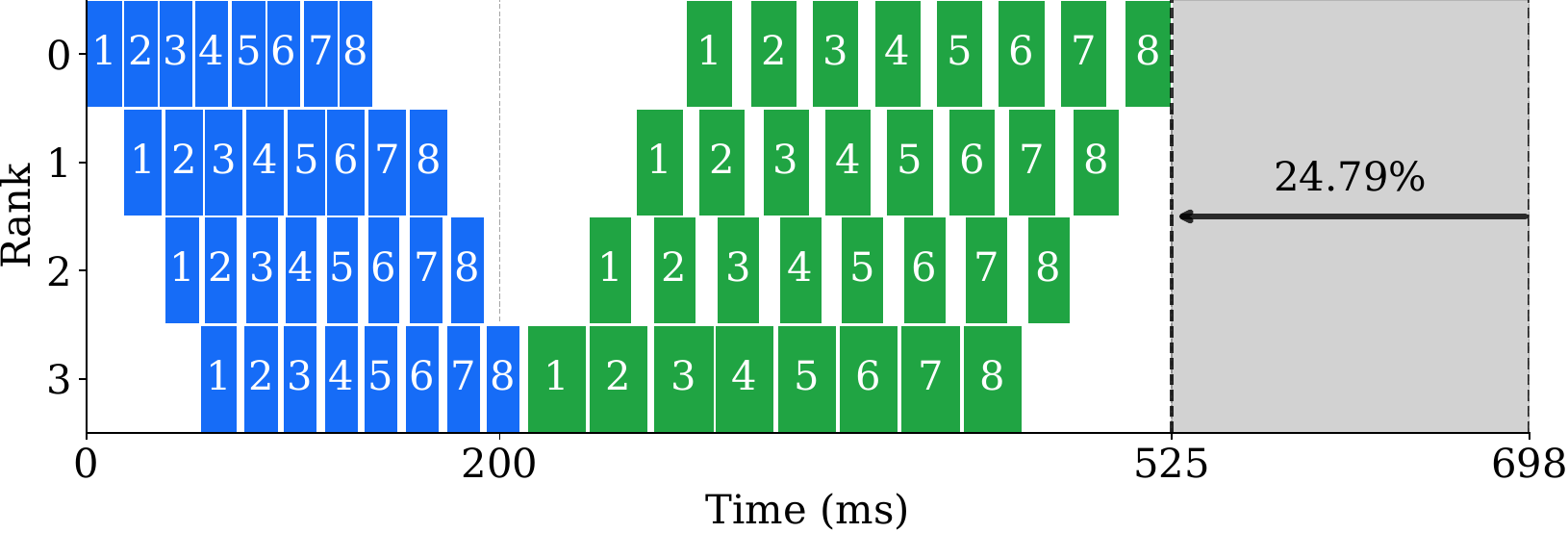}
    \caption{GPipe: Execution pipeline with AutoFreeze.}
    \label{fig:gpipe_auto}
    \vspace{1em}
\end{subfigure}
\begin{subfigure}[t]{\linewidth}
    \centering
    \includegraphics[height=3.5cm]{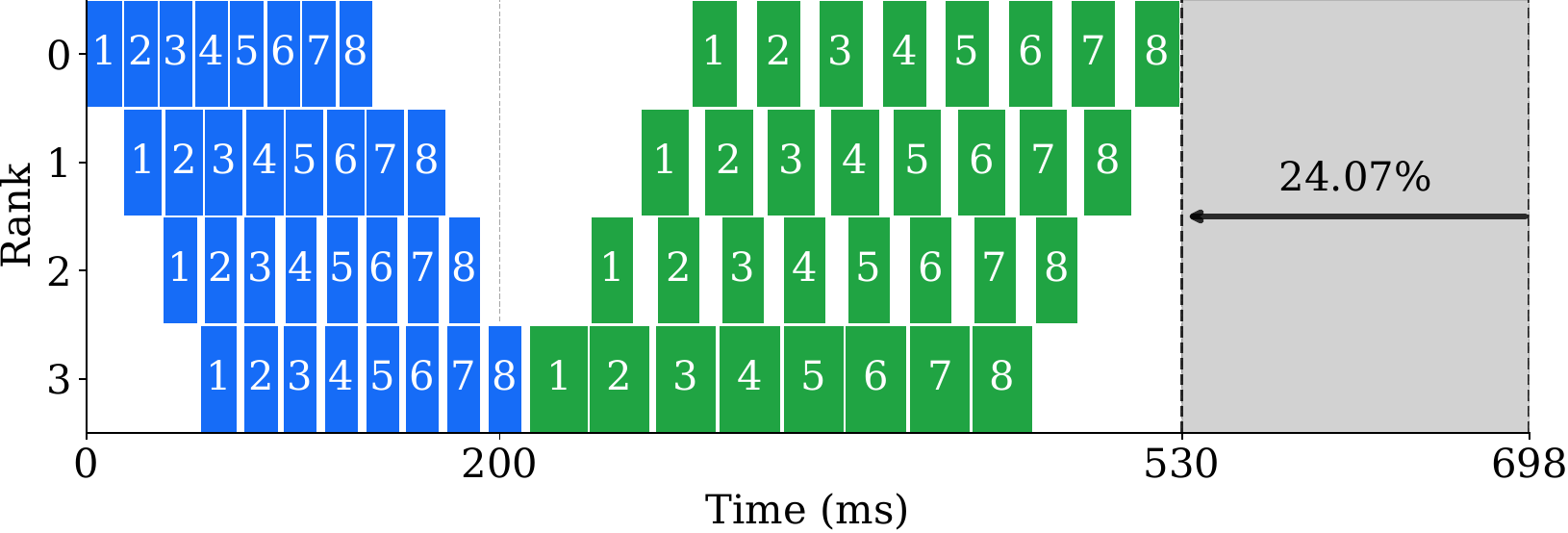}
    \caption{GPipe: Execution pipeline with APF.}
    \label{fig:gpipe_apf}
    \vspace{1em}
\end{subfigure}
\begin{subfigure}[t]{\linewidth}
    \centering
    \includegraphics[height=3.5cm]{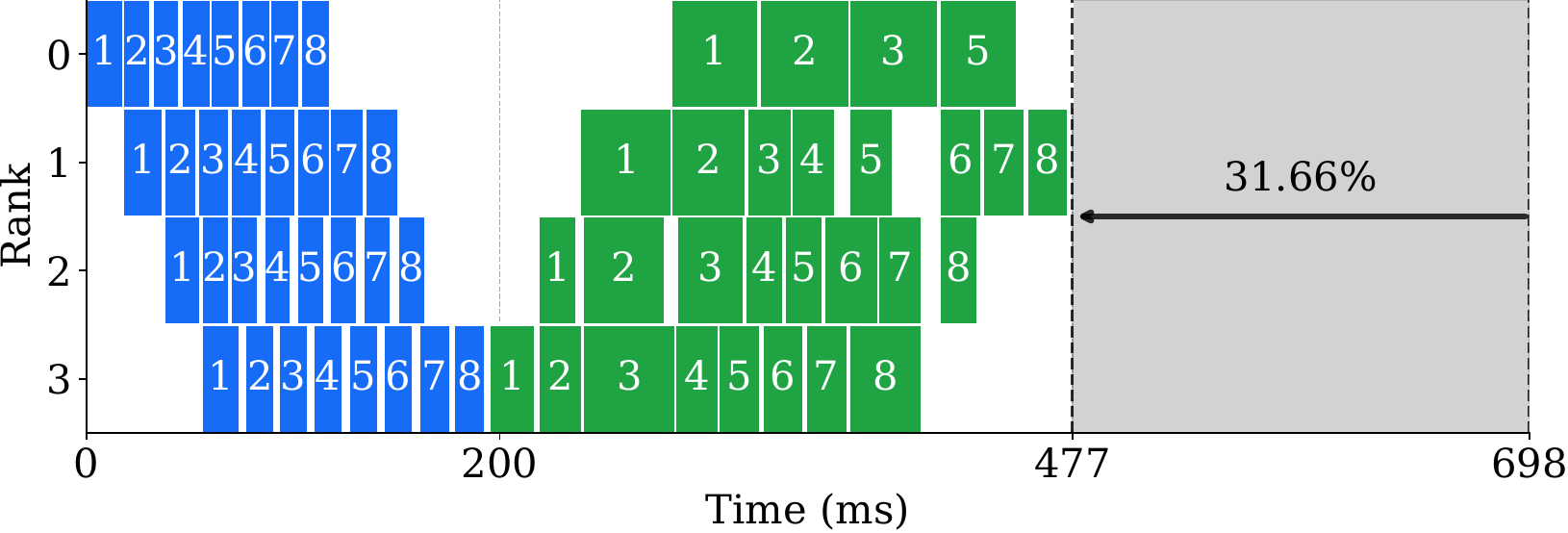}
    \caption{GPipe: Execution pipeline with \TimelyFreeze{}.}
    \label{fig:gpipe_ours}
    \vspace{1em}
\end{subfigure}
    \caption{GPipe schedules across four freezing methods.}
    \label{fig:gpipe_schedules_comparison}
\end{figure}

\clearpage
Figures~\ref{fig:1f1b_schedules_comparison} through~\ref{fig:zbv_schedules_comparison} demonstrate consistent results across other pipeline schedules, including 1F1B, Interleaved 1F1B, and Zero-Bubble (ZBV).
\\

\begin{figure}[H]
\centering
\begin{subfigure}[t]{\linewidth}
    \centering
    \includegraphics[height=4cm]{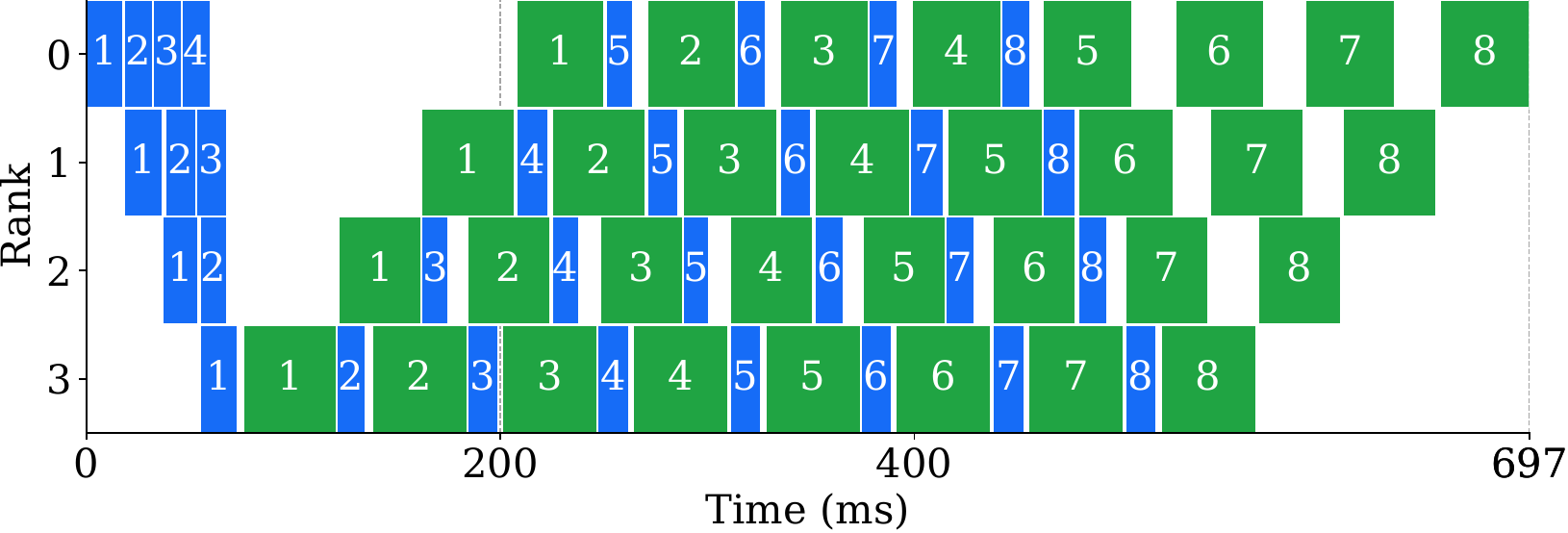}
    \caption{1F1B: Execution pipeline without freezing.}
    \label{fig:1f1b_nofreeze}
    \vspace{1.6em}
\end{subfigure}
\begin{subfigure}[t]{\linewidth}
    \centering
    \includegraphics[height=4cm]{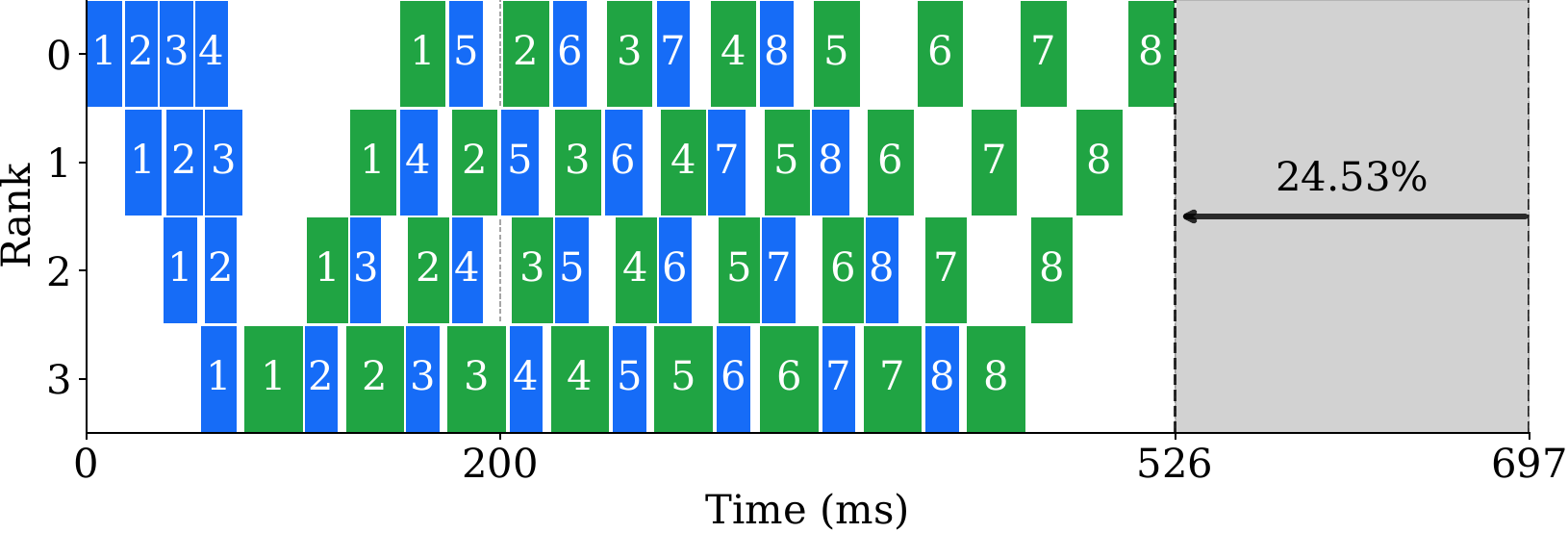}
    \caption{1F1B: Execution pipeline with AutoFreeze.}
    \label{fig:1f1b_auto}
    \vspace{1.6em}
\end{subfigure}
\begin{subfigure}[t]{\linewidth}
    \centering
    \includegraphics[height=4cm]{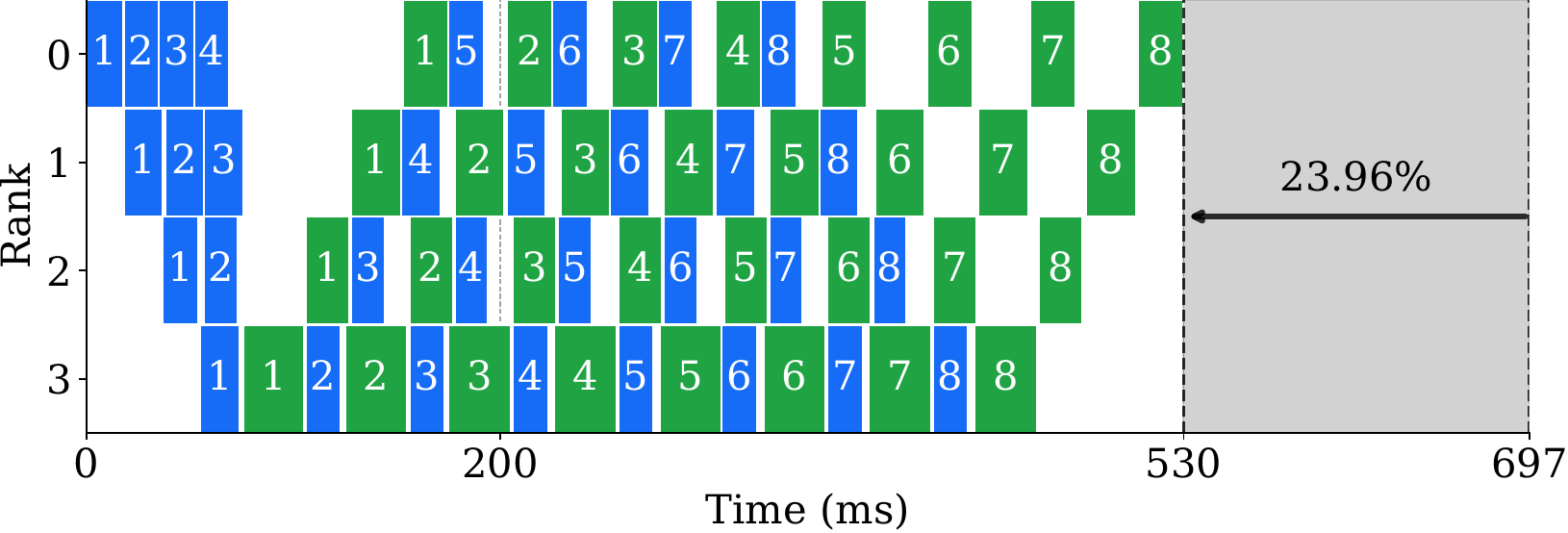}
    \caption{1F1B: Execution pipeline with APF.}
    \label{fig:1f1b_apf}
    \vspace{1.6em}
\end{subfigure}
\begin{subfigure}[t]{\linewidth}
    \centering
    \includegraphics[height=4cm]{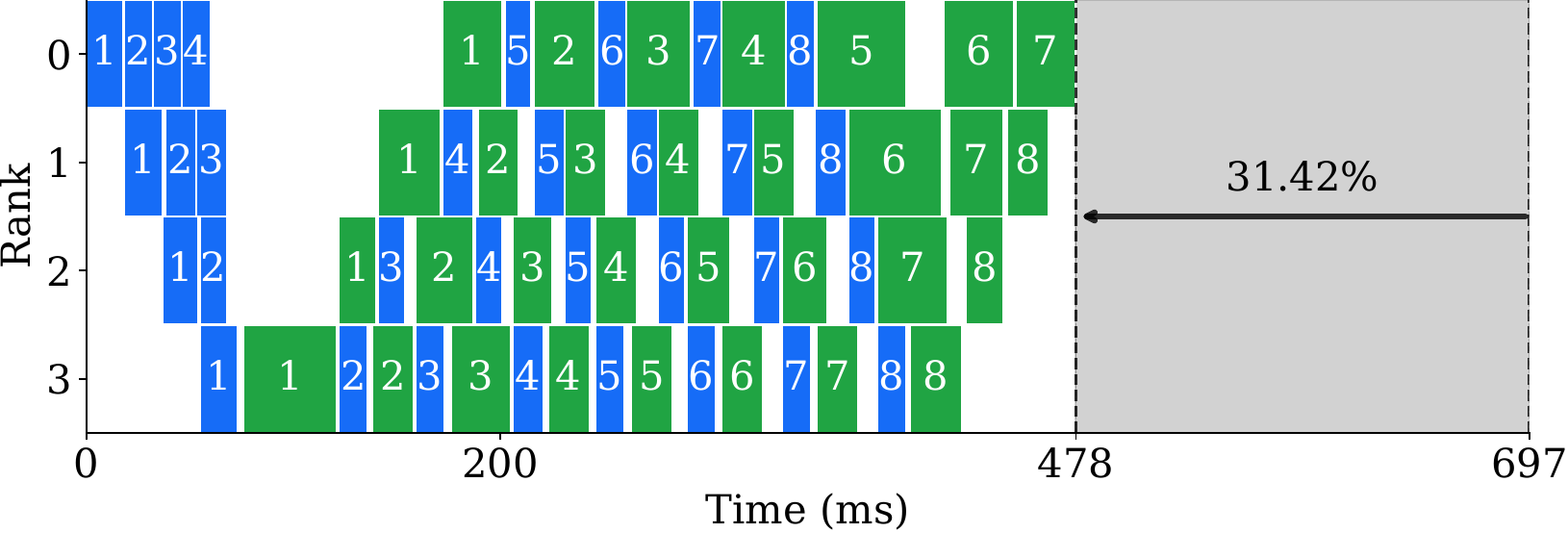}
    \caption{1F1B: Execution pipeline with \TimelyFreeze{}.}
    \label{fig:1f1b_ours}
    \vspace{1.6em}
\end{subfigure}
    \caption{1F1B schedules across four freezing methods.}
    \label{fig:1f1b_schedules_comparison}
\end{figure}

\begin{figure}[H]
\centering
\begin{subfigure}[t]{\linewidth}
    \centering
    \includegraphics[height=4cm]{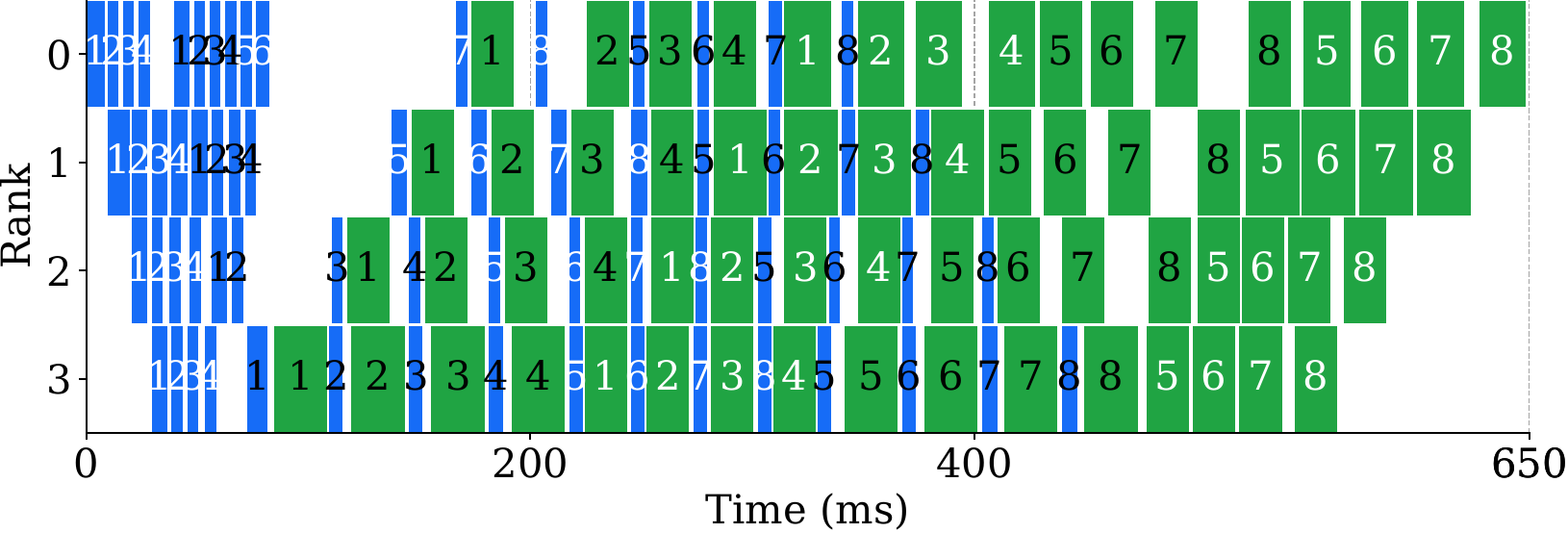}
    \caption{Interleaved 1F1B: Execution pipeline without freezing.}
    \label{fig:interleaved1f1b_nofreeze}
    \vspace{1.6em}
\end{subfigure}
\begin{subfigure}[t]{\linewidth}
    \centering
    \includegraphics[height=4cm]{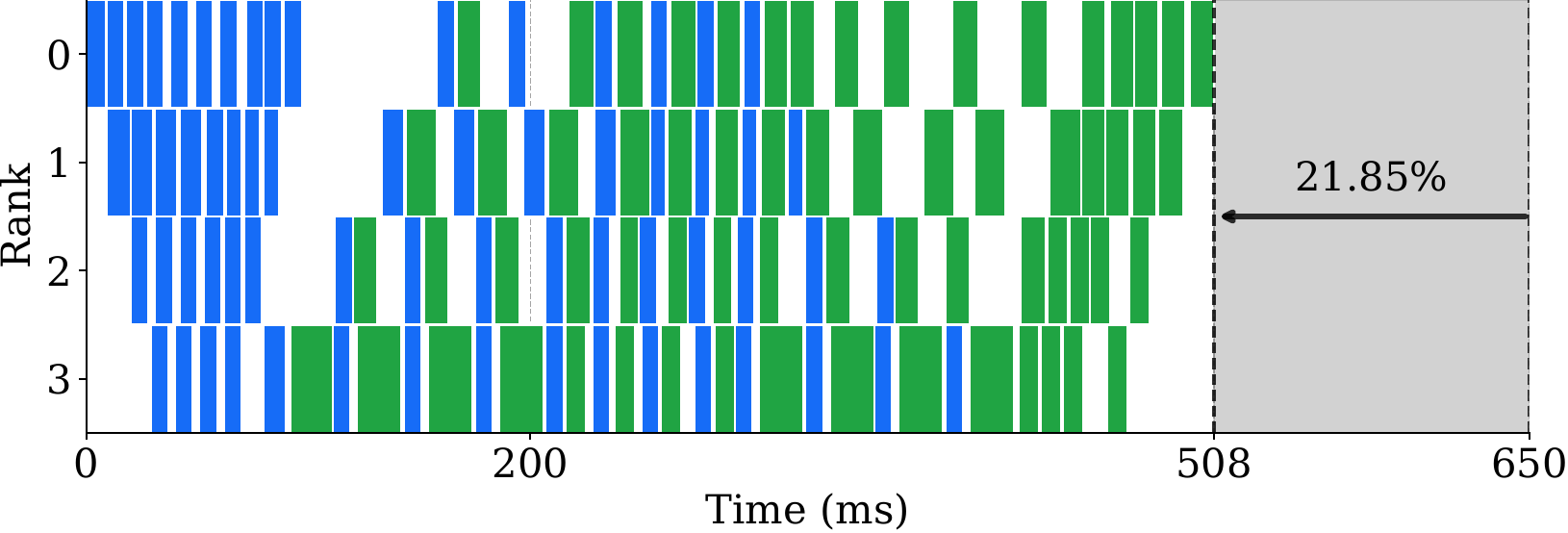}
    \caption{Interleaved 1F1B: Execution pipeline with AutoFreeze.}
    \label{fig:interleaved1f1b_auto}
    \vspace{1.6em}
\end{subfigure}
\begin{subfigure}[t]{\linewidth}
    \centering
    \includegraphics[height=4cm]{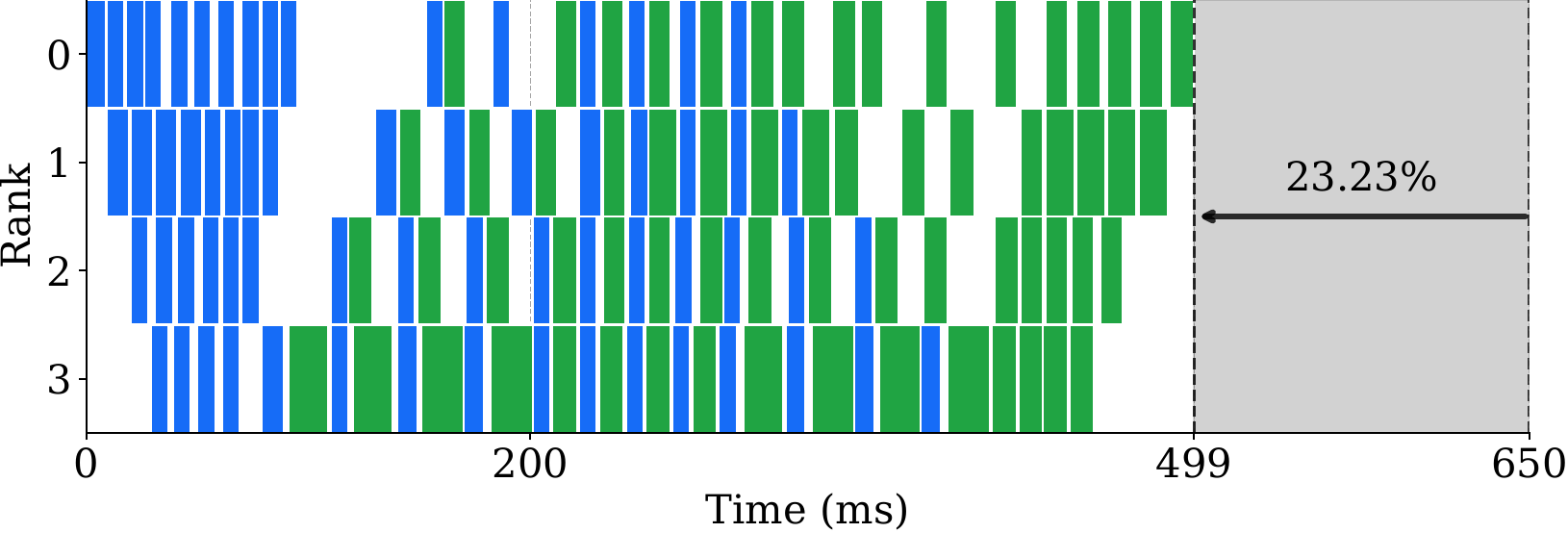}
    \caption{Interleaved 1F1B: Execution pipeline with APF.}
    \label{fig:interleaved1f1b_apf}
    \vspace{1.6em}
\end{subfigure}
\begin{subfigure}[t]{\linewidth}
    \centering
    \includegraphics[height=4cm]{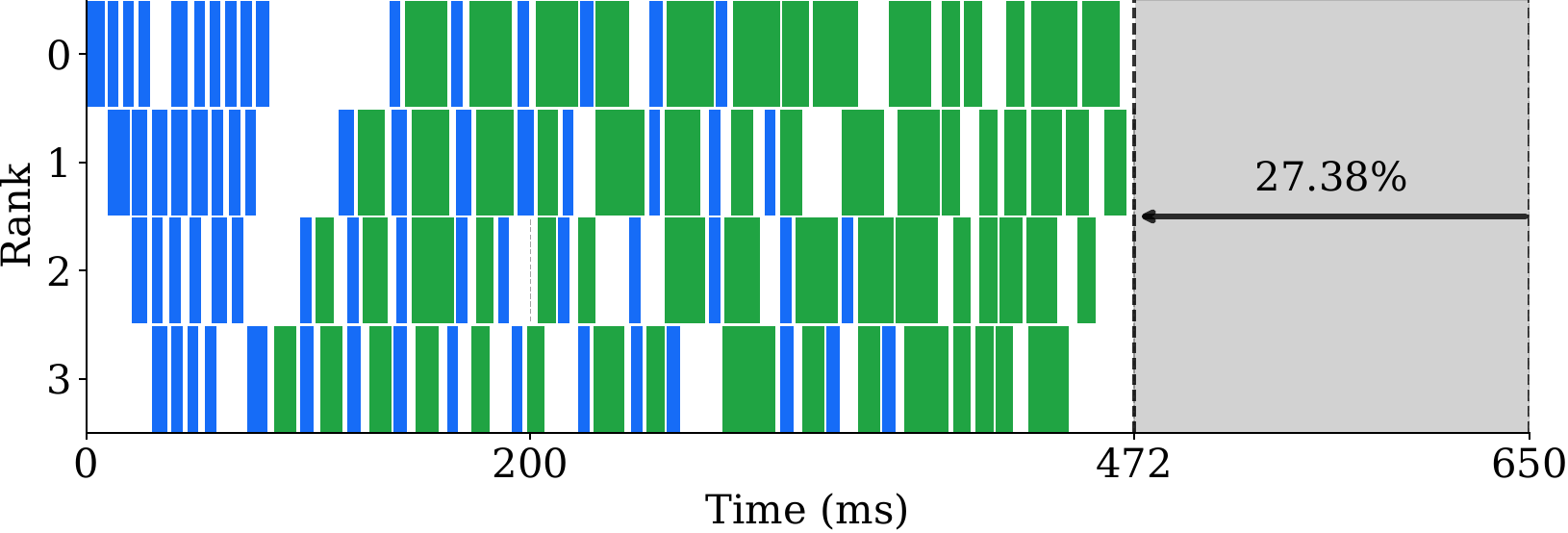}
    \caption{Interleaved 1F1B: Execution pipeline with \TimelyFreeze{}.}
    \label{fig:interleaved1f1b_ours}
    \vspace{1.6em}
\end{subfigure}
    \caption{Interleaved 1F1B schedules across four freezing methods.}
    \label{fig:interleaved1f1b_schedules_comparison}
\end{figure}

\begin{figure}[H]
\centering
\begin{subfigure}[t]{\linewidth}
    \centering
    \includegraphics[height=4cm]{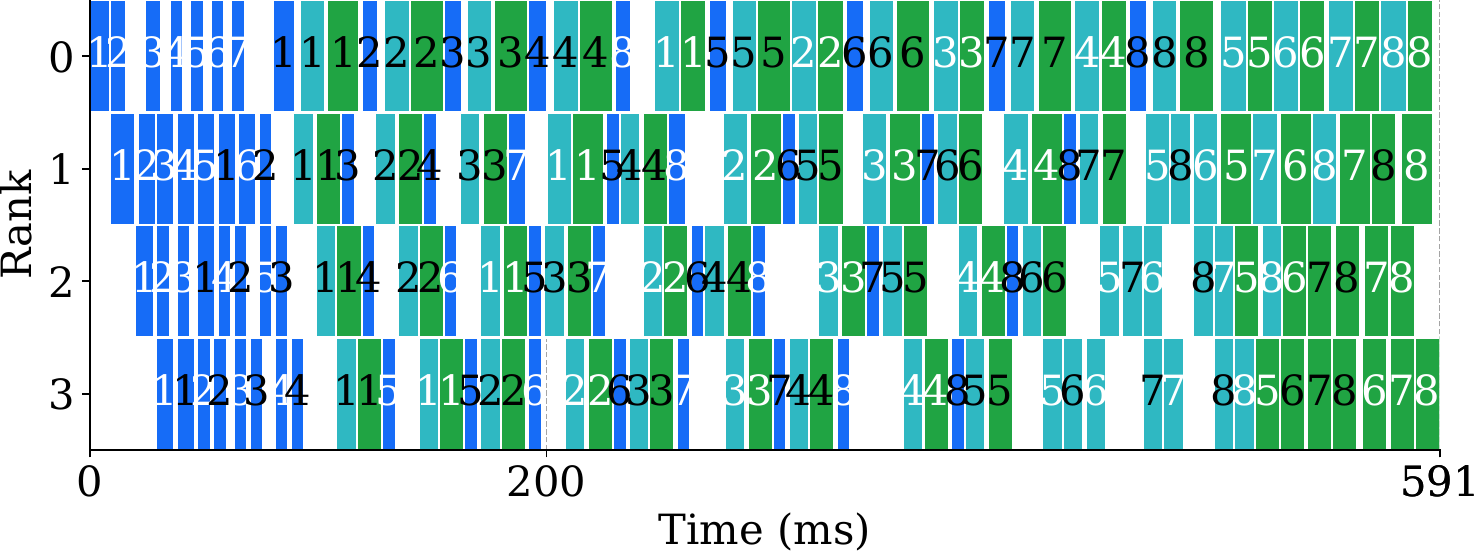}
    \caption{ZBV: Execution pipeline without freezing.}
    \label{fig:zbv_nofreeze}
    \vspace{1.6em}
\end{subfigure}
\begin{subfigure}[t]{\linewidth}
    \centering
    \includegraphics[height=4cm]{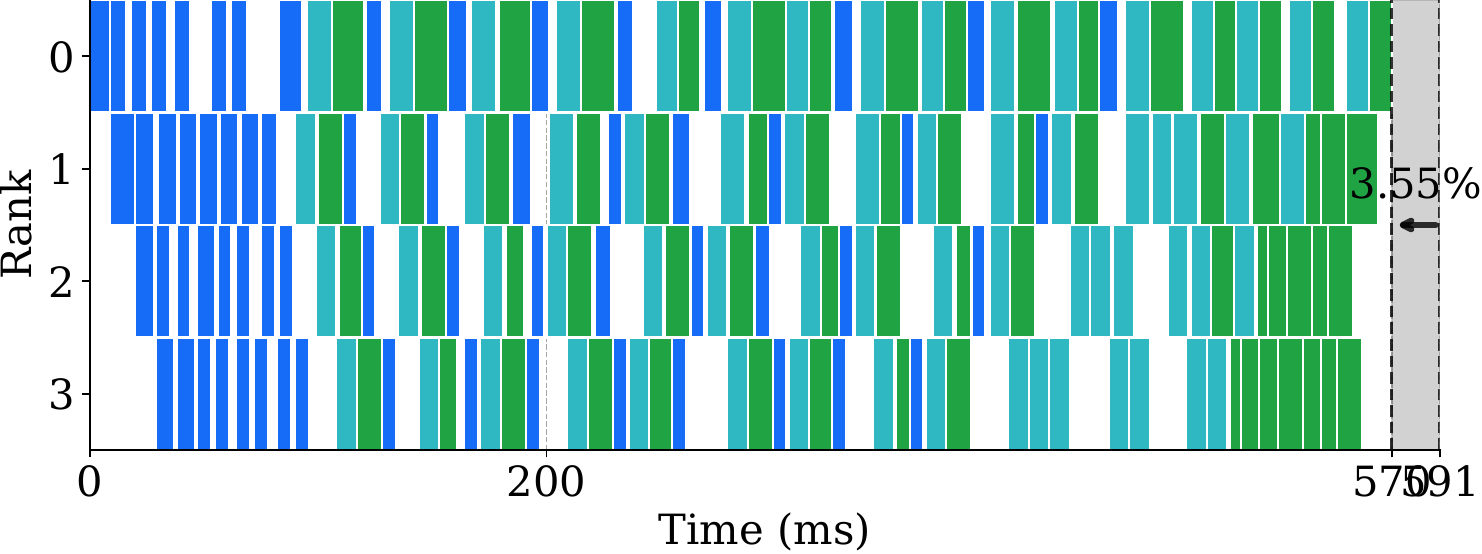}
    \caption{ZBV: Execution pipeline with AutoFreeze.}
    \label{fig:zbv_auto}
    \vspace{1.6em}
\end{subfigure}
\begin{subfigure}[t]{\linewidth}
    \centering
    \includegraphics[height=4cm]{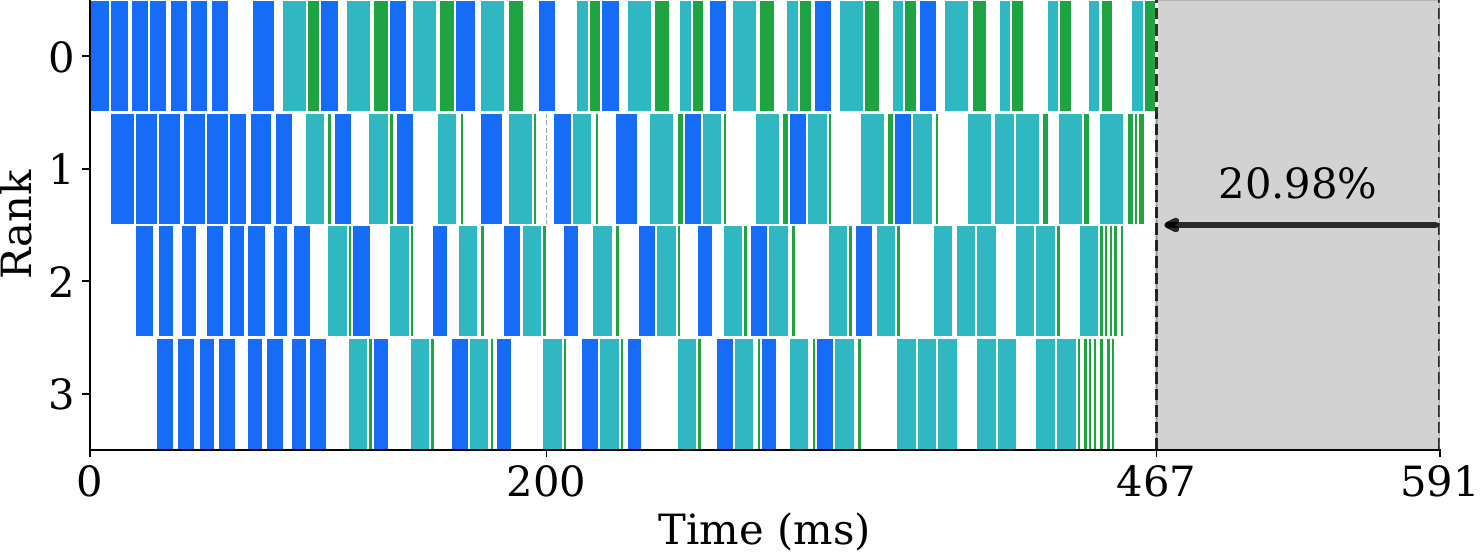}
    \caption{ZBV: Execution pipeline with APF.}
    \label{fig:zbv_apf}
    \vspace{1.6em}
\end{subfigure}
\begin{subfigure}[t]{\linewidth}
    \centering
    \includegraphics[height=4cm]{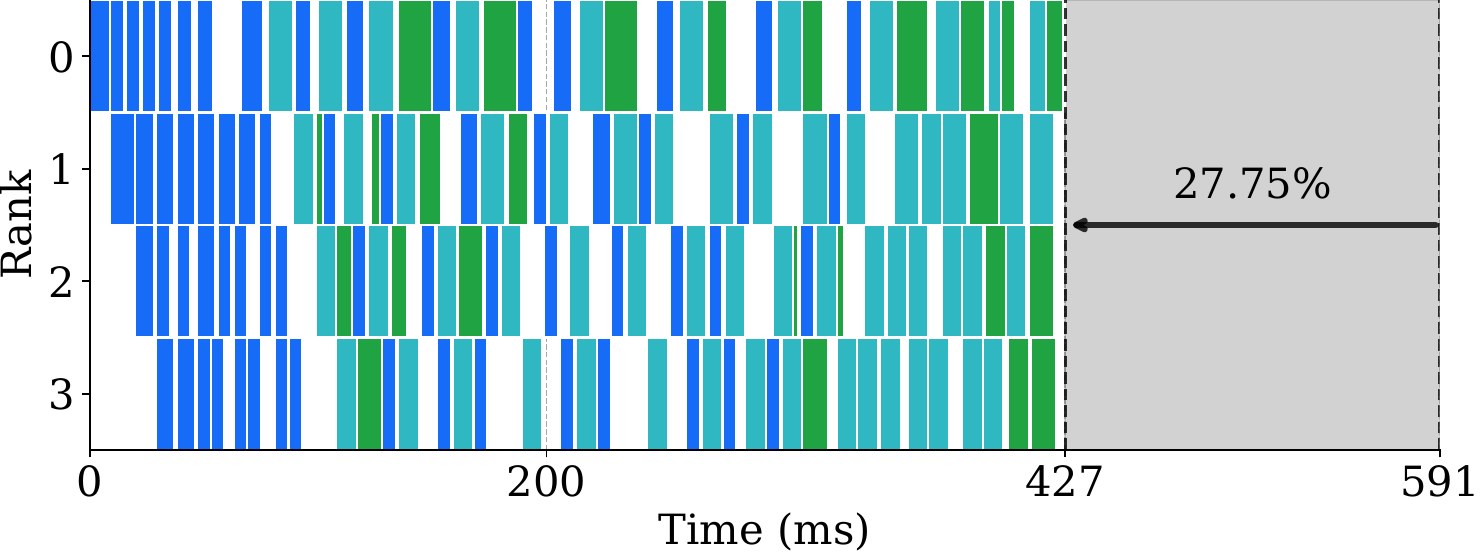}
    \caption{ZBV: Execution pipeline with \TimelyFreeze{}.}
    \label{fig:zbv_ours}
    \vspace{1.6em}
\end{subfigure}
    \caption{ZBV schedules across four freezing methods.}
    \label{fig:zbv_schedules_comparison}
\end{figure}

\clearpage
\subsection{6-GPU Pipeline Schedules}

Figures~\ref{fig:gpipe_schedules_comparison_6gpus} and \ref{fig:1f1b_schedules_comparison_6gpus} present example pipeline schedules for training the Llama-3.2-1B model using the GPipe and 1F1B schedules with six microbatches on six A6000 GPUs.
All other settings are identical to those used to obtain the results reported in \autoref{tab:llama1b}.

Compared to the baseline without freezing, both APF and TimelyFreeze reduce the batch execution time; however, TimelyFreeze consistently achieves larger gains.
In this setting, TimelyFreeze outperforms APF by up to 10 percentage points, which is notably higher than the 6--7 percentage point improvements observed in Figures~\ref{fig:gpipe_schedules_comparison} and \ref{fig:1f1b_schedules_comparison}.
This result indicates that TimelyFreeze becomes increasingly effective as the degree of pipeline parallelism increases, where pipeline-level optimization plays a more critical role.
\\

\begin{figure}[H]
\centering
\begin{subfigure}[t]{\linewidth}
    \centering
    \includegraphics[height=4cm]{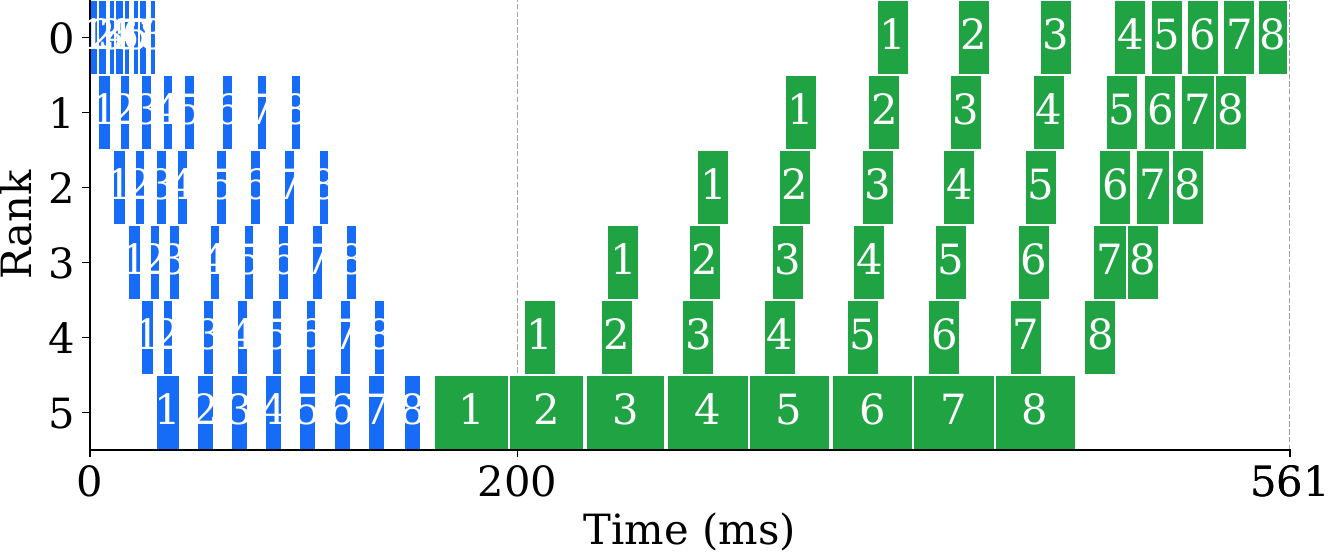}
    \caption{GPipe: Execution pipeline without freezing.}
    \label{fig:gpipe_nofreeze_6gpus}
    \vspace{1.6em}
\end{subfigure}
\begin{subfigure}[t]{\linewidth}
    \centering
    \includegraphics[height=4cm]{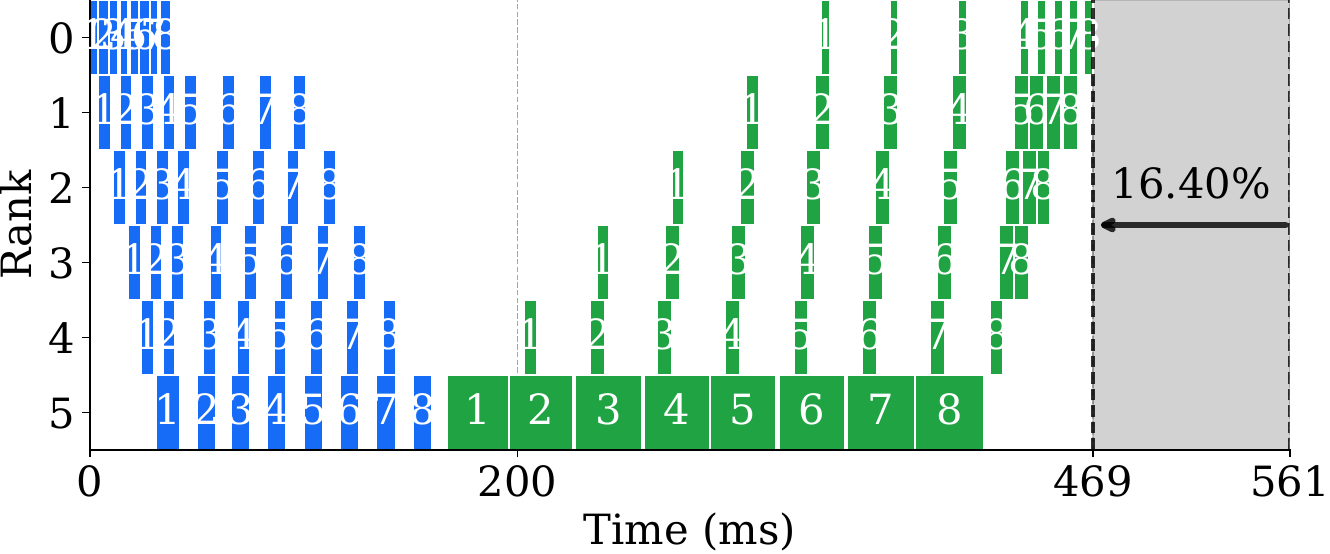}
    \caption{GPipe: Execution pipeline with APF.}
    \label{fig:gpipe_apf_6gpus}
    \vspace{1.6em}
\end{subfigure}
\begin{subfigure}[t]{\linewidth}
    \centering
    \includegraphics[height=4cm]{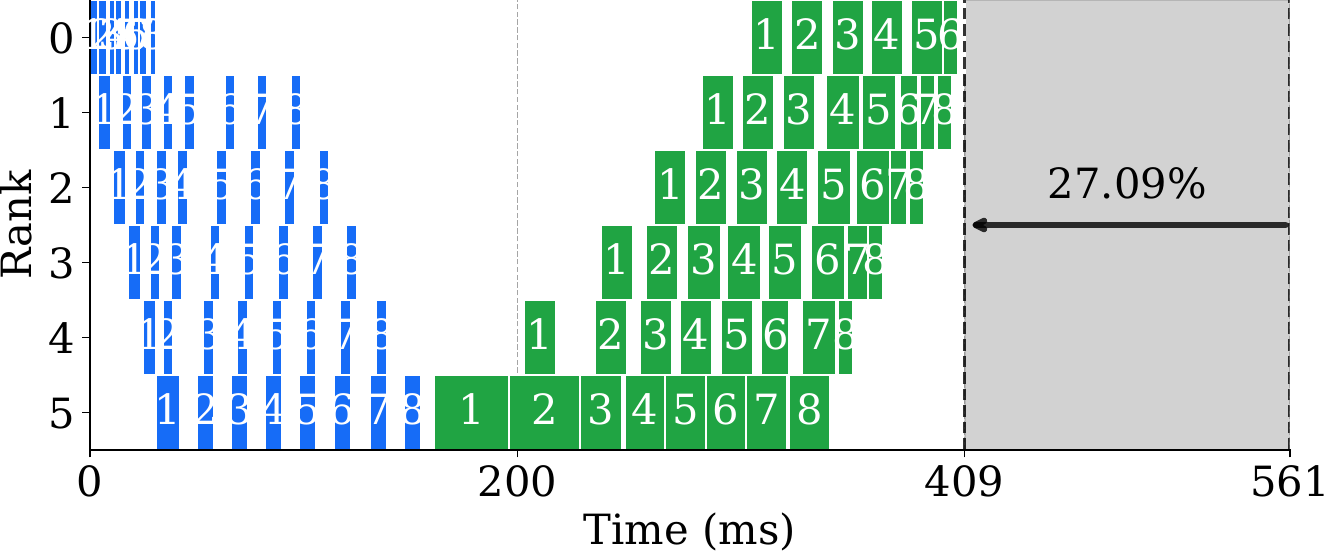}
    \caption{GPipe: Execution pipeline with \TimelyFreeze{}.}
    \label{fig:gpipe_ours_6gpus}
    \vspace{1.6em}
\end{subfigure}
    \caption{GPipe schedules across four freezing methods.}
    \label{fig:gpipe_schedules_comparison_6gpus}
\end{figure}

\begin{figure}[H]
\centering
\begin{subfigure}[t]{\linewidth}
    \centering
    \includegraphics[height=3.5cm]{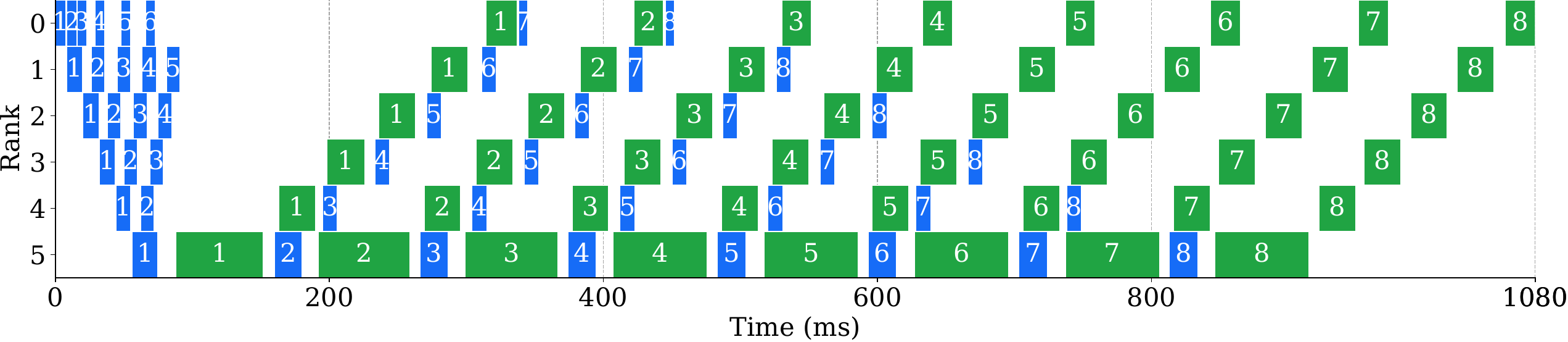}
    \caption{1F1B: Execution pipeline without freezing.}
    \label{fig:1f1b_nofreeze_6gpus}
    \vspace{1.6em}
\end{subfigure}
\begin{subfigure}[t]{\linewidth}
    \centering
    \includegraphics[height=3.5cm]{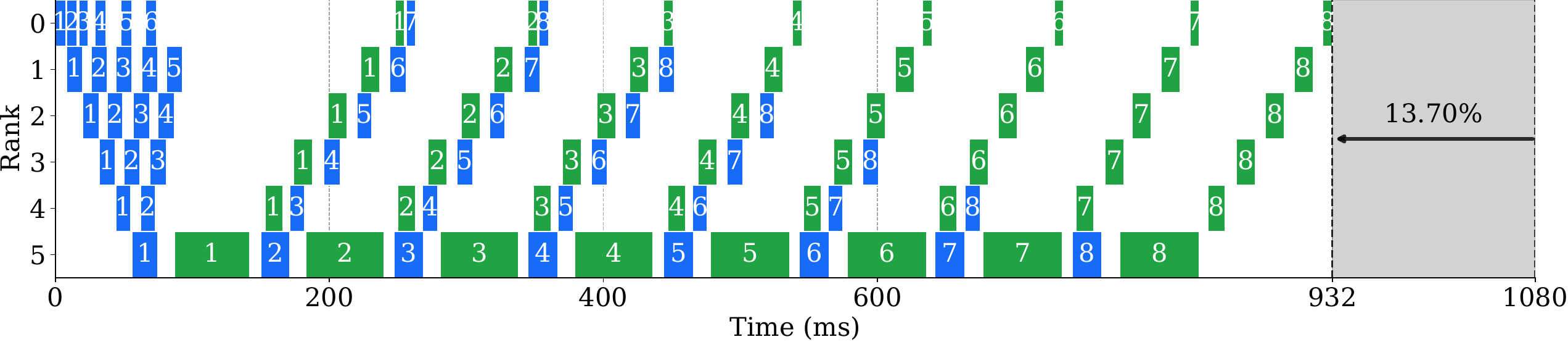}
    \caption{1F1B: Execution pipeline with APF.}
    \label{fig:1f1b_apf_6gpus}
    \vspace{1.6em}
\end{subfigure}
\begin{subfigure}[t]{\linewidth}
    \centering
    \includegraphics[height=3.5cm]{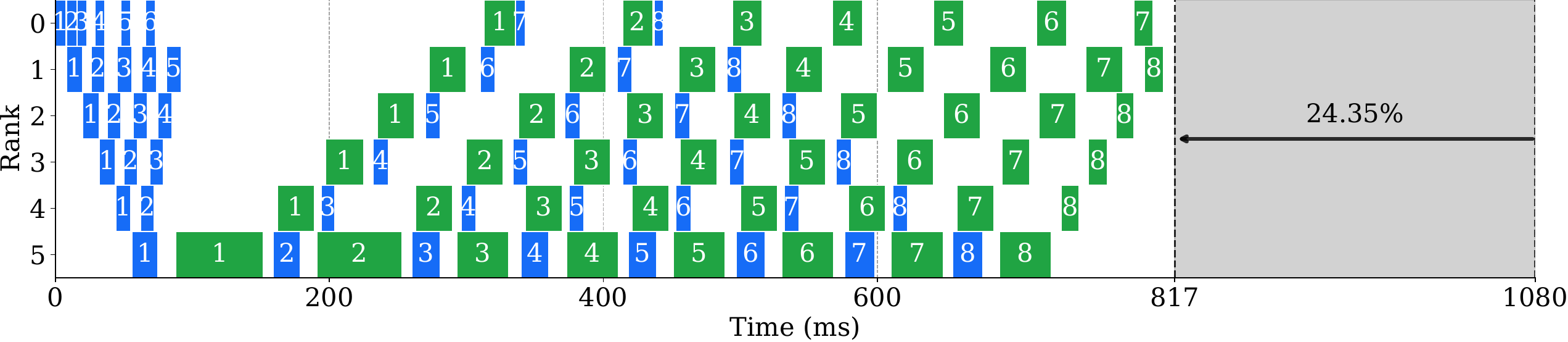}
    \caption{1F1B: Execution pipeline with \TimelyFreeze{}.}
    \label{fig:1f1b_ours_6gpus}
    \vspace{1.6em}
\end{subfigure}
    \caption{1F1B schedules across freezing methods.}
    \label{fig:1f1b_schedules_comparison_6gpus}
\end{figure}

\clearpage
\subsection{8-GPU Pipeline Schedules}
\begin{figure}[H]
\centering
\begin{subfigure}[t]{\linewidth}
    \centering
    \includegraphics[height=4cm]{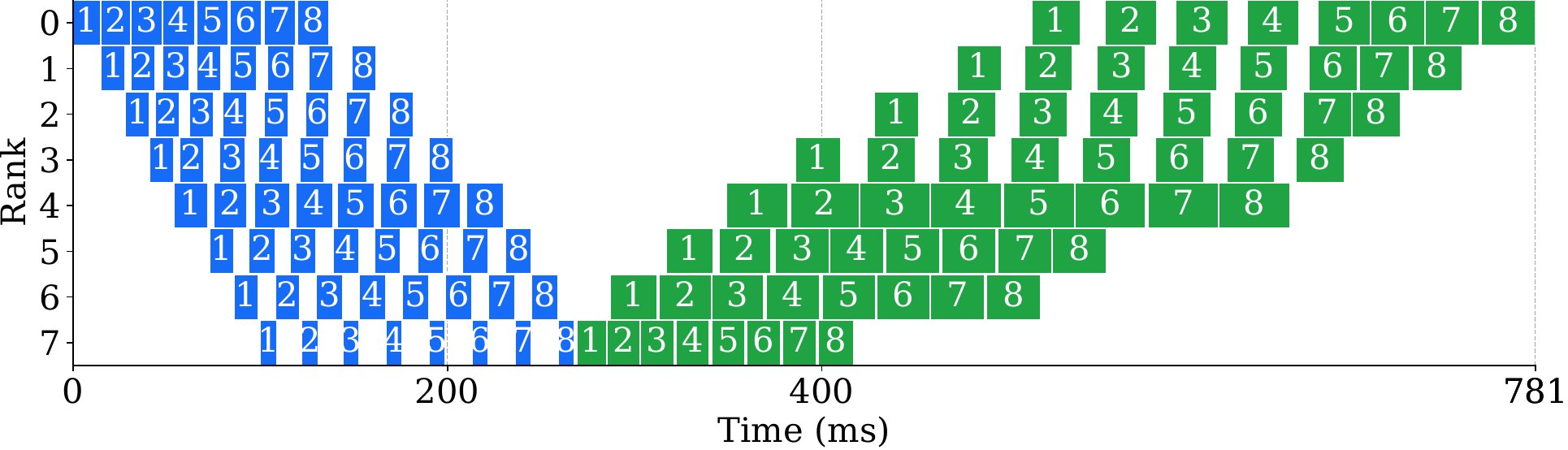}
    \caption{GPipe: Execution pipeline without freezing.}
    \label{fig:gpipe_nofreeze_8gpus}
    \vspace{1.6em}
\end{subfigure}
\begin{subfigure}[t]{\linewidth}
    \centering
    \includegraphics[height=4cm]{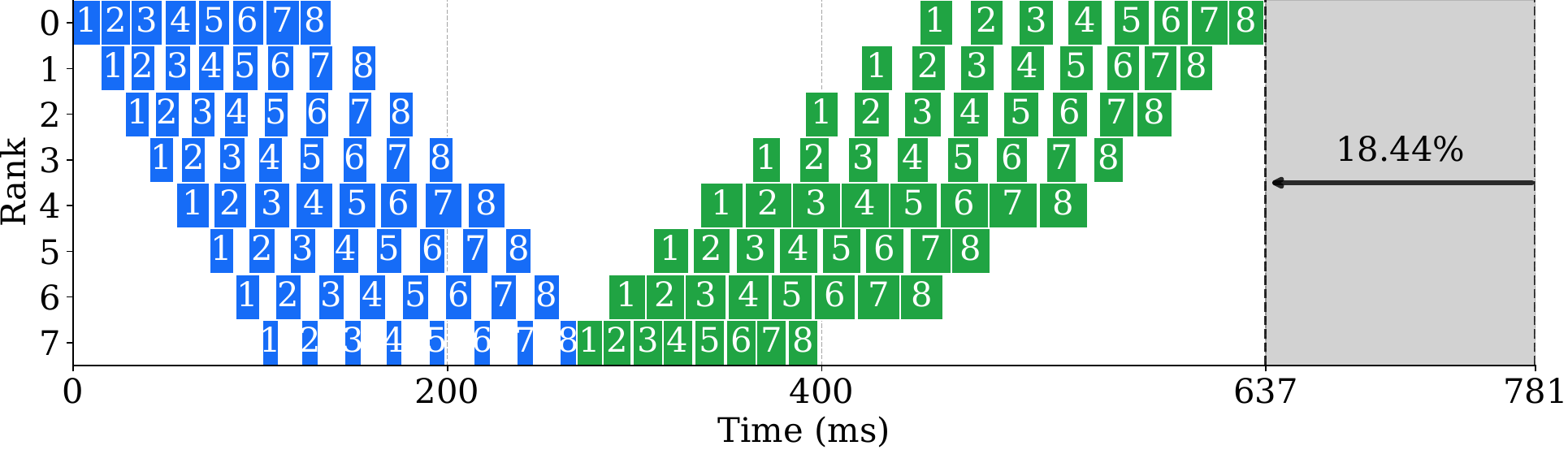}
    \caption{GPipe: Execution pipeline with APF.}
    \label{fig:gpipe_apf_8gpus}
    \vspace{1.6em}
\end{subfigure}
\begin{subfigure}[t]{\linewidth}
    \centering
    \includegraphics[height=4cm]{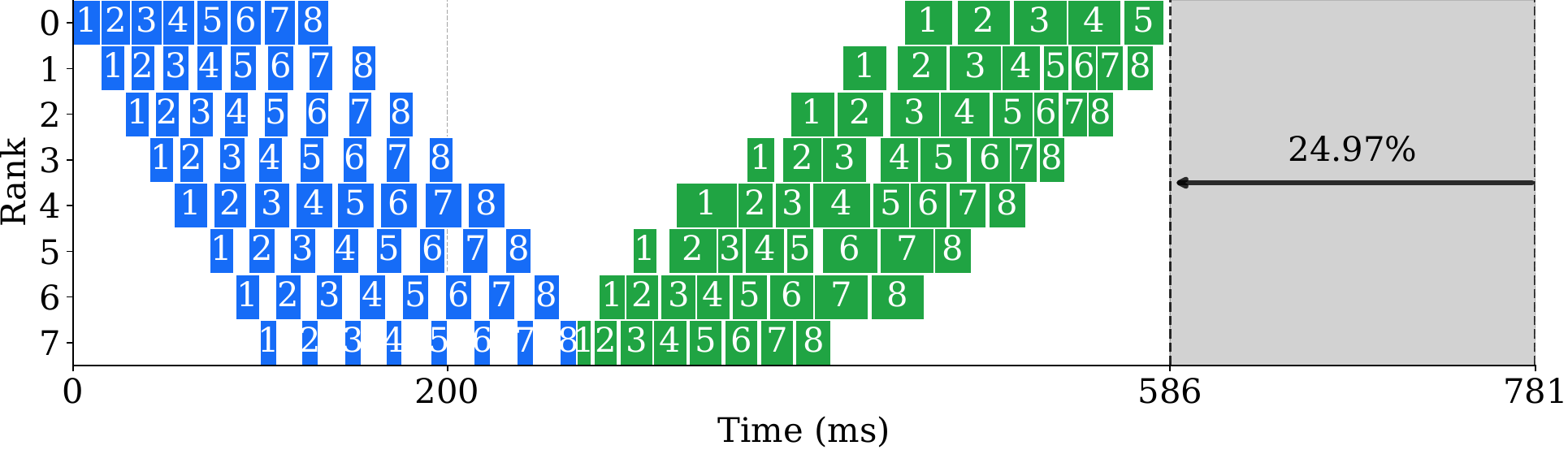}
    \caption{GPipe: Execution pipeline with \TimelyFreeze{}.}
    \label{fig:gpipe_ours_8gpus}
    \vspace{1.6em}
\end{subfigure}
    \caption{GPipe schedules across freezing methods.}
    \label{fig:gpipe_schedules_comparison_8gpus}
\end{figure}


\clearpage
\section{Vision Model Results}
TimelyFreeze also demonstrates strong generalization to image classification models,
including ViT-L/32 and ConvNeXt-V2-L, which differ substantially from transformer LLMs in
layer structure, operator heterogeneity, and training dynamics.
Tables \ref{tab:convnext_partitioning} and \ref{tab:vit_columnwise} summarize the results.

\label{appendix:vision_models}
\subsection{ConvNeXt-V2-L Result Table}
\label{appendix:convnext}

\vspace{3mm}

\begin{table*}[th]
\renewcommand{\arraystretch}{1.15}
\centering
\caption{
Effect of freezing methods across partitioning heuristics and pipeline schedules.
The best and second-best values among the three freezing methods (APF, AutoFreeze, and TimelyFreeze) are highlighted in bold and underline, respectively.
}
\label{tab:convnext_partitioning}

\resizebox{0.95\textwidth}{!}{
\begin{tabular}{cc|ccc|ccc}
\toprule
\multirow{2}{*}{\makecell{Partitioning \\ Method}} &
\multirow{2}{*}{\makecell{Freeze \\ Method}} &
\multicolumn{3}{c|}{\textbf{GPipe}} &
\multicolumn{3}{c}{\textbf{1F1B}} \\
\cline{3-8}
& &
\makecell{Top1 Acc. ($\Delta$)$\uparrow$} &
\makecell{Train Time ($\Delta$)$\downarrow$} &
Freeze Ratio &
\makecell{Top1 Acc. ($\Delta$)$\uparrow$} &
\makecell{Train Time ($\Delta$)$\downarrow$} &
Freeze Ratio \\
\midrule

\multirow{4}{*}{Memory}
& No Freezing
& \aaccsign{87.00}{+0.00} & \ttimesign{10507}{0.00} & 0.00
& \aaccsign{91.57}{+0.00} & \ttimesign{25212}{0.00} & 0.00 \\
\graycline{2-8}
& APF
& \aaccsign{84.50}{-2.50} & \ttimesign{\textbf{8812}}{16.13} & 36.31
& \aaccsign{90.38}{-1.19} & \ttimesign{\textbf{20784}}{17.56} & 36.92 \\

& AutoFreeze
& \aaccsign{\textbf{86.17}}{-0.83} & \ttimesign{10256}{2.39} & 21.57
& \aaccsign{\textbf{91.59}}{+0.02} & \ttimesign{24383}{3.29} & 23.03 \\

& \cellcolor{gray!15}\TimelyFreeze{}
& \cellcolor{gray!15}\aaccsign{\underline{85.53}}{-1.47}
& \cellcolor{gray!15}\ttimesign{\underline{8985}}{14.49}
& \cellcolor{gray!15}44.27
& \cellcolor{gray!15}\aaccsign{\underline{90.76}}{-0.81}
& \cellcolor{gray!15}\ttimesign{\underline{21687}}{13.98}
& \cellcolor{gray!15}29.12 \\
\midrule

\multirow{4}{*}{Parameter}
& No Freezing
& \aaccsign{86.28}{+0.00} & \ttimesign{15144}{0.00} & 0.00
& \aaccsign{91.59}{+0.00} & \ttimesign{31935}{0.00} & 0.00 \\
\graycline{2-8}
& APF
& \aaccsign{85.13}{-1.16} & \ttimesign{\textbf{12224}}{19.28} & 38.07
& \aaccsign{90.60}{-0.99} & \ttimesign{\underline{25397}}{20.47} & 36.66 \\

& AutoFreeze
& \aaccsign{\textbf{87.03}}{+0.75} & \ttimesign{13202}{12.82} & 10.38
& \aaccsign{\textbf{91.78}}{+0.19} & \ttimesign{27280}{14.58} & 10.40 \\

& \cellcolor{gray!15}\TimelyFreeze{}
& \cellcolor{gray!15}\aaccsign{\underline{86.47}}{+0.19} 
& \cellcolor{gray!15}\ttimesign{\underline{11378}}{24.87}
& \cellcolor{gray!15}21.17
& \cellcolor{gray!15}\aaccsign{\underline{91.04}}{-0.55} 
& \cellcolor{gray!15}\ttimesign{\textbf{24741}}{22.53}
& \cellcolor{gray!15}31.26 \\
\midrule

\multirow{4}{*}{Time}
& No Freezing
& \aaccsign{86.92}{+0.00} & \ttimesign{10476}{0.00} & 0.00
& \aaccsign{91.67}{+0.00} & \ttimesign{22820}{0.00} & 0.00 \\
\graycline{2-8}
& APF
& \aaccsign{84.80}{-2.13} & \ttimesign{\underline{8624}}{17.68} & 38.34
& \aaccsign{90.38}{-1.30} & \ttimesign{\underline{19220}}{15.78} & 37.10 \\

& AutoFreeze
& \aaccsign{\textbf{86.61}}{-0.31} & \ttimesign{9751}{6.92} & 23.11
& \aaccsign{\textbf{91.58}}{-0.09} & \ttimesign{21558}{5.53} & 24.10 \\

& \cellcolor{gray!15}\TimelyFreeze{}
& \cellcolor{gray!15}\aaccsign{\underline{85.80}}{-1.13} 
& \cellcolor{gray!15}\ttimesign{\textbf{8430}}{19.53}
& \cellcolor{gray!15}44.00
& \cellcolor{gray!15}\aaccsign{\underline{91.28}}{-0.40}
& \cellcolor{gray!15}\ttimesign{\textbf{18671}}{18.18}
& \cellcolor{gray!15}46.01 \\
\bottomrule
\end{tabular}
}
\end{table*}

We evaluate TimelyFreeze under three widely used partitioning heuristics:
\emph{memory-based}, \emph{parameter-based}, and \emph{time-based}.  
Memory-based partitioning balances peak activation and parameter memory across devices and is commonly used when training near memory limits, where feasibility and OOM avoidance dominate. Parameter-based partitioning is the simplest and most widely adopted baseline: it requires no profiling and splits layers by parameter counts as a coarse proxy for compute and memory load. Time-based partitioning equalizes per-stage latency using backward (or forward–backward) timing profiles, and is typically preferred when wall-clock throughput is the primary concern.

\subsection{ViT-L/32 Result Table}
\label{appendix:vit}
\vspace{3mm}

\begin{table*}[th]
\renewcommand{\arraystretch}{1.15}
\centering
\caption{
Finetuning results on ViT-L/32 with ImageNet-1K under different pipeline schedules.
}
\label{tab:vit_columnwise}

\resizebox{0.9\textwidth}{!}{
\begin{tabular}{c|ccc|ccc}
\toprule
\multirow{2}{*}{\makecell{Freeze \\ Method}} &
\multicolumn{3}{c|}{\textbf{GPipe}} &
\multicolumn{3}{c}{\textbf{1F1B}} \\
\cline{2-7}
&
\makecell{Top1 Acc. ($\Delta$)$\uparrow$} &
\makecell{Train Time ($\Delta$)$\downarrow$} &
Freeze Ratio &
\makecell{Top1 Acc. ($\Delta$)$\uparrow$} &
\makecell{Train Time ($\Delta$)$\downarrow$} &
Freeze Ratio \\
\midrule

No Freezing
& \aaccsign{75.46}{+0.00} 
& \ttimesign{14339}{0.00}
& 0.00
& \aaccsign{75.46}{+0.00} 
& \ttimesign{13374}{0.00}
& 0.00 \\
\graymidrule
APF
& \aaccsign{42.07}{-33.40}
& \ttimesign{\underline{12082}}{15.74}
& 39.15
& \aaccsign{42.07}{-33.40}
& \ttimesign{\underline{11305}}{15.47}
& 39.71 \\

AutoFreeze
& \aaccsign{\underline{74.98}}{-0.48}
& \ttimesign{13050}{8.99}
& 44.04
& \aaccsign{\underline{74.98}}{-0.48}
& \ttimesign{11991}{10.34}
& 44.04 \\

\cellcolor{gray!15}\TimelyFreeze{}
& \cellcolor{gray!15}\aaccsign{\textbf{75.03}}{-0.44}
& \cellcolor{gray!15}\ttimesign{\textbf{11195}}{21.93}
& \cellcolor{gray!15}50.00
& \cellcolor{gray!15}\aaccsign{\textbf{75.04}}{-0.42}
& \cellcolor{gray!15}\ttimesign{\textbf{10266}}{23.24}
& \cellcolor{gray!15}43.39 \\

\bottomrule
\end{tabular}
}
\end{table*}

\clearpage
\section{Variation of Per-Parameter Freeze Ratios across Methods}
\label{appendix:frozen_params_histogram}
Figure~\ref{fig:frozen_params_histogram_timelyfreeze_all} compares the
distributions of per-parameter freeze ratios across methods.
\TimelyFreeze{} exhibits a nearly uniform distribution, reflecting its
schedule-driven and parameter-agnostic design.
In contrast, APF produces highly skewed freeze ratios with large
parameter-wise variance, while AutoFreeze shows pronounced layer-wise
imbalances between early and late layers.
The hybrid variants (\TimelyAPF{} and \TimelyAuto{}) follow the trends of
their metric-based counterparts but apply freezing more moderately,
regularized by the stage-wise budgets computed by \TimelyFreeze{}.

\begin{figure}[!th]
    \centering
    \begin{subfigure}[t]{\linewidth}
        \centering
        \includegraphics[width=\linewidth, trim={2.9cm 0.3cm 2.9cm 3.9cm}, clip]{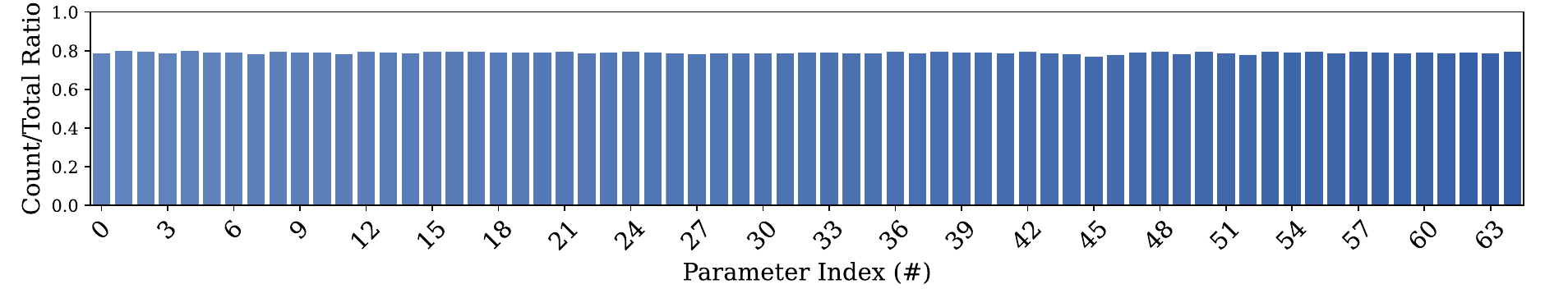}
        \caption{\TimelyFreeze{}.}
    \label{fig:frozen_params_histogram_timelyfreeze}
    \end{subfigure}
    \vspace{0.em}
    \begin{subfigure}[t]{\linewidth}
        \centering
        \includegraphics[width=\linewidth, trim={2.9cm 0.3cm 2.9cm 3.9cm}, clip]{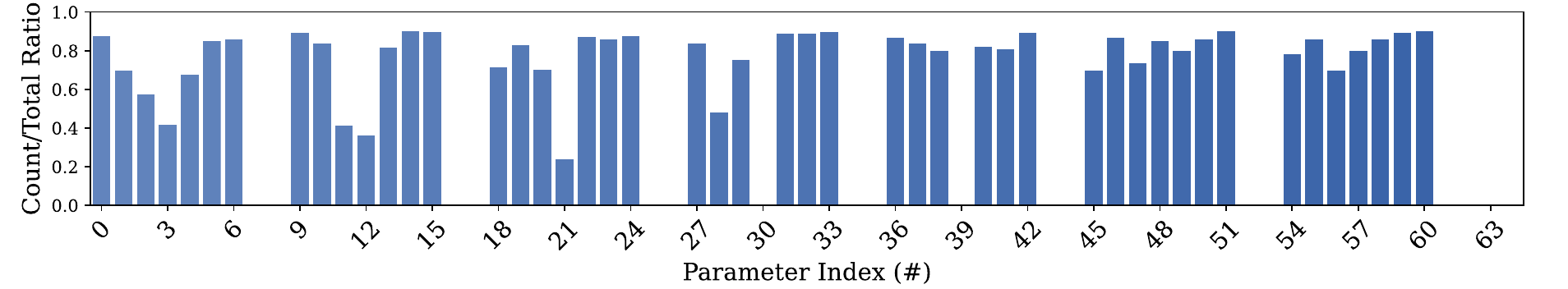}
        \caption{APF.}
        \label{fig:frozen_params_histogram_apf}
    \end{subfigure}
    \vspace{0.em}
    \begin{subfigure}[t]{\linewidth}
        \centering
        \includegraphics[width=\linewidth, trim={2.9cm 0.3cm 2.9cm 3.9cm}, clip]{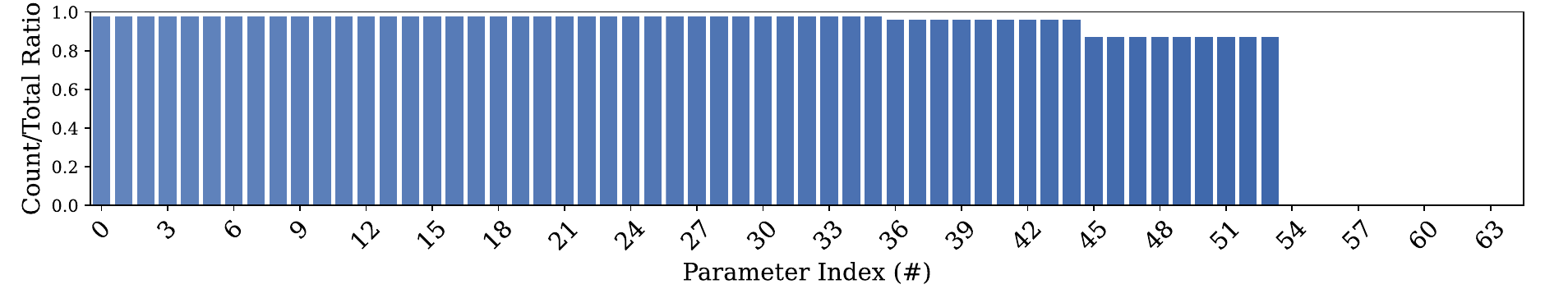}
        \caption{AutoFreeze.}
        \label{fig:frozen_params_histogram_autofreeze}
    \end{subfigure}
    \vspace{0.em}
    \begin{subfigure}[t]{\linewidth}
        \centering
        \includegraphics[width=\linewidth, trim={2.9cm 0.3cm 2.9cm 3.9cm}, clip]{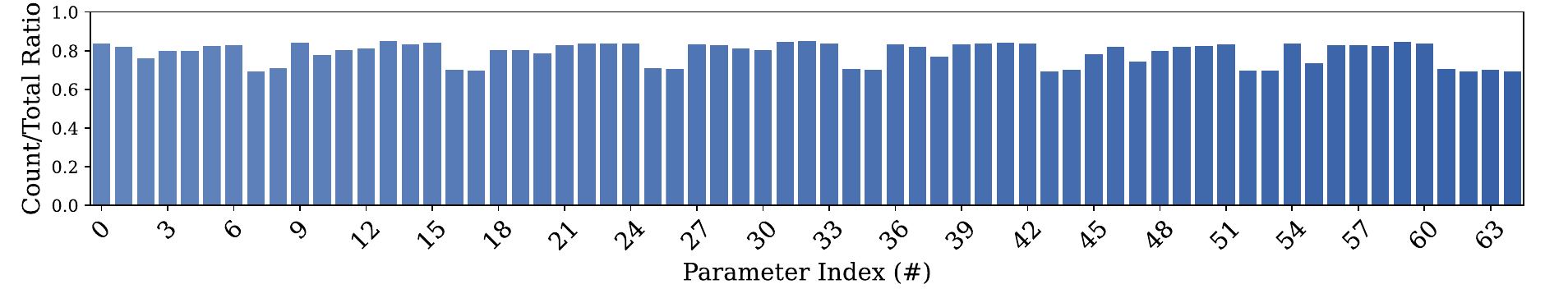}
        \caption{\TimelyAPF{}.}
        \label{fig:frozen_params_histogram_timelyapf}
    \end{subfigure}
    \vspace{0.em}
    \begin{subfigure}[t]{\linewidth}
        \centering
        \includegraphics[width=\linewidth, trim={2.9cm 0.3cm 2.9cm 3.9cm}, clip]{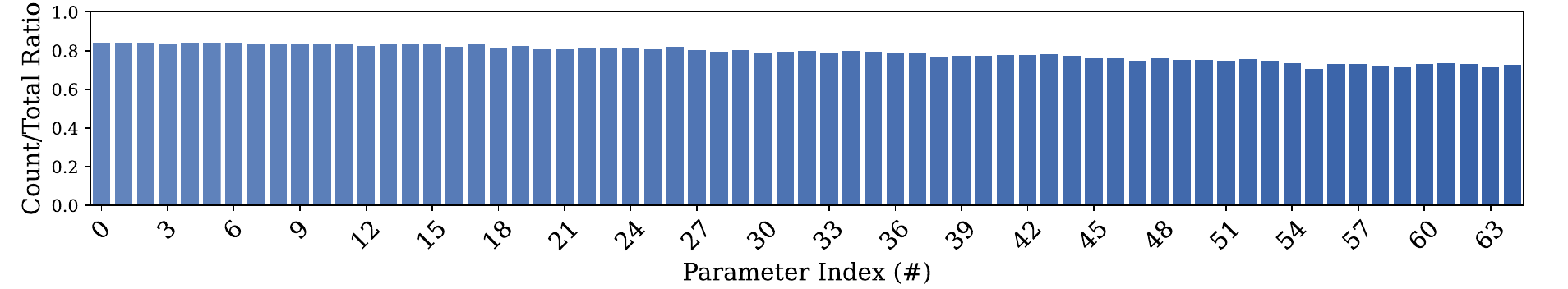}
        \caption{\TimelyAuto{}.}
        \label{fig:frozen_params_histogram_timelyauto}
    \end{subfigure}
    \caption{Frozen ratio histogram per parameter in Rank 3 with different methods.}
    \label{fig:frozen_params_histogram_timelyfreeze_all}
\end{figure}

\clearpage
\section{Effect of Freeze Ratio on Backward Time}
\label{appendix:afr_plots}

\begin{figure*}[th]
\centering

\begin{tabular}{@{}cc@{}}
\begin{subfigure}[t]{0.45\textwidth}
    \centering
    \includegraphics[
        width=\linewidth,
        trim=0 50pt 0 1pt,
        clip
    ]{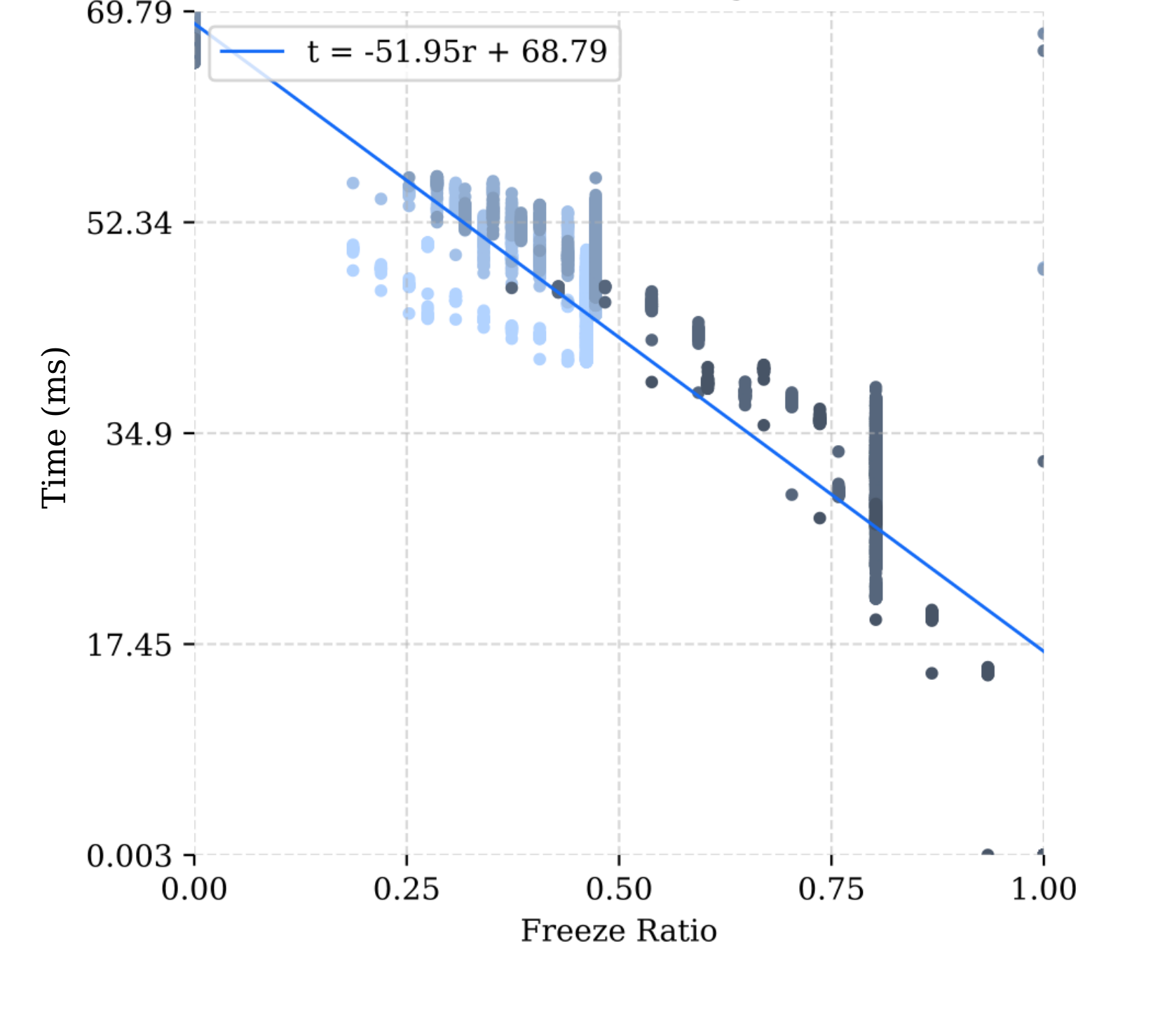}
    \subcaption{Stage 0 (GPU Rank 0).}
    \label{fig:bwd_stage0}
\end{subfigure}
&
\begin{subfigure}[t]{0.45\textwidth}
    \centering
    \includegraphics[
        width=\linewidth,
        trim=0 50pt 0 1pt,
        clip
    ]{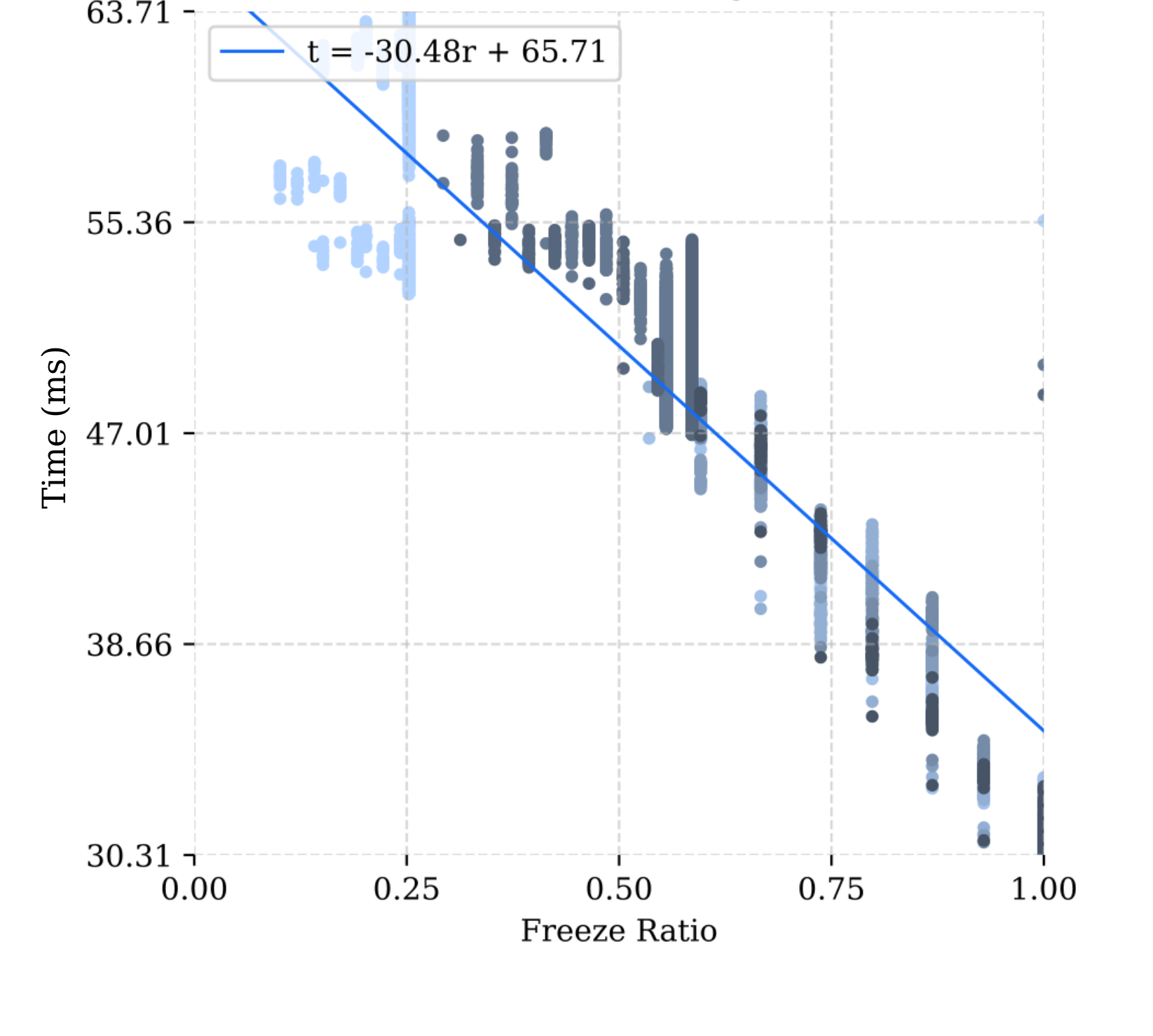}
    \subcaption{Stage 1 (GPU Rank 1).}
    \label{fig:bwd_stage1}
\end{subfigure}
\\[1.3cm]

\begin{subfigure}[t]{0.45\textwidth}
    \centering
    \includegraphics[
        width=\linewidth,
        trim=0 50pt 0 1pt,
        clip
    ]{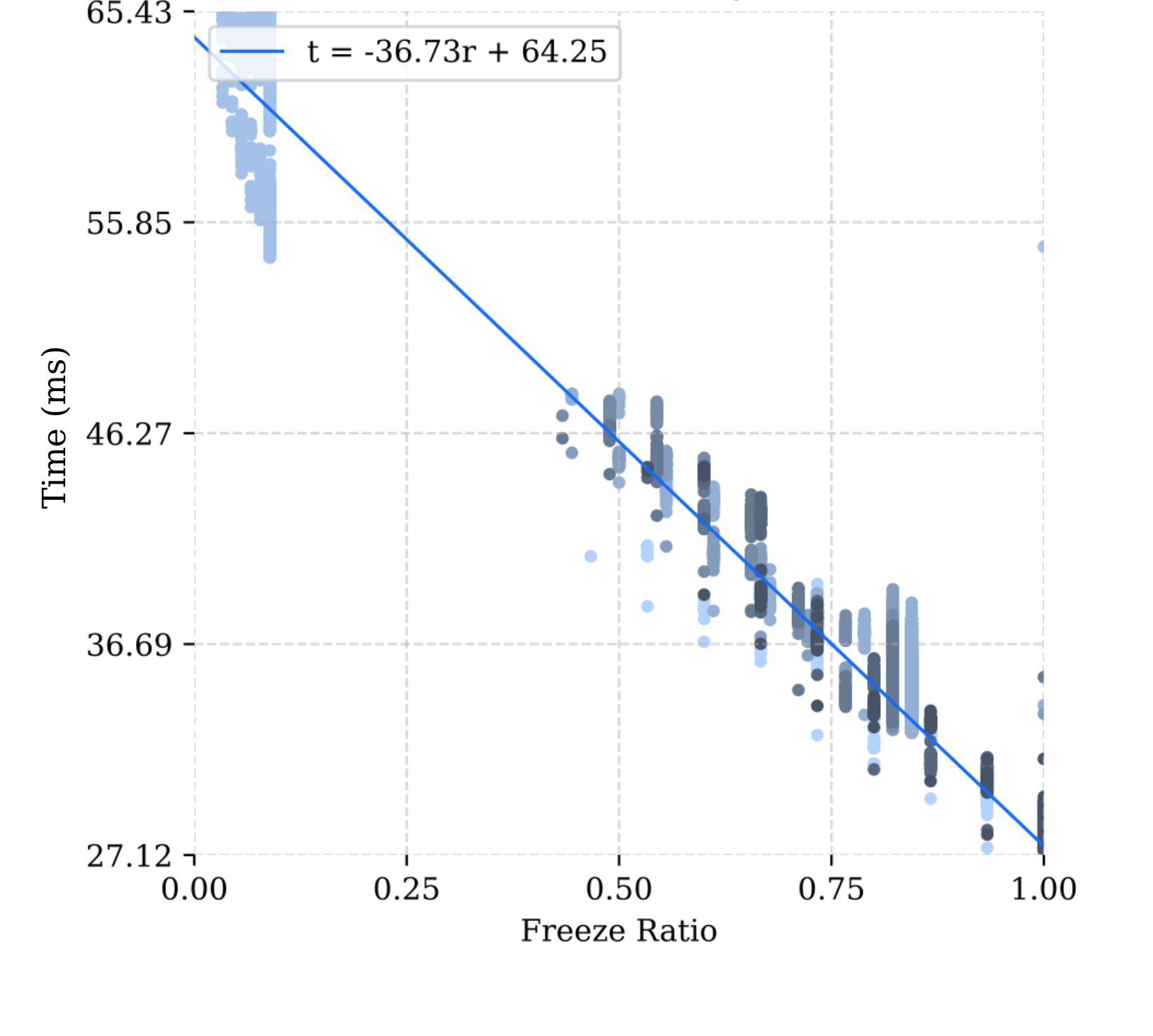}
    \subcaption{Stage 2 (GPU Rank 2).}
    \label{fig:bwd_stage2}
\end{subfigure}
&
\begin{subfigure}[t]{0.45\textwidth}
    \centering
    \includegraphics[
        width=\linewidth,
        trim=0 50pt 0 1pt,
        clip
    ]{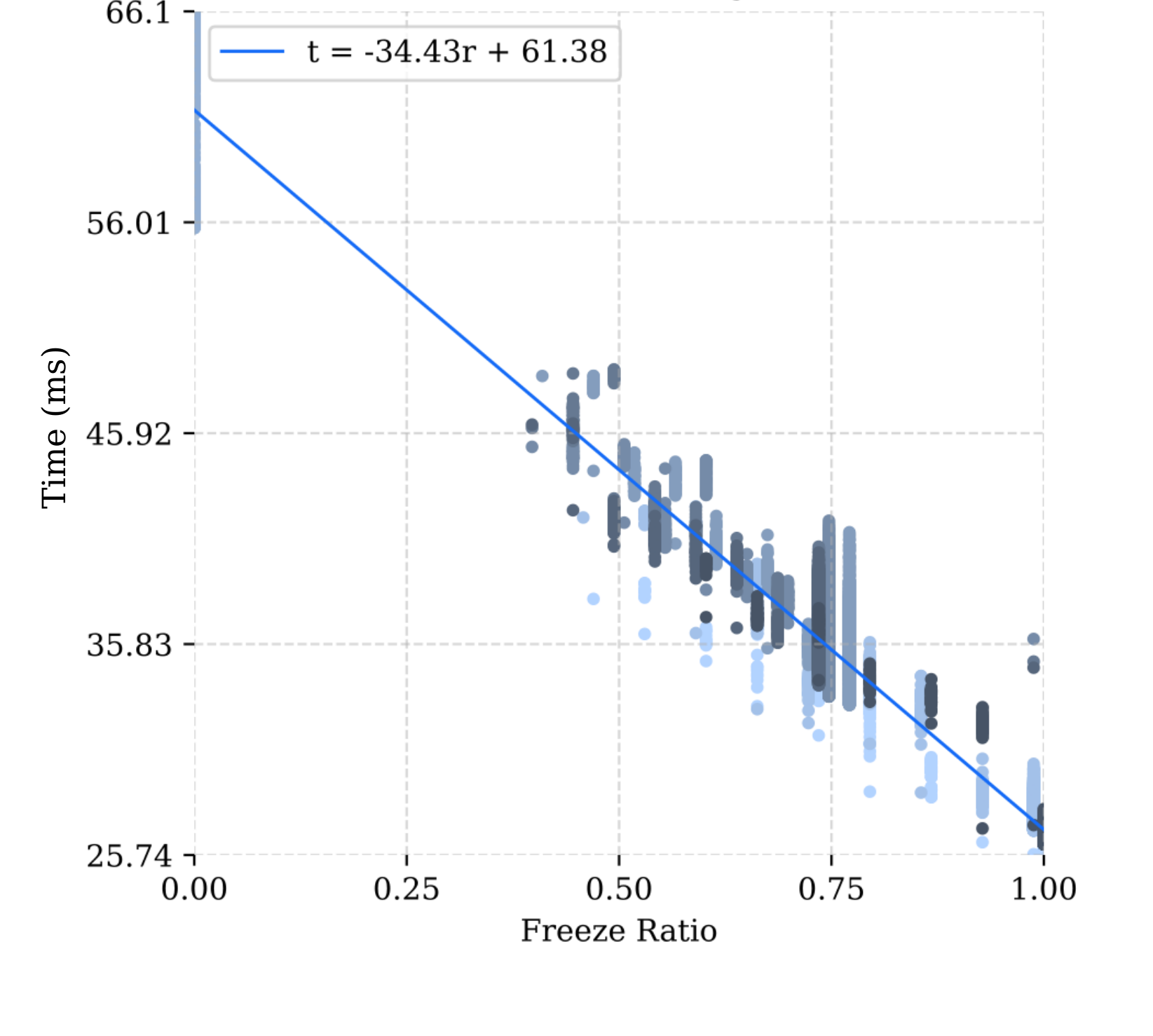}
    \subcaption{Stage 3 (GPU Rank 3).}
    \label{fig:bwd_stage3}
\end{subfigure}
\end{tabular}

\caption{
Backward computation time per effective freeze ratio across pipeline stages.
Each subplot corresponds to a model stage (GPU rank 0--3).
As the freeze ratio increases, the backward time decreases proportionally,
verifying that TimelyFreeze dynamically modulates per-stage workload as designed.
Blue markers indicate monitored microbatches, where color intensity reflects
the microbatch index (lighter for earlier microbatches and darker for later ones).
}
\label{fig:bwd_time_per_freeze_ratio}

\end{figure*}


\end{document}